\def\orcid#1{\smash{\href{http://orcid.org/#1}{\protect\raisebox{-1.25pt}{\protect\includegraphics{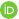}}}}}
\newcommand{\CHANGED}[1]{\textcolor{black}{#1}}
\newenvironment{changed}{\color{black}}{\normalcolor}
  \tikzset{>=to}
\newcommand{\egt}{e.\,g.\xspace}
\newcommand{\iet}{i.\,e.\xspace}
\newcommand{\ifft}{iff\xspace}
\newcommand{\stf}{\,.\,}
\newcommand{\stt}{s.\,t.\xspace}
\newcommand{\wrtt}{w.\,r.\,t.\xspace}
\newcommand{\note}{\skipbeforenote\noindent\emph{Note:}\skipafternotetitle}
\newcommand{\skipafternotetitle}[0]{\enspace\,}
\newcommand{\skipafterinvname}[0]{\enspace}
\newcommand{\skipafterrulename}[0]{\enspace}
\newcommand{\skipafterstepnumber}[0]{\enspace}
\newcommand{\skipbeforenote}[0]{\bigskip}
\newcommand{\skipbetweeninvs}[0]{\smallskip}
\newcommand{\skipbetweenproofs}[0]{\vspace{20pt}}
\newcommand{\skipbetweenrules}[0]{\medskip}
\newcommand{\skipbetweensteps}[0]{\medskip}
\newcommand{\vspaceafterrulename}[0]{\smallskip}
\newcommand{\vspacebetweeninvs}[0]{7pt}
\newcommand{\vspacebetweenrules}[0]{7pt}
\newcommand{\vspaceinrule}[0]{2pt}
\newcommand{\vspacemidrule}[0]{0.5pt}
\newcommand{\vspaceaftermidrule}[0]{-15pt}
\newcommand{\assign}[0]{\sigma\xspace}
\newcommand{\assumed}[1]{{#1}^a}
\newcommand{\comp}[1]{\widebar{#1}}
\newcommand{\compell}[0]{\widebar{\ell}}
\newcommand{\decided}[1]{{#1}^d}
\newcommand{\decs}[0]{\functionname{decs}\xspace}
\newcommand{\decsf}[1]{\decs_{#1}\xspace}
\newcommand{\defas}{\ensuremath{\stackrel{\text{\tiny def}}{=}}\xspace}
\newcommand{\enumipname}[0]{EnumIrred}
\newcommand{\encdec}[0]{1}
\newcommand{\encprop}[0]{0}
\newcommand{\encunass}[0]{2}
\newcommand{\filter}[2]{{#1}_{#2}}
\newcommand{\functionname}[1]{\textsf{#1}}
\newcommand{\implup}[0]{\vdash_1\xspace}
\newcommand{\istar}[0]{I^{\star}}
\newcommand{\leadstoname}[1]{\leadsto_{\scalebox{0.65}{\functionname{#1}}}}  
\newcommand{\length}[1]{\lvert{#1}\rvert}
\renewcommand{\leq}[0]{\leqslant}
\newcommand{\level}[0]{\delta\xspace}
\newcommand{\levelupd}[1]{[\,#1\,]}
\newcommand{\lexlower}[0]{>_{\scalebox{0.65}{\functionname{lex}}}}
\newcommand{\maximum}[0]{\functionname{max}\xspace}
\newcommand{\mods}[0]{\functionname{models}\xspace}
\newcommand{\negated}[1]{\neg{#1}}
\newcommand{\project}[2]{\pi(#1, #2)}
\newcommand{\propagated}[2]{{#1}^{#2}}
\newcommand{\redundant}[1]{{#1}^r}
\newcommand{\relationenumip}[0]{\succ_{\scalebox{0.65}{\functionname{\enumipname}}}}
\newcommand{\residual}[2]{{#1}|{}_{#2}}
\newcommand{\state}[5]{(#1, #2, #3, #4, #5)}
\newcommand{\subvars}[2]{#1 - #2}
\newcommand{\succlex}[0]{\succ_{lex}}
\newcommand{\transenumirred}[0]{\leadsto_{\scalebox{0.65}{\functionname{\enumipname}}}}   
\newcommand{\tseitin}[0]{\functionname{Tseitin}\xspace}
\newcommand{\unassvars}[2]{#2 - #1}
\newcommand{\units}[0]{\functionname{units}\xspace}
\newcommand{\var}[0]{{V}\xspace}
\newcommand{\mainalgirredname}[0]{\functionname{EnumerateIrredundant}\xspace}
\newcommand{\mainalgredname}[0]{\functionname{EnumerateRedundant}\xspace}
\newcommand{\conflanaalgirredname}[0]{\functionname{AnalyzeConflict}\xspace}
\newcommand{\decidealgirredname}[0]{\functionname{Decide}\xspace}
\newcommand{\unitpropalgirredname}[0]{\functionname{PropagateUnits}\xspace}
\newcommand{\backtrackalgname}[0]{\functionname{Backtrack}\xspace}
\newcommand{\Bool}[0]{\mathbb{B}}
\newcommand{\Nat}[0]{\mathbb{N}}
\newcommand{\false}[0]{0}
\newcommand{\true}[0]{1}
\newcommand{\emptytrail}[0]{{\varepsilon}\xspace}
\newcommand{\emptylist}[0]{{\varepsilon}\xspace}
\newcommand{\dpllalgname}[0]{\functionname{DPLL}\xspace}
\newcommand{\cdclalgname}[0]{\functionname{CDCL}\xspace}
\newcommand{\SAT}[0]{\functionname{SAT}\xspace}
\newcommand{\SATone}[0]{the first \nolinebreak SAT solver\xspace}
\newcommand{\SATtwo}[0]{\(S\kern-0.12em AT\)\xspace}
\newcommand{\SATthree}[0]{\(S\kern-0.12em AT\)\xspace}
\newtheorem{example}{Example}
\newtheorem{proposition}{Proposition}
\newtheorem{theorem}{Theorem}
\def\L#1{\hspace{1cm}\raise .15ex\hbox{\scriptsize #1}&}
\def\LHS#1{\hbox to .80em{#1\hfill}}
\def\C#1{\mbox{\bf//}\ \small\relax#1}
\def\S#1{\mbox{\functionname{#1}}}
\def\K#1{\textbf{#1}}
\def\N{\\[.1ex]}
\def\NB{\\[.1ex]\color{blue}}
\def\NH{\\[.4ex]}
\def\NS{\\[2.5ex]}
\def\I{\hspace{1em}}
\def\B{\color{blue}}
\def\mdef{:=}
\newcommand*\rel@kern[1]{\kern#1\dimexpr\macc@kerna}
\newcommand*\widebar[1]{%
  \begingroup
  \def\mathaccent##1##2{%
    \rel@kern{0.8}%
    \overline{\rel@kern{-0.8}\macc@nucleus\rel@kern{0.2}}%
    \rel@kern{-0.2}%
  }%
  \macc@depth\@ne
  \let\math@bgroup\@empty \let\math@egroup\macc@set@skewchar
  \mathsurround\z@ \frozen@everymath{\mathgroup\macc@group\relax}%
  \macc@set@skewchar\relax
  \let\mathaccentV\macc@nested@a
  \macc@nested@a\relax111{#1}%
  \endgroup
}
\newcommand{\EtopName}[0]{\functionname{End\(\true\)}\xspace}
\newcommand{\Etop}[0]{
  \EtopName:
  &
  \(\state{P}{N}{M}{I}{\level}\)
  \,\(\leadstoname{\EtopName}\)
  \(M \vee m\)
  ~\,if\,~
  \(\residual{P}{I} \neq \false\)
  ~and\\[\vspaceinrule]
  &
  \(\subvars{(X \cup Y \cup S)}{I} = \emptyset\)
  ~and~
  \(\var(\decs(I)) \cap X = \emptyset\)
  ~and~
  \(m \defas \project{I}{X}\)
}
\newcommand{\EbotName}[0]{\functionname{End\(\false\)}\xspace}
\newcommand{\Ebot}[0]{
  \EbotName:
  &
  \(\state{P}{N}{M}{I}{\level}\)
  \,\(\leadstoname{\EbotName}\)
  \(M\)
  ~\,if\,~
  exists \(C \in P\)
  ~with~
  \(\residual{C}{I} = \false\)
  and\\[\vspaceinrule]
  &
  \(\level(C) = 0\)
}
\newcommand{\UnitName}[0]{\functionname{Unit}\xspace}
\newcommand{\Unit}[0]{
  \UnitName:
  &
  \(\state{P}{N}{M}{I}{\level}\)
  \,\(\leadstoname{\UnitName}\)
  \(\state{P}{N}{M}{I\propagated{\ell}{C}}{\level\levelupd{\ell \mapsto a}}\)
  ~\,if\,~
  \(\residual{P}{I} \neq \false\)
  ~and\\[\vspaceinrule]
  &
  exists \(C \in P\)
  ~with~
  \(\{\ell\} = \residual{C}{I}\)
  ~and~
  \(a \defas \level(I)\)
}
\newcommand{\BtopIrredName}[0]{\functionname{Back\(\true\)}\xspace}
\newcommand{\BtopIrred}[0]{
  \BtopIrredName:
  &
  \(\state{P}{N}{M}{I}{\level}\)
  \,\(\leadstoname{\BtopIrredName}\)
  \(\state{P \wedge B}{O}{M \vee m}{J\propagated{\ell}{B}}{\level\levelupd{K \mapsto
      \infty}\levelupd{\ell \mapsto b}}\)\\[\vspaceinrule]
  &
  if\,~
  \(\subvars{(X \cup Y \cup S)}{I} = \emptyset\)
  ~and~
  exists \(\istar \leq \project{I}{X \cup Y}\)
  ~with\\[\vspaceinrule]
  &
  \(JK = I\)
  ~such that~
  \(N \wedge \istar \implup \false\)
  ~and~
  \(m \defas \project{\istar}{X}\)
  ~and\\[\vspaceinrule]
  &
  \(B \defas \negated{\decs(m)}\)
  ~and~
  \(b+1 \defas \level(B) = \level(m)\)
  ~and~
  \(\ell \in B\)
  ~and\\[\vspaceinrule]
  &
  \(\residual{\ell}{K} = \false\)
  ~and~
  \(b = \level(B \setminus \{\ell\}) = \level(J)\)
  ~and~
  \(O = \tseitin(N \vee \negated{B})\)
}
\newcommand{\BbotName}[0]{\functionname{Back\(\false\)}\xspace}
\newcommand{\Bbot}[0]{
  \BbotName:
  &
  \(\state{P}{N}{M}{I}{\level}\)
  \,\(\leadstoname{\BbotName}\)
  \(\state{P \wedge \redundant{D}}{N}{M}{J\propagated{\ell}{D}}{\level\levelupd{K \mapsto
      \infty}\levelupd{\ell \mapsto j}}\)\\[\vspaceinrule]
  &
  if\,~
  exists \(C \in P\)
  ~and~
  exists \(D\)
  ~with~
  \(JK = I\)
  ~and~
  \(\residual{C}{I} = \false\)
  ~and\\[\vspaceinrule]
  &
  \(\level(C) = \level(D) > 0\)
  ~such that~
  \(\ell \in D\)
  ~and~
  \(\negated{\ell} \in \decs(I)\)
  ~and\\[\vspaceinrule]
  &
  \(\residual{\negated{\ell}}{K} = \false\)
  ~and~
  \(P \models D\)
  ~and~
  \(j \defas \level(D \setminus \{\ell\}) = \level(J)\)
}
\newcommand{\DecXIrredName}[0]{\functionname{DecX}\xspace}
\newcommand{\DecXIrred}[0]{
  \DecXIrredName:
  &
  \(\state{P}{N}{M}{I}{\level}\)
  \,\(\leadstoname{\DecXIrredName}\)
  \(\state{P}{N}{M}{I\ell}{\level\levelupd{\ell \mapsto d}}\)
  ~\,if~
  \(\residual{P}{I} \neq \false\)
  ~and\\[\vspaceinrule]
  &
  \(\units(\residual{P}{I}) = \emptyset\)
  ~and~
  \(\level(\ell) = \infty\)
  ~and~
  \(d \defas \level(I)+1\)
  ~and~
  \(\var(\ell) \in X\)
}
\newcommand{\DecYSIrredName}[0]{\functionname{DecYS}\xspace}
\newcommand{\DecYSIrred}[0]{
  \DecYSIrredName:
  &
  \(\state{P}{N}{M}{I}{\level}\)
  \,\(\leadstoname{\DecYSIrredName}\)
  \(\state{P}{N}{M}{I\ell}{\level\levelupd{\ell \mapsto d}}\)
  ~\,if\,~
  \(\residual{P}{I} = \false\)
  ~and\\[\vspaceinrule]
  &
  \(\units(\residual{P}{I}) = \emptyset\)
  ~and~
  \(\level(\ell) = \infty\)
  ~and~
  \(d \defas \level(I)+1\)
  ~and\\[\vspaceinrule]
  &
  \(\var(\ell) \in Y \cup S\)
  ~and~
  \(\unassvars{I}{X} = \emptyset\)
}
\newcommand{\BtopRedName}[0]{\functionname{Back\(\true\)red}\xspace}
\newcommand{\BtopRed}[0]{
  \BtopRedName:
  &
  \(\state{P}{N}{M}{I}{\level}\)
  \,\(\leadstoname{\BtopRedName}\)
  \(\state{P}{N}{M \vee m}{J\propagated{\ell}{B}}{\level\levelupd{K \mapsto
      \infty}\levelupd{\ell \mapsto b}}\)\\[\vspaceinrule]
  &
  if\,~
  \(\subvars{(X \cup Y \cup S)}{I} = \emptyset\)
  ~and~
  exists \(\istar \leq \project{I}{X \cup Y}\)
  ~with\\[\vspaceinrule]
  &
  \(JK = I\)
  ~such that~
  \(N \wedge \istar \implup \false\)
  ~and~
  \(m \defas \project{\istar}{X}\)
  ~and\\[\vspaceinrule]
  &
  \(B \defas \negated{\decs(m)}\)
  ~and~
  \(b+1 \defas \level(B) = \level(m)\)
  ~and~
  \(\ell \in B\)
  ~and\\[\vspaceinrule]
  &
  \(\residual{\ell}{K} = \false\)
  ~and~
  \(b = \level(B \setminus \{\ell\}) = \level(J)\)
}
\newcommand{\InvDualPNName}[0]{\functionname{InvDualPN}\xspace}
\newcommand{\InvDualPN}[0]{
  \InvDualPNName:
  &
  \(\exists \,S\,[\,P(X,Y,S) \,]\, \equiv
  \negated{\exists \,T\,[\,N(X,Y,T)\,]\,}\)
}
\newcommand{\InvDecsName}[0]{\functionname{InvDecs}\xspace}
\newcommand{\InvDecs}[0]{
  \InvDecsName:
  &
  \(\forall b \in \{1, \ldots, \level(I)\}\, \exists ! \ell\in\decs(I)\,[\level(\ell)=b]\)
}
\newcommand{\InvImplIIrredName}[0]{\functionname{InvImplIrred}\xspace}
\newcommand{\InvImplIIrred}[0]{
  \InvImplIIrredName:
  &
  \(\forall n \in \Nat \stf P \wedge \decsf{\leq n}(I) \models \filter{I}{\leq n}\)
}
\newcommand{\InvImplIRedName}[0]{\functionname{InvImplRed}\xspace}
\newcommand{\InvImplIRed}[0]{
  \InvImplIRedName:
  &
  \(\forall n \in \Nat \stf P \wedge \negated{M} \wedge \decsf{\leq n}(I) \models \filter{I}{\leq n}\)
}
\newcommand{\InvDSOPIrredName}[0]{\functionname{InvDSOP}\xspace}
\newcommand{\InvDSOPIrred}[0]{
  \InvDSOPIrredName:
  &
  \(M\) is a DSOP
}
\begin{document}

  \title{On Enumerating Short Projected Models\thanks{The first author's contribution was mostly carried out at the
    LIT Correct and Secure Systems Lab and the Institute for Formal Models and
    Verification, Johannes Kepler University Linz, Linz, Austria.} }
  
  \author{Sibylle M{\"o}hle\orcid{0000-0001-7883-7749}\\
    Max Planck Institute for Informatics\\
    Saarland Informatics Campus\\
    Saarbr{\"u}cken, Germany\\
    \and
    Roberto Sebastiani\orcid{0000-0002-0989-6101}\\
    Department of Information Engineering\\and Computer Science (DISI)\\
    University of Trento, Italy\\
    \and
    Armin Biere\orcid{0000-0001-7170-9242}\\
    Faculty of Engineering\\
    University of Freiburg\\
    Germany
  }
  
    \date{}
  
    \maketitle

  \begin{abstract}
    Propositional model enumeration, or All-SAT, is the task to record all
    models of a propositional formula. 
    It is a key task in software and hardware verification, system engineering,
    and predicate abstraction, to mention a few.
    It also provides a means to convert a \nolinebreak CNF formula into
    \nolinebreak DNF, which is relevant in circuit design.
    While in some applications enumerating models multiple times causes no harm,
    in others avoiding repetitions is crucial.
    We therefore present two model enumeration algorithms which adopt dual
    reasoning in order to shorten the found models.
    The first method enumerates pairwise contradicting models.
    Repetitions are avoided by the use of so-called blocking clauses for which
    we provide a dual encoding.
    In our second approach we relax the uniqueness constraint.
    We present an adaptation of the standard conflict-driven clause learning
    procedure to support model enumeration without blocking clauses.
    Our procedures are expressed by means of a calculus and proofs of
    correctness are provided.
  \end{abstract}

\newcounter{Line}

\section{Introduction} 
\label{sec:intro}

The \emph{satisfiability problem of propositional logic (SAT)} consists in
determining whether for a propositional formula there exists an assignment to
its variables which evaluates the formula to true and which we call
\emph{satisfying assignment} or \emph{model}.
For proving that a formula is satisfiable, it is sufficient to provide one
single model.
However, sometimes determining satisfiability is not sufficient but all models are required.
\emph{Propositional model enumeration (All-SAT)}\footnote{For the sake of
  readability, we use the term \emph{All-SAT} also if not all models are
  required since in principle such an algorithm could always be extended to
  determine all models.} is the task of enumerating (all)
satisfying assignments of a propositional formula.
It is a key task in, \egt, bounded and unbounded model checking \nolinebreak
\cite{DBLP:conf/tacas/BiereCCZ99,DBLP:conf/cav/Shtrichman00,DBLP:conf/charme/Shtrichman01,DBLP:conf/cav/McMillan02,DBLP:conf/cav/McMillan03,DBLP:conf/tacas/JinHS05},
image computation \nolinebreak
\cite{DBLP:conf/fmcad/GuptaYAG00,DBLP:conf/date/LiHS04,DBLP:conf/date/ShengH03,DBLP:conf/fmcad/GrumbergSY04}, 
system engineering
\nolinebreak \cite{DBLP:conf/icfem/SullivanMK19}, predicate abstraction
\nolinebreak \cite{DBLP:conf/cav/LahiriNO06}, and lazy Satisfiability Modulo
Theories \nolinebreak \cite{DBLP:journals/jsat/Sebastiani07}.

Model enumeration also provides a means to convert a formula in
Conjunctive Normal Form \nolinebreak(CNF) into a logically equivalent formula in
Disjunctive Normal Form \nolinebreak (DNF) composed of the models of the
\nolinebreak CNF
formula.
This conversion is used in, \egt, circuit design \nolinebreak
\cite{DBLP:journals/tcs/MiltersenRW05} and has also been studied from a
computational complexity point of view \cite{DBLP:books/teu/Wegener87,DBLP:conf/mfcs/MiltersenRW03},
and in the worst case it is exponential in the size of the original formula
due to its exponential blowup.
If the models found are pairwise contradicting, the resulting \nolinebreak DNF
is a Disjoint Sum-of-Product \nolinebreak (DSOP) formula, which is relevant in
circuit design \cite{DBLP:conf/iccad/MinatoM98,DBLP:journals/mst/BernasconiCLP13}, and whose models
can be enumerated in polynomial time in the number of its disjuncts \nolinebreak
\cite{DBLP:conf/gcai/MohleB19}
by simply returning them, as they represent implicants of the formula.
If the models found are not pairwise contradicting, the resulting formula is
still a \nolinebreak DNF but does not support polytime model counting.
Our model enumeration algorithm basically executes a CNF-to-DNF conversion
and, from this point of view, it can be interpreted as a
knowledge compilation algorithm. 

The aim of knowledge compilation is to transform a formula into another
language\footnote{A language in this context refers to one of the various forms a
  formula can be expressed in, \egt, \nolinebreak CNF and \nolinebreak DNF denote
  the languages we are mostly interested in in this article.}
on which certain operations can be executed in polynomial time \nolinebreak
\cite{DBLP:journals/aicom/CadoliD97,DBLP:journals/jair/DarwicheM02}.
This can be done, for instance, by recording the trace of an exhaustive search
\nolinebreak
\cite{DBLP:journals/jair/HuangD07,DBLP:conf/ai/MuiseMBH12,DBLP:conf/ijcai/LagniezM17},
and the target language in these approaches is the deterministic Decomposable Normal
Form \nolinebreak (d-DNNF),\footnote{A formula is in \nolinebreak d-DNNF, if
  (1) the sets of variables of the conjuncts of each conjunction are pairwise
  disjoint, and (2) the disjuncts of each disjunction are pairwise contradicting
  \nolinebreak \cite{DBLP:journals/jair/DarwicheM02}.
  Whereas in its original definition a d-DNNF formula is defined as a directed
  acyclic graph (DAG), in this work we refer to its representation made of
  conjunctions, disjunctions, negations, variables, and truth values.} 
which was applied, for instance, in planning
\cite{DBLP:conf/aips/PalaciosBDG05}.
In contrast, in our work we record the models of the input formula,
and the resulting formula is in
\nolinebreak d-DNNF only if the detected models are pairwise contradicting.

Enumerating models requires to process the search space exhaustively and is
therefore a harder task than determining satisfiability. 
However, since state-of-the-art \nolinebreak SAT solvers are successfully
applied in industrial applications, it seems natural to use
them as a basis for model enumeration.
Modern \nolinebreak SAT solvers implement \emph{conflict-driven clause learning
\nolinebreak (CDCL)} \nolinebreak
\cite{DBLP:conf/iccad/SilvaS96,DBLP:journals/tc/Marques-SilvaS99,DBLP:conf/dac/MoskewiczMZZM01} with
\emph{non-chronological backtracking}.\footnote{Also referred to as
  \emph{backjumping} in the literature.}
If a CDCL-based \nolinebreak SAT solver is extended to support model
enumeration, adequate measures need be taken to avoid enumerating models
multiple times as
demonstrated by the following small example.

\begin{example}[Multiple model enumeration]
  \label{ex:multenumcdcl}
  Consider the propositional formula
  \begin{equation*}
    F = \underbrace{(a \vee c)}_{C_1} \wedge
    \underbrace{(a \vee \negated{c})}_{C_2} \wedge
    \underbrace{(b \vee d)}_{C_3} \wedge
    \underbrace{(b \vee \negated{d})}_{C_4}
  \end{equation*}
  which is defined over the set of variables \(V = \{a, b, c, d\}\).
  Its total models\footnote{In total models all
  variables occur.} are given by \(\mods(F) = \{a\,b\,c\,d, a\,b\,c\,\negated{d},
  a\,b\,\negated{c}\,d, a\,b\,\negated{c}\,\negated{d}\}\).
  These models may be represented by \nolinebreak \(a\,b\), \iet, they are given
  by all total extensions of \nolinebreak \(a\,b\).

  Let our model enumerator be based on \nolinebreak CDCL with non-chronological
  backtracking. 
  Assume we first decide \nolinebreak \(a\), \iet, assign \nolinebreak \(a\) the value
  true, and then \nolinebreak \(b\).
  This (partial) assignment \nolinebreak \(a\,b\) is a model of \nolinebreak \(F\).
  As in our previous work on propositional model counting \nolinebreak \cite{DBLP:conf/gcai/MohleB19},
  we flip the second decision literal, \iet, assign \nolinebreak \(b\) the value
  false, in order to explore the second branch, upon which the literal \nolinebreak \(d\) is forced to
  true in order to satisfy clause \nolinebreak \(C_3\).
  The resulting assignment \(a\,\negated{b}\,d\) now falsifies clause
  \nolinebreak \(C_4\), \iet, sets all its literal to false.
  Conflict analysis yields the unit clause \(C_5 = (b)\), which is added
  to \nolinebreak \(F\).
  The enumerator then backtracks to decision level zero, \iet, unassigns
  \nolinebreak \(d\), \nolinebreak \(b\) and \nolinebreak \(a\), and propagates
  \nolinebreak \(b\) with reason \nolinebreak \(C_5\).
  No literal is enforced by the assignment \nolinebreak \(b\), and a decision
  need be taken.
  If we choose \nolinebreak \(a\), \nolinebreak \(F\) is satisfied.
  The model found is \(b\,a\), which is the one we had found earlier.
\end{example}

Multiple model enumeration in \autoref{ex:multenumcdcl} is caused by the fact
that after conflict analysis the same satisfying assignment is repeated, albeit
in reverse order.
More generally, the same
satisfying assignment might be found again if the enumerator backtracks past a
flipped decision literal. 
Avoiding enumerating models multiple times is crucial in, \egt,
weighted model counting \nolinebreak (WMC)
\cite{DBLP:conf/aaai/SangBK05,DBLP:journals/ai/ChaviraD08,DBLP:conf/esa/FichteHWZ18,DBLP:conf/cp/DudekPV20}
and Bayesian inference \nolinebreak \cite{DBLP:journals/eccc/ECCC-TR03-003},
which require enumerating the models in order to compute their weight or
probability.
Another example is weighted model integration \nolinebreak (WMI)
\cite{DBLP:conf/ijcai/MorettinPS17,DBLP:journals/ai/MorettinPS19} which
generalizes \nolinebreak WMC for hybrid domains.
In some applications, repeating models might lead to inefficiency and
harm scalability \nolinebreak \cite{DBLP:conf/icfem/SullivanMK19}.
In the context of model counting but also relevant in model enumeration, Bayardo
and Pehoushek \nolinebreak \cite{DBLP:conf/aaai/Pehoushek00} identified the need for good learning
similarly to its learning counterpart in \nolinebreak CDCL, and
various measures have therefore been proposed to avoid the multiple enumeration of models.

One possibility is to rule out a model which was already found by adding a \emph{blocking
  clause} to the formula \nolinebreak
\cite{DBLP:conf/cav/McMillan02,DBLP:conf/ictai/MorgadoS05,DBLP:conf/ictai/MohleB18} which in essence is
the negation of the model or the decision literals in the model to be blocked
\nolinebreak \cite{DBLP:conf/ictai/MorgadoS05}.
Whenever a satisfying assignment is repeated, the clause blocking it is
falsified, and thus this model is not enumerated again.
As soon as all models are found and the relevant blocking clauses added, the
formula becomes unsatisfiable.
However, there might be an exponential number of models and adding a blocking clause for
each of them might result in a significant negative impact
on the enumerator performance.
In these cases, multiple model enumeration need be prevented by other measures.  
Toda and Soh \nolinebreak \cite{DBLP:journals/jea/TodaS16} address this issue by
adopting a variant of conflict analysis which is inspired by Gebser et al.
\nolinebreak \cite{DBLP:conf/cpaior/GebserKS09} and is exempt from blocking clauses.

The use of blocking clauses can also be avoided by adopting the
Davis-Putnam-Logemann-Loveland \nolinebreak (DPLL) algorithm \nolinebreak
\cite{DBLP:journals/cacm/DavisLL62}. 
In \nolinebreak DPLL, after a conflict or a model the last decision literal is
flipped causing the solver to 
find only pairwise contradicting models.
This idea was applied in the context of model counting by Birnbaum and Lozinskii
\nolinebreak \cite{DBLP:journals/jair/BirnbaumL99} but can readily be adapted to
support model enumeration.
Chronological backtracking in Grumberg et al. \nolinebreak \cite{DBLP:conf/fmcad/GrumbergSY04}
and in part in Gebser et al. \nolinebreak \cite{DBLP:conf/cpaior/GebserKS09} ensures that the
search space is traversed in a systematic manner similarly to \nolinebreak DPLL,
and that the use of blocking clauses can be avoided. 
An apparent drawback of \nolinebreak DPLL-based solvers, however, is that they might spend
a significant amount of time in regions of the search space having no satisfying
assignments, since---unlike \nolinebreak CDCL-based solvers---they lack the
possibility to escape those regions early.

This last issue can be addressed by the use of chronological
\nolinebreak CDCL introduced by Nadel
and Ryvchin \nolinebreak \cite{DBLP:conf/sat/NadelR18,DBLP:conf/sat/MohleB19}.
Chronological \nolinebreak CDCL combines the power of conflict-driven clause
learning with chronological backtracking.
Specifically, after finding a model, the last open (left)
decision literal is flipped in
order to process neighboring regions of the search space, while in case of a conflict
the solver is able to escape solution-less regions early.
In our earlier work \nolinebreak \cite{DBLP:conf/gcai/MohleB19},
we developed a calculus for propositional model counting based
on chronological \nolinebreak CDCL
and provided a proof of its correctness.
We took a model enumeration approach making our
method readily applicable in the context of model enumeration without
repetition.
However, while finding short models is crucial in, \egt, weighted model
integration \nolinebreak
\cite{DBLP:conf/ijcai/MorettinPS17,DBLP:journals/ai/MorettinPS19},
only total models are
detected as is usual in CDCL-based \nolinebreak SAT solvers.
The reason for this is a simple one.

To detect when a partial assignment\footnote{In a partial assignment not all
  variables occur.} is a model of the input formula, the \nolinebreak SAT solver
would have to carry out satisfiability checks before every decision, as done by
Birnbaum and Lozinskii \nolinebreak \cite{DBLP:journals/jair/BirnbaumL99}.
These satisfiability checks are expensive, and assigning the remaining variables
instead is more efficient, computationally.
If all variables are assigned and no conflict has occurred, the \nolinebreak SAT
solver knows to have found a model.
This makes sense in \nolinebreak SAT solving.
Model enumeration, however, is a harder task, and therefore more expensive methods might
pay off. 

One such method is \emph{dual reasoning} \nolinebreak
\cite{DBLP:conf/ictai/MohleB18,DBLP:conf/ysip/BiereHM17}. 
Our dual model counter Dualiza\footnote{\url{https://github.com/arminbiere/dualiza}}
takes as input the formula under consideration together with its negation.
The basic idea is to execute \nolinebreak CDCL on both formulae simultaneously
maintaining one single trail.
Whenever a conflict in the negated formula occurs, the current (partial) assignment is a
model of the formula.
Although developed for model counting, its adaptation for model
enumeration is straightforward.

Another idea enabling the detection of short models was to check whether all total extensions of the current (partial)
assignment evaluate the 
input formula to true 
before taking a decision, \iet, whether the current assignment logically entails
the input formula \nolinebreak \cite{SebastianiPartial,DBLP:conf/sat/MohleSB20}. 

Partial assignments evaluating the input formula to true 
represent sets of total models of the input formula.
However, these sets might not be disjoint
as is
demonstrated by the following example.

\begin{example}[Short redundant models]
  Let \nolinebreak \(F = (a \wedge b) \vee (a \wedge c)\) be a propositional
  formula defined over variables \(V = \{a, b, c\}\).
  Notice that \nolinebreak \(F\) is not in \nolinebreak CNF and significantly
  differs from the one in our previous example. 
  Its total models are \(\mods(F) = \{a\,b\,c, a\,b\,\negated{c},
  a\,\negated{b}\,c\}\).
  These models may also be represented by the two partial models \(a\,b\) and
  \(a\,c\).
  The former represents \(a\,b\,c\) and \(a\,b\,\negated{c}\), whereas the
  latter represents \(a\,b\,c\) and \(a\,\negated{b}\,c\). 
  Notice that \(a\,b\,c\) occurs twice.
\end{example}

Partial assignments evaluating the input formula to true result in blocking
clauses which are shorter than the ones blocking one single total model.
Adding short blocking clauses has a twofold effect.
First, a larger portion of the search space is ruled out.
Second, fewer blocking clauses need be added which mitigates their negative impact
on solver performance. 
Also, short blocking clauses generally propagate more eagerly than long
ones.
The need for shrinking or minimizing models has been pointed out by Bayardo and
Pehoushek \nolinebreak \cite{DBLP:conf/aaai/Pehoushek00} and addressed further
\nolinebreak \cite{DBLP:conf/tacas/JinHS05,DBLP:conf/tacas/RaviS04,DBLP:conf/sat/AzizCMS15}.
Notice that with blocking clauses \nolinebreak CDCL can be used as in \nolinebreak SAT solving,
while in the absence of blocking clauses it 
need be adapted.

The reason is as follows.
If a CDCL-based \nolinebreak SAT solver encounters a conflict, it analyzes it
and \emph{learns} a clause\footnote{We say that a clause is learned if it is added
to the formula.} in order to prevent the solver from repeating the same assignment which
caused the conflict. 
This clause is determined by traversing the trail in reverse assignment order and
resolving the reasons of the literals on the trail, starting with the
conflicting clause, until the resolvent contains one single literal at the
maximum decision level. 
If a model is found, the last decision literal is flipped in order to
explore another branch of the search space.
This leads to issues if this literal is encountered in later conflict
analysis and no blocking clause was added, since in this case it is neither a decision literal nor a propagated
literal.

To address this issue, Grumberg et al. \nolinebreak \cite{DBLP:conf/fmcad/GrumbergSY04} introduce
sub-levels for flipped decision literals treating them similarly to decision literals
in future conflict analysis.
Similarly to Gebser et al. \nolinebreak \cite{DBLP:conf/cpaior/GebserKS09},
Toda and Soh \nolinebreak \cite{DBLP:journals/jea/TodaS16} limit the level to
which the solver is allowed to backtrack.
These measures also ensure that enumerating overlapping partial models
is avoided.
However, in applications where repetitions cause no harm,
the power of finding even shorter models representing
larger, albeit not disjoint, sets of models, can be exploited.
Shorter models are also obtained in the case of model enumeration under \emph{projection}.

If not all variables are relevant in an application, we project the models of
the input formula onto the relevant variables, or, otherwise stated, we
existentially quantify the irrelevant variables.
Projection occurs in, \egt, model checking
\cite{DBLP:conf/cav/Shtrichman00,DBLP:conf/charme/Shtrichman01}, image
computation \nolinebreak
\cite{DBLP:conf/fmcad/GuptaYAG00,DBLP:conf/fmcad/GrumbergSY04}, quantifier
elimination \nolinebreak
\cite{DBLP:conf/cav/BrauerKK11,DBLP:conf/fmics/ZenglerK13}, and predicate
abstraction \nolinebreak \cite{DBLP:conf/cav/LahiriNO06}.
The breadth of these applications highlights the relevance of projection in
practice. 

\paragraph{Our contributions}
In this article we address the task of enumerating short projected models with
and without repetition.
We start by presenting a CDCL-based algorithm for the case where only pairwise contradicting,
\iet, \emph{irredundant},
models are sought.
Multiple model enumeration is prevented by the addition of
blocking clauses to the input formula, and 
dual reasoning is adopted for shrinking total models.
To ensure correctness of the latter, we introduce the concept of \emph{dual
  blocking clauses} which provides
a solution to an issue identified in our earlier
work \nolinebreak \cite{DBLP:conf/ictai/MohleB18}.
Dual reasoning in model shrinking enables us to obtain short models, and
\nolinebreak CDCL lets us exploit the strengths of state-of-the-art \nolinebreak
SAT solvers.
Short models result in short blocking clauses with the potential to
reduce their number and to rule out a larger portion of the search space.
We express our algorithm by means of a formal calculus and provide a correctness
proof.
A generalization of our algorithm to the case where partial satisfying
assignments are found and shrunken is presented. 
This generalization makes sense as we do not guarantee that our model
shrinking method gives us the minimal model.
We discuss the appropriate changes to our algorithm, calculus, proof, and its
generalization.

We then introduce a relaxed version of our algorithm for enumerating 
non-contradicting, \iet, \emph{redundant}, models.
This method is exempt of blocking clauses, and
consequently decision literals which were flipped after a model lack a
reason.
To fix this issue, we introduce an adaptation of \nolinebreak CDCL for
\nolinebreak SAT to All-SAT.
We discuss the changes to our previous algorithm needed in order to support redundant model
enumeration.

This article builds on our work presented at the 23rd International Conference on
Theory and Applications of Satisfiability Testing (SAT) 2020 \nolinebreak
\cite{DBLP:conf/sat/MohleSB20}.
It also uses concepts introduced by Sebastiani \nolinebreak 
\cite{SebastianiPartial} as well as presented at the Second Young Scientist's International
Workshop on Trends in Information Processing (YSIP2) 2017 \nolinebreak
\cite{DBLP:conf/ysip/BiereHM17} and the 30th
International Conference on Tools with Artificial Intelligence 
(ICTAI) 2018 \nolinebreak \cite{DBLP:conf/ictai/MohleB18}, the 22nd
International Conference on Theory and Applications of Satisfiability Testing
(SAT) 2019 \nolinebreak \cite{DBLP:conf/sat/MohleB19}, and the 5th
Global Conference on Artificial Intelligence (GCAI) 2019
\nolinebreak \cite{DBLP:conf/gcai/MohleB19}.

\paragraph{Structure of the paper}
In \autoref{sec:prelim}, 
we introduce our notation and basic concepts.
Dual reasoning is applied for shrinking models in \autoref{sec:shrinking}, and
an according dual encoding of blocking clauses is introduced in
\autoref{sec:dualblock}.
After presenting our algorithm for projected model enumeration without
repetition in \autoref{sec:pmeirred} and providing a formalization and correctness
proof and a generalization to the detection of partial models in
\autoref{sec:formalenumirred}, we turn our attention to projected model
enumeration with repetition. 
We adapt \nolinebreak CDCL for \nolinebreak SAT to support
conflict analysis in the context of model enumeration without the use of blocking
clauses in \autoref{sec:learn} and discuss the changes to our method needed to
support multiple model enumeration in \autoref{sec:pmered}, before we conclude
in \autoref{sec:conclusion}.

\section{Preliminaries}
\label{sec:prelim}

In this section we provide the concepts and notation on which our presentation
relies: propositional satisfiability \nolinebreak (SAT) and incremental
\nolinebreak SAT solving, projection, and the dual representation of a formula,
which constitutes the basis for dual reasoning.

\subsection{Propositional Satisfiability (SAT)}
\label{sec:sat}

The set containing the
Boolean constants \nolinebreak \(\false\) (false) and \nolinebreak \(\true\)
(true) is denoted with \(\Bool = \{\false, \true\}\).
Let \nolinebreak \(V\) be a set of propositional (or Boolean) variables.
A \emph{literal} is either a variable \nolinebreak \(v \in V\) or its negation
\nolinebreak \(\negated{v}\).
We write \nolinebreak \(\compell\) to denote the \emph{complement} of
\nolinebreak \(\ell\) assuming \(\compell = \negated{\ell}\) and
\(\overline{\negated{\ell}} = \ell\). 
The variable of a literal \nolinebreak \(\ell\) is obtained by \nolinebreak
\(\var(\ell)\).
This notation is extended to formulae, clauses, cubes, and sets of literals.

Most \nolinebreak SAT solvers work on formulae in \emph{Conjunctive Normal Form 
  \nolinebreak (CNF)} which are conjunctions of \emph{clauses}, which are disjunctions of
literals.
These \nolinebreak SAT solvers implement efficient algorithms tailored for
\nolinebreak CNFs, such as unit propagation, which will be presented below.
In contrast, a formula in \emph{Disjunctive Normal Form \nolinebreak (DNF)} is a disjunction
of \emph{cubes} which are conjunctions of literals.
We interpret formulae as sets of clauses and write \(C \in F\) to refer to a
clause \nolinebreak \(C\) occurring in the formula \nolinebreak \(F\).
Accordingly, we interpret clauses and cubes as sets of literals.
The empty \nolinebreak CNF formula and the empty cube are denoted by
\nolinebreak \(\true\), while the empty \nolinebreak DNF formula and the empty
clause are represented by \nolinebreak \(\false\).

A \emph{total assignment} \(\assign \colon V \mapsto \Bool\) maps \nolinebreak \(V\) to
the truth values \nolinebreak \(\false\) and \nolinebreak \(\true\).
It can be applied to a formula \nolinebreak \(F\) over a set of variables
\nolinebreak \(V\) to obtain the \emph{value of \nolinebreak \(F\) under
  \nolinebreak \(\assign\)}, denoted by \nolinebreak
\(\assign(F) \in \Bool\), also written \nolinebreak
\(\residual{F}{\assign}\).
A sequence \(I = \ell_1, \ldots, \ell_n\) with mutually exclusive variables
(\(\var(\ell_i) \neq \var(\ell_j)\) for \(i \neq j\)) is called a \emph{trail}.
If their variable sets are disjoint, trails and literals may be concatenated,
denoted \(I = I'\,I''\) and \(I = I'\,\ell\,I''\).
Consider as an example a set of variables \(V = \{a,b,c,d,e,f,g,h\}\) and let
\(I' = a\,\negated{b}\,\negated{c}\), \(I'' = e\,f\,\negated{g}\,h\), and
\(\ell = \negated{d}\) be two trails and a literal, respectively.
Now \(\var(I') = \{a,b,c\}\), \(\var(I'') = \{e,f,g,h\}\), and \(\var(\ell)=d\).
Since \(\var(I') \cap \var(I'') = \emptyset\) and
\(\var(\ell) \not\in \var(I') \cup \var(I'')\), they can be concatenated
obtaining, \egt, 
\(I = I'\,I'' = a\,\negated{b}\,\negated{c}\,e\,f\,\negated{g}\,h\) and
\(I = I'\,\ell\,I'' =
a\,\negated{b}\,\negated{c}\,\negated{d}\,e\,f\,\negated{g}\,h\).
We treat trails as conjunctions or sets of literals and write \(\ell \in I\) if \nolinebreak
\(\ell\) is contained in \nolinebreak \(I\).
Trails can also be interpreted as partial assignments with \(I(\ell) = \true\)
\ifft \(\ell \in I\).
Similarly, \(I(\ell) = \false\) \ifft \(\negated{\ell} \in I\), and \nolinebreak 
\(I(\ell)\) is undefined \ifft \(\var(\ell) \not\in \var(I)\).
The unassigned variables in \nolinebreak \(V\) are denoted by \(\unassvars{I}{V}\) and 
the empty trail by \nolinebreak \(\emptytrail\).

We call \emph{residual} of \nolinebreak \(F\) under \nolinebreak \(I\), denoted
\nolinebreak \(\residual{F}{I}\), the formula \nolinebreak \(I(F)\) obtained by
assigning the variables in \nolinebreak \(F\) their truth value.
If \nolinebreak \(F\) is in \nolinebreak CNF, this amounts to removing from
\nolinebreak \(F\) all clauses containing a literal \(\ell \in I\) and removing
from the remaining clauses all occurrences of \nolinebreak \(\negated{\ell}\).
For instance, given a formula \(F=(a \vee b)\wedge(\negated{a}\vee b\vee c)\) with
\(\var(F)=\{a,b,c\}\) and \(I=a\), \(\residual{F}{I}=(b\vee c)\).
If \nolinebreak \(\residual{F}{I} = \true\), we say that \nolinebreak \(I\)
\emph{satisfies} \nolinebreak \(F\) or that \nolinebreak \(I\) is a
\emph{model} of \nolinebreak \(F\).
If all variables are assigned, we call \nolinebreak \(I\) a \emph{total model}
of \nolinebreak \(F\).   
Following the distinction highlighted by Sebastiani \nolinebreak
\cite{SebastianiPartial}, if \nolinebreak \(I\) is a partial assignment, we say
that \emph{\nolinebreak \(I\) evaluates \nolinebreak \(F\) to \nolinebreak
  \true}, written \(I \vdash F\), if \(\residual{F}{I} = \true\), and that
\emph{\nolinebreak \(I\) logically entails \nolinebreak \(F\)}, written \(I
\models F\), or that \nolinebreak \(I\) is a \emph{partial model} of
\nolinebreak \(F\), if all total assignments extending \nolinebreak \(I\)
satisfy \nolinebreak \(F\).
Notice that \(I \vdash F\) implies that \(I \models F\) but not vice versa: \egt,
if \(F \defas (a \wedge b) \vee (a \wedge \negated{b})\) and \(I \defas a\),
then \(I \models F\) but \(I \not\vdash F\), because
\(\residual{F}{I} = (b \vee \negated{b}) \neq \true\).
If \nolinebreak \(F\) is in \nolinebreak CNF without valid clauses, \iet,
without clauses containing contradicting literals, then \(I \vdash F\) \ifft
\(\residual{F}{I} = \true\).
We say that \emph{\nolinebreak \(I\) evaluates \nolinebreak \(F\) to
  \nolinebreak \(\false\)} or that \nolinebreak \(I\) is a \emph{counter-model}
of \nolinebreak \(F\), \ifft \(\residual{F}{I} = \false\). 
If \nolinebreak \(F\) is in \nolinebreak CNF, its residual under \nolinebreak
\(I\) contains the empty clause, \(\false \in \residual{F}{I}\).

\subsection{The Davis-Putnam-Logemann-Loveland  (DPLL) Algorithm}
\label{sec:davis-putn-logem}

\begin{figure}
  \setcounter{Line}{1}

  \begin{tabular}{@{}r@{\hspace{.85em}}l@{}}
    \L{} \hspace{-3cm}\begin{minipage}[l]{2.9cm}
      {\hfill\textbf{Input:}\hspace{.75em}~}\\[-1ex] 
    \end{minipage}
    \begin{minipage}[l]{0.9\textwidth}
      formula \(F\)\\[-1ex]
    \end{minipage}
    \N
    \L{} \hspace{-3cm}\begin{minipage}[l]{2.9cm}
      {\hfill\textbf{Output:}\hspace{.75em}~}\\
    \end{minipage}
    \begin{minipage}[l]{0.7\textwidth}
      SAT if \(F\) is satisfiable, UNSAT if \(F\) is unsatisfiable\\
    \end{minipage} \N
    \L{}
    \kern-1.5em\scalebox{1.05}{\dpllalgname}\hspace{2.0pt}(\,\(F\)\,) \NH
    \L{\theLine} \(I \mdef \emptytrail\) \N
    \stepcounter{Line}
    \L{\theLine} \K{forever do} \N
    \stepcounter{Line}
    \L{\theLine} \I \(C \mdef\) \S{\unitpropalgirredname}\hspace{2.0pt}(\,\(F\),~\(I\)\,)   \N
    \stepcounter{Line}
    \L{\theLine} \I \K{if} \(C \neq \false\) \K{then} \N
    \stepcounter{Line}
    \L{\theLine} \I\I \K{if} \(\decs(I) = \emptyset\) \K{then} \N
    \stepcounter{Line}
    \L{\theLine} \I\I\I \K{return} UNSAT \N
    \stepcounter{Line}
    \L{\theLine} \I\I \K{else} \N
    \stepcounter{Line}
    \L{\theLine} \I\I\I flip most recent decision \(\decided{\ell}\in I\) \N
    \stepcounter{Line}
    \L{\theLine} \I \K{else}  \N
    \stepcounter{Line}
    \L{\theLine} \I\I \K{if} there are unassigned variables in \(\var(F)\) 
    \K{then} \N
    \stepcounter{Line}
    \L{\theLine} \I\I\I \S{\decidealgirredname}\hspace{2.0pt}(\,\(F\),~\(I\)\,) \N
    \stepcounter{Line}
    \L{\theLine} \I\I \K{else} \N
    \stepcounter{Line}
    \L{\theLine} \I\I\I \K{return} SAT  \NS
    \setcounter{Line}{1}
    \L{} \kern-1.5em\scalebox{1.05}{\unitpropalgirredname}\hspace{2.0pt}(\,\(F\),~\(I\)\,) \NH  
    \L{\theLine} \K{while} \(\residual{C}{I}=(\ell)\) for some \(C\in F\) \K{do} \N
    \stepcounter{Line}
    \L{\theLine} \I \(I \mdef I\,\ell\) \N
    \stepcounter{Line}
    \L{\theLine} \I \K{for all} clauses \(D \in F\) such that \(\negated\ell\in D\) \K{do} \N
    \stepcounter{Line}
    \L{\theLine} \I\I \K{if} \(\residual{D}{I} = \false\) \K{then} \K{return} \(D\) \N
    \stepcounter{Line}
    \L{\theLine} \K{return} \(\false\) \NS
    \setcounter{Line}{1}
    \L{} \kern-1.5em\scalebox{1.05}{\decidealgirredname}\hspace{2.0pt}(\,\(F\),~\(I\)\,) \NH  
    \L{\theLine} \(I\mdef I\,\decided{\ell}\) where \(\var(\ell)\in\var(F)
    \setminus \var(I)\)
  \end{tabular}
  \caption{
    \label{fig:dpllalg}
    Davis-Putnam-Logemann-Loveland  (DPLL) Algorithm.
    The main loop starts with exhaustive unit propagation.
    If a conflict occurs, the most recent decision is flipped.
    If there are no decisions left on the trail, the execution terminates
    returning UNSAT.
    If no conflict occurs and there are still unassigned variables, a decision
    is taken.
    Otherwise, the execution terminates returning SAT.
  }
\end{figure}

The satisfiability of a propositional formula \nolinebreak
\(F\) over a set of variables \nolinebreak \(V\) can be determined by the
Davis-Putnam-Logemann-Loveland  (DPLL) algorithm \nolinebreak
\cite{DBLP:journals/cacm/DavisLL62,DBLP:journals/jacm/DavisP60} depicted in
\nolinebreak \autoref{fig:dpllalg}.\footnote{As it is common practice in the SAT community, we do not consider the pure-literal rule from the original DPLL procedure because it is considered ineffective.}
Its main ingredient is the trail \nolinebreak \(I\) which is iteratively
extended by a literal \nolinebreak \(\ell\) which is either \emph{propagated} or 
\emph{decided}.
In the former case, there exists a clause \(C \in F\) containing \nolinebreak
\(\ell\) in which all literals except \nolinebreak \(\ell\) evaluate to false 
under the current (partial) assignment \nolinebreak \(I\).
The literal \nolinebreak \(\ell\) is called \emph{unit literal} or \emph{unit} and
\nolinebreak \(C\) a \emph{unit clause}.
In order to satisfy \nolinebreak \(C\), and thus \nolinebreak \(F\), the literal
\nolinebreak \(\ell\) need be assigned the value true.  
After being propagated,
the literal \nolinebreak \(\ell\) becomes a \emph{propagation literal}, and \nolinebreak
\(C\) is called its \emph{reason}.
If after the propagation of \nolinebreak \(\ell\) a clause \nolinebreak \(D\in
F\) becomes false, this clause is returned to indicate that a \emph{conflict}
occurred. 
The corresponding rule is the \emph{unit propagation} rule (function
\unitpropalgirredname).
Notice that \nolinebreak \(\residual{F}{I}\) may contain multiple multiple
reasons for a unit literal, and by speaking of ``its'' reason we refer to the
one chosen in the current execution. 
If a literal is decided, its value is chosen according to some heuristic by a
\emph{decision}, and it is called \emph{decision literal}. 
We annotate decision literals on the trail by a superscript, \egt, \nolinebreak
\(\decided{\ell}\), denoting open ``left'' branches (\decidealgirredname).
If a decision literal \nolinebreak \(\decided{\ell}\) is flipped, its complement
\nolinebreak \(\compell\) opens a ``right'' branch.
The set consisting of all decision literals on the trail \nolinebreak \(I\) is
obtained by \nolinebreak \(\decs(I) = \{\ell \mid \decided{\ell}\in I\}\).

In the main loop of the function \dpllalgname, first exhaustive unit propagation
is carried out (line \nolinebreak 3).
If it does not return the empty clause, a conflict has occurred.
If there is no decision literal left on the trail, both the left and right
branch of all decisions have been explored.
The trail \nolinebreak \(I\) can not be extended to a satisfying assignment of \nolinebreak
\(F\) and the execution stops returning UNSAT (lines \nolinebreak 4--6).
Otherwise, the literals assigned after the most recent decision \nolinebreak
\(\decided{\ell}\) are removed from \nolinebreak \(I\) and \(\decided{\ell}\) is
flipped (line \nolinebreak 8).
If exhaustive unit propagation returns the empty clause, no conflict has
occurred and there is no unit literal in \nolinebreak \(\residual{F}{I}\).
If not all variables are assigned a value, a decision need be taken (lines
\nolinebreak 10--11).
Otherwise, the execution terminates returning SAT (line \nolinebreak 13).

\subsection{Conflict-Driven Clause Learning (CDCL)}
\label{sec:cdclsat}

The DPLL algorithm lacks the possibility to escape early from solution-less
regions of the search space.
Conflict-driven clause learning \nolinebreak (CDCL)
enables the solver to learn a clause representing the
reason for the current conflict and to accordingly undo multiple decisions in
one step.
The trail is now partitioned into blocks, called \emph{decision levels}, which
extend from a decision literal to the last literal preceding the next decision.

The \emph{decision level function} \(\level \colon V \mapsto \Nat\) 
assigns and alters decision levels.
The decision level of a variable \nolinebreak \(v \in V\) is obtained by
\nolinebreak \(\level(v)\).
If \nolinebreak \(v\) is unassigned, we have \(\level(v) = \infty\).
We extend \nolinebreak \(\level\) accordingly to determine the decision level of
literals \nolinebreak \(\ell\),
non-empty clauses \nolinebreak \(C\), and non-empty trails
\nolinebreak \(I\), by defining \(\level(\ell) = \level(\var(\ell))\),
\(\level(C) = \maximum\{\level(\ell) \mid \ell \in C\}\), and \(\level(I) =
\maximum\{\level(\ell) \mid \ell \in I\} = \#\{\ell \mid \decided{\ell} \in I\}\).
The decision level of the trail \nolinebreak \(I\) therefore corresponds to the
number of decision literals on \nolinebreak \(I\), and if \nolinebreak \(I\)
contains only propagated literals, then \nolinebreak \(\level(I)=0\).
The subsequence of \nolinebreak \(I\) consisting of all literals with
decision level smaller or equal to \nolinebreak \(n\), is denoted by
\nolinebreak \(\filter{I}{\leq n}\).
Accordingly, we define \(\level(L) = \maximum\{\level(\ell) \mid \ell \in L\}\)
for a non-empty set of literals \nolinebreak \(L\).
If \nolinebreak \(v\) is unassigned, we have \(\level(v) = \infty\), and
\(\level(\false) = \level(\emptytrail) = \level(\emptyset)\) for the empty
clause, the empty sequence and the empty set of literals.
Whenever a variable is assigned or unassigned, the decision level function
\nolinebreak \(\level\) is updated.
If \nolinebreak \(\var(\ell)\) is assigned at decision level \nolinebreak \(d\),
we write \nolinebreak \(\level\levelupd{\ell \mapsto d}\).
If all variables in the set of variables \nolinebreak \(V\) are assigned
decision level \(\infty\), we
write \nolinebreak \(\level\levelupd{V \mapsto \infty}\) or \(\level \equiv \infty\) as a
shortcut.
Similarly, if all literals occurring on the trail \nolinebreak
\(I\) are unassigned, \iet, removed from \(I\), their decision level is
assigned \nolinebreak \(\infty\), and we write
\(\level\levelupd{I \mapsto \infty} = \level\levelupd{V(I) \mapsto
\infty}\).
The function \(\level\) is left-associative, \iet, \(\level\levelupd{I \mapsto
\infty}\levelupd{\ell \mapsto d}\) first unassigns all variables on \nolinebreak \(I\)
and then assigns literal \nolinebreak \(\ell\) at decision level \nolinebreak
\(d\).

Literals occurring before the first decision are assigned exclusively by unit
propagation at decision level zero. 
A propagated literal is annotated with its reason, as in
\(\propagated{\ell}{C}\), and assigned at the current 
decision level \(\level(I)\).
The trail can be represented graphically by the \emph{implication graph} which is
defined as follows.
Decision literals are represented as nodes on the left and annotated with their
decision level.
Propagated literals are internal nodes with one incoming arc originating from each
node representing a literal in their reason.
A conflict is represented by the special node \nolinebreak \(\kappa\) whose
incoming arcs are annotated with the conflicting clause.

\begin{example}[Trail and implication graph]
  \label{ex:impgraphtrail}
  Consider the formula
  \begin{equation*}
    F = \underbrace{(\negated{a} \vee b)}_{C_1} \wedge
    \underbrace{(\negated{c} \vee d)}_{C_2} \wedge
    \underbrace{(\negated{b} \vee \negated{c} \vee \negated{d})}_{C_3}
  \end{equation*}
  over the set of variables \(V = \{a, b, c, d\}\).
  Assume we first decide \nolinebreak \(a\), then propagate \nolinebreak \(b\)
  with reason \nolinebreak \(C_1\) followed by deciding \nolinebreak \(c\) and
  propagating \nolinebreak \(d\) with reason \nolinebreak \(C_2\).
  Under this assignment, the clause \nolinebreak \(C_3\) is falsified.
  The current trail is given by
  \begin{equation*}
    I = \decided{a}\,\propagated{b}{C_1}\,\decided{c}\,\propagated{d}{C_2}
  \end{equation*}
  where \(\level(a) = \level(b) = 1\), \(\level(c) = \level(d) = 2\),
    and \(\decs(I) = \{a, b\}\).
  The corresponding implication graph is
  \begin{center}
    \begin{tikzpicture}
      \node (a) {\(a@1\)};
      \node[right=50pt of a] (b) {\(b\)};
      \node[below=16pt of a] (c) {\(c@2\)};
      \node[below=16pt of b] (d) {\(d\)};
      \node[right=50pt of d] (kappa) {\(\kappa\)};
      \draw[-{to[length=5.5mm]}] (a) to node [above, midway] (TextNode) {\(C_1\)  } (b);
      \draw[-{to[length=5.5mm]}] (c) to node [above, midway] (TextNode) {\(C_2\)  } (d);
      \draw[-{to[length=5.5mm]}] (b) to [bend left=15] node [above, midway] (TextNode) {\(C_3\)  } (kappa);
      \draw[-{to[length=5.5mm]}] (d) to node [above, midway] (TextNode) {\(C_3\)  } (kappa);
      \draw[-{to[length=5.5mm]}] (c) to [bend right=30] node [below, midway] (TextNode) {\(C_3\)} (kappa);
    \end{tikzpicture}
  \end{center}
\end{example}

\begin{figure}
  \setcounter{Line}{1}

  \begin{tabular}{@{}r@{\hspace{.85em}}l@{}}
    \L{} \hspace{-3cm}\begin{minipage}[l]{2.9cm}
      {\hfill\textbf{Input:}\hspace{.75em}~}\\[-1ex] 
    \end{minipage}
    \begin{minipage}[l]{0.9\textwidth}
      formula \(F\)\\[-1ex]
    \end{minipage}
    \N
    \L{} \hspace{-3cm}\begin{minipage}[l]{2.9cm}
      {\hfill\textbf{Output:}\hspace{.75em}~}\\
    \end{minipage}
    \begin{minipage}[l]{0.7\textwidth}
      SAT if \(F\) is satisfiable, UNSAT if \(F\) is unsatisfiable\\
    \end{minipage} \N
    \L{}
    \kern-1.5em\scalebox{1.05}{\cdclalgname}\hspace{2.0pt}(\,\(F\)\,) \NH
    \L{\theLine} \(I \mdef \emptytrail\) \N
    \stepcounter{Line}
    \L{\theLine}  \(\level\levelupd{V \mapsto \infty}\) \N
    \stepcounter{Line}
    \L{\theLine} \K{forever do} \N
    \stepcounter{Line}
    \L{\theLine} \I \(C \mdef\) \S{\unitpropalgirredname}\hspace{2.0pt}(\,\(F\),~\(I\),~\(\level\)\,)   \N
    \stepcounter{Line}
    \L{\theLine} \I \K{if} \(C \neq \false\) \K{then} \N
    \stepcounter{Line}
    \L{\theLine} \I\I \K{if} \(\level(I) = 0\) \K{then} \N
    \stepcounter{Line}
    \L{\theLine} \I\I\I \K{return} UNSAT \N
    \stepcounter{Line}
    \L{\theLine} \I\I \K{else} \N
    \stepcounter{Line}
    \L{\theLine} \I\I\I \S{\conflanaalgirredname}\hspace{2.0pt}(\,\(F\),~\(I\),~\(C\),~\(\level\)\,) \N
    \stepcounter{Line}
    \L{\theLine} \I \K{else}  \N
    \stepcounter{Line}
    \L{\theLine} \I\I \K{if} there are unassigned variables in \(\var(F)\) 
    \K{then} \N
    \stepcounter{Line}
    \L{\theLine} \I\I\I \S{\decidealgirredname}\hspace{2.0pt}(\,\(F\),~\(I\),~\(\level\)\,) \N
    \stepcounter{Line}
    \L{\theLine} \I\I \K{else} \N
    \stepcounter{Line}
    \L{\theLine} \I\I\I \K{return} SAT \NS
    \setcounter{Line}{1}
    \L{} \kern-1.5em\scalebox{1.05}{\unitpropalgirredname}\hspace{2.0pt}(\,\(F\),~\(I\),~\(\level\)\,) \NH  
    \L{\theLine} \K{while} some \(C \in F\) is unit \((\ell)\) under \(I\) \K{do} \N
    \stepcounter{Line}
    \L{\theLine} \I \(I \mdef I\,\propagated{\ell}{C}\) \N
    \stepcounter{Line}
    \L{\theLine} \I \(\level(\ell) \mdef \level(I)\) \N
    \stepcounter{Line}
    \L{\theLine} \I \K{for all} clauses \(D \in F\) containing \(\negated\ell\) \K{do} \N
    \stepcounter{Line}
    \L{\theLine} \I\I \K{if} \(\residual{D}{I} = \false\) \K{then} \K{return} \(D\) \N
    \stepcounter{Line}
    \L{\theLine} \K{return} \(\false\) \NS
    \setcounter{Line}{1}
    \L{} \kern-1.5em\scalebox{1.05}{\conflanaalgirredname}\hspace{2.0pt}(\,\(F\),~\(I\),~\(C\),~\(\level\)\,) \NH
    \L{\theLine} \(D \mdef\) \S{Learn}\hspace{2.0pt}(\,\(I\),~\(C\)\,) \N
    \stepcounter{Line}
    \L{\theLine} \(F \mdef F \wedge D\) \N
    \stepcounter{Line}
    \L{\theLine}  \(\ell \mdef\) literal in \(D\) at decision level \(\level(I)\) \N
    \stepcounter{Line}
    \L{\theLine} \(j \mdef \level(D \setminus \{\ell\})\) \N
    \stepcounter{Line}
    \L{\theLine} \K{for all} literals \(k \in I\) with decision level \(> j\) \K{do} \N
    \stepcounter{Line}
    \L{\theLine} \I assign \(k\) decision level \(\infty\) \N
    \stepcounter{Line}
    \L{\theLine} \I remove \(k\) from \(I\) \N
    \stepcounter{Line}
    \L{\theLine} \(I \mdef I\,\propagated{\ell}{D}\) \N
    \stepcounter{Line}
    \L{\theLine} \(\level(\ell) \mdef j\) \NS
    \setcounter{Line}{1}
    \L{} \kern-1.5em\scalebox{1.05}{\decidealgirredname}\hspace{2.0pt}(\,\(F\),~\(I\),~\(\level\)\,) \NH  
    \L{\theLine} \(I\mdef I\,\decided{\ell}\) where
    \(\var(\ell)\in\var(F)\setminus\var(I)\) \N
    \stepcounter{Line}
    \L{\theLine} \(\level(\ell)\mdef\level(I)\)
  \end{tabular}
  \caption{
    \label{fig:cdclalg}
    CDCL-based satisfiability algorithm. In the main loop, first exhaustive unit
    propagation is executed.
    If a conflict at decision level zero occurs, the execution terminates
    returning UNSAT.
    Otherwise, the conflict is analyzed, a clause blocking the assignment
    responsible for the conflict is learned and backjumping occurs.
    If no conflict occurred and all variables are assigned, the execution
    terminates returning SAT, otherwise a decision is taken.
  }
\end{figure}

As in DPLL, the CDCL algorithm, see \autoref{fig:cdclalg}, executes a main loop
until either a 
satisfying assignment has been found or all possible assignments have been
checked without finding one.
It starts with exhaustive unit propagation (line \nolinebreak 4 and function
\unitpropalgirredname).
If a conflict occurs, we call the clause whose literals are set
to false under \nolinebreak \(I\) \emph{conflicting clause}.
If the trail \(I\) contains no decision literal, \iet, its decision level is
zero, the execution terminates returning UNSAT (lines \nolinebreak 6--7).
Otherwise, \emph{conflict analysis} is executed and backtracking 
occurs (line \nolinebreak 9).
If no conflict occurs and there are unassigned variables, a decision is taken
(line \nolinebreak 12), where the new decision literal is assigned to the decision
level of \nolinebreak \(I\).
Otherwise, the execution terminates returning SAT (line \nolinebreak 14).

Conflict analysis is described by procedure \conflanaalgirredname .
Suppose the current trail \nolinebreak \(I\) falsifies the clause \nolinebreak
\(C\in F\).
The basic idea is to compute a clause, let's say \nolinebreak \(D\),
containing the negated assignments responsible for the conflict.
By adding \nolinebreak \(D\) to \nolinebreak \(F\), this assignment is blocked.
Moreover, backtracking to the second highest decision level in \nolinebreak
\(D\) results in \nolinebreak \(D\) becoming unit, and its literal with highest
decision level is propagated.
A main ingredient of the clause learning algorithm is \emph{resolution} \nolinebreak
\cite{DBLP:journals/jacm/Robinson65}.
Given two clauses \((A \vee \ell)\) and \((B \vee \negated{\ell})\),
where \nolinebreak \(A\) and \nolinebreak \(B\) are disjunctions of literals and
\nolinebreak \(\ell\) is a literal, their \emph{resolvent} \((A \vee \ell)
\otimes_{\ell} (B \vee \negated{\ell}) = (A \vee B)\) is obtained by resolving
them on \nolinebreak \(\ell\).
The clause \nolinebreak \(D\) is determined by a sequence of resolution steps
which can be read off either the implication graph or the trail.
First, the conflicting clause is resolved with the reason of one
of its literals.
This procedure is repeated with the reason of one literal in the resolvent and
continued until the resolvent contains one single literal at \emph{conflict
  level}, which is the decision level of the conflicting clause.
In the implication graph, we start with the conflict node \nolinebreak
\(\kappa\) and follow the edges in reverse direction to determine the next
literal to resolve on.
Considering the trail, we start with the conflicting clause and choose the most
recent propagated literal for resolution.

\begin{example}[Trail-based conflict-driven clause learning]
  \label{ex:cdclimpgraphtrail}
  Consider the situation in \autoref{ex:impgraphtrail}.
  The conflicting clause is \nolinebreak \(C_3 = (\negated{b} \vee \negated{c}
  \vee \negated{d})\) with \(\level(C_3)=2\) and
  \(I = \decided{a}\,\propagated{b}{C_1}\,\decided{c}\,\propagated{d}{C_2}\).
  The most recent unit literal is \nolinebreak \(d\) with reason \nolinebreak
  \(C_2\).
  We resolve \(C_3\) with \(C_2\) on \(d\) and obtain 
  \(C_3 \otimes_{d} C_2 = (\negated{b} \vee \negated{c} \vee \negated{d})
  \otimes_{d} (\negated{c} \vee d) = (\negated{b} \vee
  \negated{c})\).\footnote{After applying a factorization step, \iet, removing
    one \(\negated{c}\) from the resolvent \((\negated{b} \vee \negated{c} \vee
    \negated{c})\).}
  Now \(\level(\negated{b})=1\neq 2=\level(d)=\level(C_3)\), and the clause
  \((\negated{b} \vee \negated{c})\) is learned.
\end{example}

Following the trail in reverse assignment order gives us a deterministic
sequence of resolution steps.
In contrast, when determining the clause to be learned based on the
implication graph, this choice need not be deterministic.

\begin{example}[Implication-graph-based conflict-driven clause learning]  
  Consider the implication graph given in \autoref{ex:impgraphtrail} for clause
  learning.
  The choice of the clauses to resolve need not be deterministic. 
  For instance, we can resolve \nolinebreak \(C_3\) either on \nolinebreak \(d\)
  with \nolinebreak \(C_2\) or on \nolinebreak \(b\) with \nolinebreak \(C_1\).  
  If we choose \nolinebreak \(C_1\), the resolvent \((\negated{a} \vee
  \negated{c} \vee \negated{d})\) contains two literals at conflict level,
  namely \nolinebreak \(\negated{c}\) and \nolinebreak \(\negated{d}\), and a
  second resolution step is needed, whereas by resolving \nolinebreak \(C_3\) with
  \nolinebreak \(C_2\) on \nolinebreak \(d\) first, see
  \autoref{ex:cdclimpgraphtrail}, one resolution step is saved.  
\end{example}

\subsection{Incremental SAT Solving} 
\label{sec:incsat}

The basic idea of incremental \nolinebreak SAT solving is to exploit the
progress made during the search process if similar formulae need be solved.
So, instead of the learned clauses to be discarded, they are retained between the
single \nolinebreak SAT calls.

Hooker \nolinebreak \cite{DBLP:journals/jlp/Hooker93} presented the idea of
incremental \nolinebreak SAT solving in the context of knowledge-based reasoning.
E{\'e}n and S{\"o}rensson \nolinebreak \cite{DBLP:journals/entcs/EenS03} introduced
the concept of \emph{assumptions} for incremental \nolinebreak SAT
solving which fits our needs best.
Assumptions can be viewed as unit clauses added to the formula.
They basically represent a (partial) assignment whose literals remain set to
true during the solving process.
In particular, backtracking does not occur past any assumed literal.

\subsection{Projection}
\label{sec:project}

We are interested in enumerating the models of a propositional
formula \emph{projected} onto a subset of its variables.
To this end we partition the set of variables \(V = X \cup Y\) into the set of
\emph{relevant variables} \nolinebreak \(X\) and the set of \emph{irrelevant
  variables} \nolinebreak \(Y\) and write \nolinebreak \(F(X \cup Y)\) to
express that \nolinebreak \(F\) depends on the variables in \(X \cup Y\).
Accordingly, we decompose the assignment \nolinebreak \(\assign = \assign_X \cup
\assign_Y\) into its relevant part \(\assign_X \colon X \mapsto \Bool\) and its
irrelevant part \(\assign_Y \colon Y \mapsto \Bool\) following the convention
introduced in our earlier work on dual projected model counting \nolinebreak
\cite{DBLP:conf/ictai/MohleB18}.
The main idea of projection onto the relevant variables is to existentially
quantify the irrelevant variables.
The models of \nolinebreak \(F(X \cup Y)\) projected onto \nolinebreak \(X\)
are therefore
\begin{align*}
  \mods(\exists \,Y.\,F(X,Y))
  =\;
  & \{
  \tau \colon X \to \Bool \mid
  \mbox{exists~}\sigma: X\cup Y \rightarrow \Bool\mbox{~with~}\\
  & \qquad\qquad\quad\enspace\,\sigma\bigl(F(X,Y)\bigr) = 1
  \mbox{~~and~~}
  \tau=\sigma_X
  \},
\end{align*}
\noindent
and enumerating all models of \nolinebreak \(F\) without projection is therefore
the special case where 
\(Y = \emptyset\).
The projection of the trail \nolinebreak \(I\) onto the set of
variables \nolinebreak \(X\) is denoted by \nolinebreak \(\project{I}{X}\) and
\(\project{F(X,Y)}{X} \equiv \exists \,Y \, [\,F(X,Y)\,]\).

\begin{example}[Projected models]
  \label{ex:projenum}
  Consider again the formula \nolinebreak \(F\) in \autoref{ex:multenumcdcl}.
  Its unprojected models are \(\mods(F) = \{a\,b\,c\,d,
  a\,b\,c\,\negated{d}, a\,b\,\negated{c}\,d,
  a\,b\,\negated{c}\,\negated{d}\}\). 
  Its models projected onto \(X = \{a, c\}\) are \nolinebreak \(a\,c\) and
  \nolinebreak \(a\,\negated{c}\).
\end{example}

In order to benefit from the efficient methods \nolinebreak SAT solvers execute
on \nolinebreak CNF formulae, we transform an arbitrary formula
\nolinebreak \(F(X,Y)\) into \nolinebreak CNF by, \egt, the Tseitin transformation
\nolinebreak \cite{Tseitin1968complexity}.\footnote{It turns out that in the
  context of dual projected model enumeration also the Plaisted-Greenbaum
  transformation \nolinebreak \cite{DBLP:journals/jsc/PlaistedG86} might be used
although in general it does not preserve the model count.
}
By this transformation, auxiliary variables, also referred to as \emph{Tseitin}
or \emph{internal} variables, are introduced. 
The Tseitin transformation is \emph{satisfiability-preserving}, \iet, a
satisfiable formula is not turned into an unsatisfiable one and, similarly, an
unsatisfiable formula is not turned into a satisfiable one.
The Tseitin variables, which we denote by \nolinebreak \(S\), are defined in terms of
the variables in \nolinebreak \(X \cup Y\), which we call \emph{input variables}.
As a consequence, for each total assignment to the variables in \(X \cup Y\)
there exists one single assignment to the variables in \nolinebreak \(S\) such that the resulting
assignment is a model of \nolinebreak \(F\), and therefore the model count is preserved.  
Due to the introduction of the Tseitin variables, the resulting formula
\(P(X,Y,S) = \tseitin(F(X,Y))\) is not logically equivalent to \nolinebreak \(F(X,Y)\), \iet,
\(\mods(F) \neq \mods(P)\), and the Tseitin transformation is not
\emph{equivalence-preserving}.  
However, the models of \nolinebreak \(P(X,Y,S)\) projected onto the input
variables are exactly the models of \nolinebreak \(F(X,Y)\), and
\begin{equation}
  \label{eq:pf}
  \exists \,S\, [\,P(X,Y,S)\,] \equiv F(X,Y).
\end{equation}
The models of \nolinebreak \(F\) projected onto \nolinebreak \(X\) are accordingly 
given by
\begin{equation}
  \label{eq:modelspf}
  \mods(\exists \,Y,S \,[\,P(X,Y,S)\,]) = \mods(\exists \,Y\, [\,F(X,Y)\,]).
\end{equation}

\subsection{Dual Representation of a Formula}
\label{sec:dual}

We make use of the dual representation of a formula introduced in our earlier
work \nolinebreak \cite{DBLP:conf/ictai/MohleB18}. 
Let \nolinebreak \(F(X,Y)\) and \nolinebreak \(P(X,Y,S)\) be defined as in
\autoref{sec:project}, and
let \(N(X,Y,T) = \tseitin(\negated{F(X,Y)})\) be a \nolinebreak CNF
representation of \nolinebreak \(\negated{F}\), where \nolinebreak \(T\) denotes
the set of Tseitin variables introduced by the transformation, \iet,
\begin{equation}
  \label{eq:nnotf}
  \exists \,T\, [\,N(X,Y,T)\,] \equiv \negated{F(X,Y)}.
\end{equation}
The formulae \nolinebreak \(P(X,Y,S)\) and \nolinebreak \(N(X,Y,T)\) are
a \emph{dual
  representation}\footnote{Referred to as \emph{combined formula pair} of
  \nolinebreak \(F(X,Y)\) in our previous work \nolinebreak
  \cite{DBLP:conf/ictai/MohleB18}.} of the input formula \(F(X,Y)\). 

\begin{example}[Dual formula representation]
  Let \(F(X,Y)=(a\wedge\negated{b})\vee(\negated{a}\wedge b)\) be defined over
  \(X=\{a\}\) and \(Y=\{b\}\) and suppose
  \(\negated{F(X,Y)}=(a\wedge b)\vee(\negated{a}\wedge\negated{b})\).
  A dual representation of \nolinebreak \(F(X,Y)\) consists of the Tseitin
  transformations of \nolinebreak \(F(X,Y)\) and its negation,
  \(P(X,Y,S)=
  (\negated{s_1}\vee a)\wedge
  (\negated{s_1}\vee\negated{b})\wedge
  (s_1\vee\negated{a}\vee b)\wedge
  (\negated{s_2}\vee\negated{a})\wedge
  (\negated{s_2}\vee b)
  (s_2\vee a\vee\negated{b})\wedge
  (s_1\vee s_2)\) and
  \(N(X,Y,T)=
  (\negated{t_1}\vee a)\wedge
  (\negated{t_1}\vee b)\wedge
  (t_1\vee\negated{a}\vee\negated{b})\wedge
  (\negated{t_2}\vee\negated{a})\wedge
  (\negated{t_2}\vee\negated{b})\wedge
  (t_2\vee a\vee b)\wedge
  (t_1\vee t_2)\), respectively, where \(S=\{s_1,s_2\}\) and \(T=\{t_1,t_2\}\).   
\end{example}

For the sake of readability, we also may write \nolinebreak \(F\), \nolinebreak
\(P\), and \nolinebreak \(N\).
Notice that this representation is not unique in general.
Besides that,
\nolinebreak \(P(X,Y,S)\) and \nolinebreak \(N(X,Y,T)\) share
the set of input variables \nolinebreak \(X \cup Y\), and \(S \cap T =
\emptyset\), and 
\begin{equation}
    \exists \,S\,[\,P(X,Y,S) \,]\, \equiv
    \negated{\exists T\,[\,N(X,Y,T)\,]\,},
  \label{eq:pnotn}
\end{equation}

In an earlier work \nolinebreak \cite{DBLP:conf/ictai/MohleB18} we showed that
during the enumeration process a generalization of the following always holds
assuming we first decide variables in \nolinebreak \(X\) and then variables 
in \(Y \cup S\) but never variables in \nolinebreak \(T\):
\begin{equation}
  \left(\,\neg\exists\, T\,[\,\residual{N(X,Y,T)}{I} \,]\,\right)
  \models \left(\,\exists\, S\,[\,P(X,Y,S)|_I\,]\,\right)
  \label{eq:ntop}
\end{equation}
where \nolinebreak \(I\) is a trail over variables in \nolinebreak \(X \cup Y
\cup S \cup T\). 
Obviously, also
\begin{equation}
  \left(\,\exists\, S\,[\,\residual{P(X,Y,S)}{I}\,]\,\right)
  \models \left(\,\negated{\exists}\, T\,[\,\residual{N(X,Y,T)}{I}\,]\,\right),
  \label{eq:pton}
\end{equation}
saying that whenever \nolinebreak \(I\) can be extended to a model of
\nolinebreak \(P\), all extensions of it falsify \nolinebreak \(N\).
This property
is a basic ingredient of our dual model shrinking method.

\section{Dual Reasoning for Model Shrinking} 
\label{sec:shrinking}

In a previous work, we adopted dual reasoning for obtaining partial models
\nolinebreak \cite{DBLP:conf/ictai/MohleB18}.
Basically, we executed \nolinebreak CDCL on the formula under consideration and
its negation simultaneously exploiting the fact that \nolinebreak CDCL is biased
towards detecting conflicts.
Our experiments showed that dual reasoning detects short models.
However, processing two formulae simultaneously turned out to be computationally
expensive.

In another work \nolinebreak \cite{DBLP:conf/sat/MohleSB20} we propose, before
taking a decision, 
to check whether the
current (partial) assignment logically entails the formula under consideration.
We present four flavors of the entailment check, some of which use a
\nolinebreak SAT oracle and rely on dual reasoning.

The method introduced in this work, instead, 
exploits the effectiveness of dual reasoning in
detecting short partial models while avoiding both processing two formulae
simultaneously and oracle calls, which might be computationally expensive.
In essence, we let the enumerator find total models and shrink them by means of
dual reasoning.

Assume our task is to determine the models of a formula 
\nolinebreak \(F(X,Y)\) over the set of
relevant variables \nolinebreak \(X\) and irrelevant variables \nolinebreak
\(Y\) projected
onto \nolinebreak \(X\), and
let \nolinebreak \(P(X,Y,S)\) and \nolinebreak \(N(X,Y,T)\) be \nolinebreak CNF
representations of \nolinebreak \(F\) and \nolinebreak \(\negated{F}\),
respectively, as introduced in \autoref{sec:prelim}.
Obviously, \autoref{eq:modelspf}--\autoref{eq:pton} hold.
Suppose standard \nolinebreak CDCL is executed on \nolinebreak \(P\).
We denote with \nolinebreak \(I\) the trail which ranges over variables in \(X
\cup Y \cup S \cup T\), where \nolinebreak \(S\) and \nolinebreak \(T\) are the
Tseitin variables occurring in \nolinebreak \(P\) and \(N\), respectively.

Now assume a total model \nolinebreak \(I\) of \(P\) is found.
A second \nolinebreak SAT solver is incrementally invoked on \nolinebreak
\(\project{I}{X \cup Y} \wedge N\). 
Since \(\project{I}{X \cup Y} \models F\) and all variables in \nolinebreak \(X
\cup Y\) are assigned,
due to \autoref{eq:pton}, a conflict in
\nolinebreak \(N\) occurs by propagating variables in \nolinebreak \(T\) only. 
If conflict analysis is carried out as described in \autoref{sec:cdclsat}, the learned
clause \nolinebreak \(\negated{\istar}\) contains only negated assumption
literals.\footnote{See also the work by Niemetz et al. \nolinebreak
  \cite{DBLP:conf/fmcad/NiemetzPB14}.} On the one hand, \nolinebreak
\(\negated{\istar}\) represents a 
cause for the conflict in \nolinebreak \(N\). On the other hand, due to
\autoref{eq:ntop}, its negation \nolinebreak \(\istar\) represents a (partial)
model of \nolinebreak \(F\).
More precisely, \nolinebreak \(\istar\) represents all total models of
\nolinebreak \(F\) projected onto \nolinebreak \(X \cup Y\) in which the
variables in \nolinebreak \(X \cup Y\) not occurring in \nolinebreak \(\istar\)
may assume any truth value. 

\begin{example}[Model shrinking by dual reasoning]
  \label{ex:dualshrink}
  Let \(F = (a \vee b) \wedge (c \vee d)\) be a propositional formula over the
  set of variables \(V = \{a, b, c, d\}\).
  Without loss of generalization, suppose we want to enumerate the models of
  \nolinebreak \(F\) projected onto \nolinebreak \(V\).
  Assume a total model \(I = \decided{a}\,\decided{b}\,\decided{c}\,\decided{d}\) has been found. 
  We call a second SAT solver on \nolinebreak \(N \wedge I\), where
  \begin{align*}
    N = \,
    &
      \underbrace{(\negated{t_1} \vee \negated{a})}_{C_1} \wedge
      \underbrace{(\negated{t_1} \vee \negated{b})}_{C_2} \wedge
      \underbrace{(a \vee b \vee t_1)}_{C_3} \wedge\\
    &
      \underbrace{(\negated{t_2} \vee \negated{c})}_{C_4} \wedge
      \underbrace{(\negated{t_2} \vee \negated{d})}_{C_6} \wedge
      \underbrace{(c \vee d \vee t_2)}_{C_6} \wedge\\
    &
      \underbrace{(t_1 \vee t_2)}_{C_7}
  \end{align*}
  is the Tseitin encoding of \(\negated{F} = (\negated{a} \wedge \negated{b})
  \vee (\negated{c} \wedge \negated{d})\) with Tseitin variables
  \(T = \{t_1, t_2\}\).
  The clauses \nolinebreak \(C_1\) to \nolinebreak \(C_3\) encode \nolinebreak
  \((t_1 \leftrightarrow (\negated{a} \wedge \negated{b}))\), the clauses \nolinebreak \(C_4\) to
  \nolinebreak \(C_6\) encode \nolinebreak \((t_2 \leftrightarrow (\negated{c} \wedge \negated{d}))\),
  and \nolinebreak \(C_7\) encodes \nolinebreak \((t_1 \vee t_2)\).

  The literals on \nolinebreak \(I\) are considered assumed variables, annotated
  by, \egt, \nolinebreak \(\assumed{a}\), and
  \(I = \assumed{a}\,\assumed{b}\,\assumed{c}\,\assumed{d}\).
  After propagating \nolinebreak \(\negated{t_1}\) with reason \nolinebreak
  \(C_1\) and \nolinebreak \(\negated{t_2}\) with reason \nolinebreak \(C_4\),
  the clause \nolinebreak \(C_7\) becomes empty.
  The trail is \(I' =
  \assumed{a}\,\assumed{b}\,\assumed{c}\,\assumed{d}\,\propagated{\negated{t_1}}{C_1}\,\propagated{\negated{t_2}}{C_4}\).
  We resolve \nolinebreak \(C_7\) with \nolinebreak \(C_4\) to obtain the clause
  \((t_1 \vee \negated{c})\) which we then resolve with \nolinebreak \(C_1\).
  The resolvent is \((\negated{c} \vee \negated{a})\) containing only
  assumed literals which have no reason in \nolinebreak \(I'\), and thus can
  not be resolved further. 
  Below on the left hand side, the implication graph is visualized, and the
  corresponding resolution steps are depicted on the right hand side.

  \begin{center}
    \begin{tikzpicture}
      \node(a) {\(a\)};
      \node[below = 8pt of a] (b) {\(b\)};
      \node[below = 8pt of b] (c) {\(c\)};
      \node[below = 8pt of c] (d) {\(d\)};
      \node[right = 40pt of a] (t1) {\(\negated{t_1}\)};
      \node[right = 40pt of c] (t2) {\(\negated{t_2}\)};
      \node[right = 100pt of b] (kappa) {\(\kappa\)};
      \draw[-{to[length=5.5mm]}] (a) to node[above, midway] (TextNode) {\(C_1\)} (t1);
      \draw[-{to[length=5.5mm]}] (c) to node[above, midway] (TextNode) {\(C_4\)} (t2);
      \draw[-{to[length=5.5mm]}] (t1) to [bend left=15] node[above, midway] (TextNode) {\(C_7\)} (kappa);
      \draw[-{to[length=5.5mm]}] (t2) to [bend right=15] node[below, midway] (TextNode) {\(C_7\)} (kappa);
      \node (11) at ($(kappa.east)+(60pt,8pt)$) {\((t_1 \vee t_2)\)};
      \node[right = 0.5cm of 11] (12) {\((\negated{t_2} \vee \negated{c})\)};
      \draw (11.south west) -- (12.south east);
      \node[below = 0.3cm of $(11)!0.5!(12)$] (21) {\((t_1 \vee \negated{c})\)};
      \node[right = 0.5cm of 21] (22) {\((\negated{t_1} \vee \negated{a})\)};
      \draw (21.south west) -- (22.south east);
      \node[below = 0.3cm of $(21)!0.5!(22)$] {\((\negated{c} \vee \negated{a})\)};
    \end{tikzpicture}
  \end{center}

  The negation of the clause \nolinebreak \((\negated{c} \vee \negated{a})\),
  \nolinebreak \(c\,a\), is a counter-model of \nolinebreak \(\negated{F}\) and
  hence a
  model of \nolinebreak \(F\).
  In this case, it is also minimal \wrtt the number of literals.
\end{example}

\note
The gain obtained by model shrinking is twofold.
On the one hand, it enables the (implicit) exploration of multiple models in one
pass: \egt, in \autoref{ex:dualshrink}, the model \nolinebreak \(c\,a\) represents
four total models, namely, \(a\,b\,c\,d\), \(a\,b\,c\,\negated{d}\),
\(a\,\negated{b}\,c\,d\), and \(a\,\negated{b}\,c\,\negated{d}\). 
On the other hand, short models result in short blocking clauses
ruling out a larger part of the search space, as mentioned earlier.

\section{Dual Encoding of Blocking Clauses}
\label{sec:dualblock}

Recall that \CHANGED{in our dual model shrinking approach we rely on} \autoref{eq:pnotn}.
If a blocking clause is added to \nolinebreak \(P\) and \nolinebreak \(N\) is
not updated accordingly, then \nolinebreak \(P\) and \nolinebreak \(N\) do not
represent the negation of each other anymore, and \autoref{eq:pnotn} ceases to
hold.
This might lead to multiple model enumerations in the further search\CHANGED{,
  if dual model shrinking is applied}.
This issue can be remediated by adding the shrunken models disjunctively to
\nolinebreak \(N\).
The basic idea is the following.
If a trail \nolinebreak \(I\) satisfies a formula \nolinebreak \(F\),
then it
falsifies its negation \nolinebreak \(\negated{F}\), \iet,
\(F \wedge \negated{I} \equiv \false\) and
\(\negated{(\negated{F} \wedge \negated{I})} = \negated{F} \vee I \equiv
\true\).
To retain \nolinebreak \(\negated{F}\) in \nolinebreak CNF and ensure
\autoref{eq:pnotn}, we propose the following \emph{dual encoding} of the
blocking clauses. 

We denote with \(\tseitin()\) the function which takes as argument an arbitrary
propositional formula and returns its Tseitin transformation. 
For the sake of readability, we write \nolinebreak \(F\), \nolinebreak \(P\),
and \nolinebreak \(N\) as well as their indexed variants instead of \nolinebreak
\(F(X \cup Y)\), \nolinebreak \(P(X \cup Y \cup S)\) and \nolinebreak \(N(X \cup
Y \cup T)\). 
We define
\begin{align}
  P_0 & = \tseitin(F) \label{eq:pzero}\\
  N_0 & = t_0 \wedge \tseitin(t_0\,\leftrightarrow\, \negated{F}). \label{eq:nzero}
\end{align}
Let \nolinebreak \(I_1\) be a trail such that \nolinebreak \(I_1\) evaluates \nolinebreak \(F\) to
\nolinebreak true, \iet, \(I_1 \vdash F\).
A second \nolinebreak SAT solver \nolinebreak \SATtwo is called on \nolinebreak \(\project{I_1}{X \cup Y} \wedge N_0\),
and a conflict is obtained as argued above.
Assume \SATtwo returns the assignment \nolinebreak \(\istar_1 \leq I_1\) such that 
\(\text{\SATtwo}(\project{\istar_1}{X \cup Y},N_0) = \text{UNSAT}\).
Then \nolinebreak \(\negated{\project{\istar_1}{X}}\) is added to \nolinebreak \(P_0\)
obtaining \(P_1 = P_0 \wedge \negated{\project{\istar_1}{X}}\).
To ensure \autoref{eq:pnotn}, we define \(N_1 = \left(t_0 \vee
  t_1\right) \wedge \tseitin\left(t_0\,\leftrightarrow\,\negated{F}\right)
\wedge \tseitin{(t_1\,\leftrightarrow\,\project{\istar_1}{X})}\), and we apply
this encoding inductively as follows.
At the \(n^{\textnormal{th}}\) \nolinebreak step, we have
\begin{align}
  P_n & = P_0 \; \underline{\wedge
  \bigwedge\limits_{i=1}^{n}\negated{\project{\istar_i}{X}}} \label{eq:pn}\\
  N_n & = (t_0 \; \underline{\vee \bigvee\limits_{i=1}^{n}t_i}) \wedge
  \tseitin(t_0\,\leftrightarrow\,\negated{F}) \; \underline{\wedge
  \bigwedge\limits_{i=1}^{n}\tseitin{(t_i\,\leftrightarrow\,\project{\istar_i}{X})}} \label{eq:nn}
\end{align}
where the underlined parts denote the additions to \nolinebreak \(P_0\) and
\nolinebreak \(N_0\).

Let \nolinebreak \(I_{n+1}\) be a trail evaluating \(P_n\) 
to true, \iet, \(I_{n+1} \vdash
P_n\). 
We invoke \nolinebreak \SATtwo on \(\project{I_{n+1}}{X \cup Y} \wedge N_n\)
leading to a conflict as described above.
Assume \nolinebreak \SATtwo returns \nolinebreak \(\istar_{n+1} \leq I_{n+1}\), such that
\(\text{\SATtwo}(\project{\istar_{n+1}}{X \cup Y},N_n) = \text{UNSAT}\).
We add \nolinebreak \(\negated{\project{\istar_{n+1}}{X}}\) to \nolinebreak \(P_n\)
and update \nolinebreak \(N_n\) accordingly.
Now we have\footnote{Notice that here, with a little abuse of notation, by
  ``\(N_n\setminus\{C\}\)'' we refer to the formula resulting from dropping clause \nolinebreak
  \(C\) from formula \nolinebreak \(N_n\).} 
\begin{align}
  P_{n+1} & = P_n \wedge \negated{\project{\istar_{n+1}}{X}} \label{eq:p2} \\
  N_{n+1} & = N_n \setminus \{(t_0 \vee \bigvee\limits_{i=1}^{n}t_i)\} \wedge
  (t_0 \vee \bigvee\limits_{i=1}^{n+1}t_i) \wedge
  \tseitin{(t_{n+1}\,\leftrightarrow\,\project{\istar_{n+1}}{X})}\label{eq:imp2}\\
  \text{where} & \qquad I_{i+1} \vdash P_i ~\text{for}~ 0 \leqslant i \leqslant n
  \nonumber\\
  \text{and} & \qquad \istar_{i+1} \leq I_{i+1} \enspace\text{is \stt}\enspace
  \text{\SATtwo}(\project{\istar_{i+1}}{X \cup Y}, N_i) = \text{UNSAT}.
  \nonumber
\end{align}

\begin{proposition}
  \label{prop:dualblockirredundant}
  Let \nolinebreak \(F(X,Y)\) be an arbitrary propositional formula over the
  relevant variables \nolinebreak \(X\) and the irrelevant variables
  \nolinebreak \(Y\).
  Let \nolinebreak \(F\) and \nolinebreak \(\negated{F}\) be encoded into
  \nolinebreak CNFs \nolinebreak \(P_0\) and \nolinebreak \(N_0\), respectively,
  according to \autoref{eq:pzero} and \autoref{eq:nzero}.
  If for all models found blocking clauses are added to \nolinebreak \(P_0\) and
  \nolinebreak \(N_0\) according to \autoref{eq:pn} and
  \autoref{eq:nn}, then only
  pairwise contradicting models are found, \iet,
  \(\project{\istar_i}{X}\) and \(\project{\istar_j}{X}\) are pairwise
  contradicting for every \(i \neq j\). 
\end{proposition}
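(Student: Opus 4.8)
The plan is to show that for any $i \neq j$, the cubes $\project{\istar_i}{X}$ and $\project{\istar_j}{X}$ have contradicting truth values on at least one shared relevant variable. Without loss of generality assume $i < j$. The key observation is the following invariant: at step $j$, the trail $I_{j+1}$ that yields $\istar_{j+1}$ evaluates $P_j$ to true, i.e. $I_{j+1} \vdash P_j$, and by \autoref{eq:pn} we have $P_j = P_0 \wedge \bigwedge_{k=1}^{j}\negated{\project{\istar_k}{X}}$, so in particular $I_{j+1} \vdash \negated{\project{\istar_i}{X}}$ for every $k \le j$, and hence for $k = i$. Since $\negated{\project{\istar_i}{X}}$ is a clause (the negation of a cube over $X$), $I_{j+1}$ satisfying it means $I_{j+1}$ contains some literal $\ell$ with $\ell \in \negated{\project{\istar_i}{X}}$, i.e. $\compell \in \project{\istar_i}{X}$.

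First I would establish that $\project{\istar_j}{X}$ is the restriction of the trail $I_{j+1}$ to $X$: indeed $\istar_j \le \project{I_j}{X \cup Y}$ (here I am aligning the paper's indexing; one must be careful that the blocking clause added after processing the $j$-th model is $\negated{\project{\istar_j}{X}}$ and that $\istar_j \le I_j$ where $I_j$ evaluates $P_{j-1}$ to true), and $\project{\istar_j}{X}$ picks out the $X$-literals of that sub-assignment, all of which are literals occurring in $I_j$. Thus every literal of $\project{\istar_j}{X}$ is present in the trail from which it was extracted. Then, because that trail falsifies the clause $\negated{\project{\istar_i}{X}}$ would contradict $I_j \vdash P_{j-1}$ when $i \le j-1$ — wait, more precisely the trail satisfies this clause — the trail contains a literal $\ell$ whose complement $\compell$ lies in $\project{\istar_i}{X}$. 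Since $\var(\ell) \in X$ and $\ell$ is on the trail, and $\project{\istar_j}{X}$ collects the $X$-literals on that same trail consistently with it, $\project{\istar_j}{X}$ contains $\ell$ itself (it cannot contain $\compell$, as the trail is a consistent partial assignment). Hence $\ell \in \project{\istar_j}{X}$ and $\compell \in \project{\istar_i}{X}$, so the two cubes are contradicting.

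The remaining subtlety, and the step I expect to be the main obstacle, is handling the off-by-one bookkeeping in the indices — the excerpt's \autoref{eq:pn}, \autoref{eq:p2} use both running index $n$ and $n+1$, and one must pin down exactly which $P_k$ the model $I_{k+1}$ (and thus $\istar_{k+1}$) evaluates to true, and confirm that the blocking clause $\negated{\project{\istar_i}{X}}$ is already a conjunct of $P_j$ whenever $i < j$. This is just an induction on the step counter using the update rules \autoref{eq:p2}, but it has to be stated cleanly. A secondary point to verify is that $\istar_j$, being a sub-assignment of the trail, yields $\project{\istar_j}{X}$ consistent with that trail (no variable gets opposite values), which is immediate since $\istar_j \le \project{I_j}{X \cup Y}$ means $\istar_j$ agrees with $I_j$ wherever defined. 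Once these are in place, the contradiction of the two cubes follows as above, and since $i,j$ were an arbitrary unordered pair, the proposition follows.
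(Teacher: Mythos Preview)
Your argument has a genuine gap. You correctly observe that the full trail \(I_j\) satisfies the blocking clause \(\negated{\project{\istar_i}{X}}\) (since \(I_j \vdash P_{j-1}\) and that clause is a conjunct of \(P_{j-1}\)), so \(I_j\) contains some literal \(\ell\) with \(\compell \in \project{\istar_i}{X}\). But you then claim that \(\project{\istar_j}{X}\) contains \(\ell\) as well, and this does not follow from what you have established. You only know \(\istar_j \leq \project{I_j}{X \cup Y}\), i.e.\ \(\istar_j\) is a \emph{sub}-assignment of the trail; nothing on the primal side forces \(\var(\ell)\) to appear in \(\istar_j\) at all. Your parenthetical ``it cannot contain \(\compell\)'' is correct, but ``does not contain \(\compell\)'' is far weaker than ``contains \(\ell\)''. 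Consistency of \(\project{\istar_j}{X}\) with \(I_j\) is not the issue; completeness with respect to the relevant variables of \(I_j\) is, and you never argue for it.

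The paper closes this gap by working on the dual side, which is precisely the point of the dual blocking-clause encoding. The defining property of \(\istar_j\) is that \(\text{\SATtwo}(\project{\istar_j}{X \cup Y}, N_{j-1}) = \text{UNSAT}\), i.e.\ \(\project{\istar_j}{X \cup Y} \wedge N_{j-1} \equiv \false\). By construction \(N_{j-1}\) is (after projecting out the Tseitin variables) equivalent to \(\negated{F} \vee \bigvee_{k=1}^{j-1}\project{\istar_k}{X}\). Distributing and using \(\project{\istar_j}{X \cup Y} \wedge \negated{F} \equiv \false\) (since \(\istar_j\) is a model of \(F\)) leaves \(\project{\istar_j}{X \cup Y} \wedge \bigvee_{k=1}^{j-1}\project{\istar_k}{X} \equiv \false\), hence \(\project{\istar_j}{X} \wedge \project{\istar_i}{X} \equiv \false\) for each \(i < j\). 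In other words, the reason the shrunken core \(\istar_j\) (and not merely the full trail \(I_j\)) contradicts every earlier \(\istar_i\) is that the earlier models have been disjoined into \(N\), so the unsat core extracted against \(N_{j-1}\) must already refute each of them. Your primal-side argument misses exactly this mechanism.
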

\begin{proof} 
  By construction, \(N_{n} \equiv \negated{F} \vee
  \bigvee_{i=1}^{n} \project{\istar_{i}}{X \cup Y}\) and, given a shrunken model
  \nolinebreak \(\istar_{n+1}\) of \(P_n\),
  \(\project{\istar_{n+1}}{X \cup Y} \wedge N_n \equiv \false\). 
  Furthermore, \(\project{\istar_{n+1}}{X \cup Y} \wedge \negated{F} \equiv \false\).
  We have
  \begin{align*}  
    \false \equiv\ 
    &
      \project{\istar_{n+1}}{X \cup Y} \wedge (\negated{F}  \vee
      \bigvee\limits_{i=1}^n \project{\istar_{i}}{X \cup Y})\\
    =\
    &
      (\project{\istar_{n+1}}{X \cup Y} \wedge \negated{F}) \vee
      (\project{\istar_{n+1}}{X \cup Y} \wedge
      \bigvee\limits_{i=1}^{n} \project{\istar_i}{X \cup Y})\\
    \equiv\
    &
      (\project{\istar_{n+1}}{X \cup Y} \wedge
      \bigvee\limits_{i=1}^{n} \project{\istar_i}{X \cup Y})\\
    \equiv\
    &
      (\project{\istar_{n+1}}{X} \wedge
      \bigvee\limits_{i=1}^{n} \project{\istar_i}{X}),
  \end{align*}
  since \nolinebreak \(\istar_i\) contains only relevant variables.
  Hence,
  \(\project{\istar_{n+1}}{X} \wedge \project{\istar_i}{X} \equiv \false\) for
  \(i = 1, \ldots, n\).
\end{proof}

\note
\autoref{eq:pnotn} always holds:
\begin{equation*}
  \exists S\,[\,P_i(X,Y,S)\,]\, \equiv \negated{\exists T\,[\,N_i(X,Y,T)\,]\,}
  \enspace \text{for all} \enspace 0 \leq i \leq n+1.
  \label{eq:pnnotnn}
\end{equation*}
Consequently, also \autoref{eq:ntop} and \autoref{eq:pton} hold:
\begin{equation*}
  \left(\,\neg\exists\,
  T\,[\,\residual{N_i(X,Y,T)}{I}\,]\,\right) \models
  \left(\,\exists\,S\,[\,\residual{P_i(X,Y,S)}{I}\,]\,\right)
  \enspace
  \text{for all} \enspace 0 \leq i \leq n+1
  \label{eq:nnpn}
\end{equation*}
\begin{equation*}
  \left(\,\exists\, S\,[\,\residual{P_i(X,Y,S)}{I}\,]\,\right)
  \models \left(\,\negated{\exists}\, T\,[\,\residual{N_i(X,Y,T)}{I}\,]\,\right)
  \enspace \text{for all} \enspace 0 \leq i \leq n+1
  \label{eq:pnnn}
\end{equation*}
However, for our usage we may use the implication in the forward direction only
and write \(t_i\,\rightarrow\,\project{\istar_i}{X}\) and
\(t_{n+1}\,\rightarrow\,\project{\istar_{n+1}}{X}\) in \autoref{eq:nn} and
\autoref{eq:imp2} without compromising correctness for the following reason:
the formula \nolinebreak \(N_i\) is called always under \nolinebreak \(I_{i+1}\)
which falsifies all \nolinebreak \(\istar_k\) for \(0 \leq k \leq i\).
Hence, \(\project{\istar_i}{X} \rightarrow t_i\) is always true.
  
\begin{example}[Dual blocking clauses]
  We clarify the proposed encoding by a small example and show that it prevents
  multiple model counts.
  Let our example be \(F = x_1 \vee (x_2 \wedge x_3)\) and assume we have found the model
  \(\textcolor{blue}{\istar_1 = x_1}\).
  Then
  \begin{align*}
    P_1 =\,
    &
    (\negated{s_1} \vee x_2) \wedge
    (\negated{s_1} \vee x_3) \wedge
    (s_1 \vee \negated{x_2} \vee \negated{x_3}) \wedge
    (x_1 \vee s_1) \wedge
    \textcolor{blue}{(\negated{x_1})} ~~~~\\[-3ex]
  \end{align*}
  and
  \begin{align*}
    N_1 =\,
    &
      \underbrace{(\negated{t_1} \vee \negated{x_1})}_{C_1} \wedge
      \underbrace{(\negated{t_1} \vee \negated{x_2})}_{C_2} \wedge
      \underbrace{(t_0 \vee x_1 \vee x_2)}_{C_3} \wedge\\
    &
      \underbrace{(\negated{t_2} \vee \negated{x_1})}_{C_4} \wedge
      \underbrace{(\negated{t_2} \vee \negated{x_3})}_{C_5} \wedge
      \underbrace{(t_2 \vee x_2 \vee x_3)}_{C_6} \wedge\\
    &
      \underbrace{(t_1 \vee t_2 \vee \textcolor{blue}{t_3})}_{C_7} \wedge 
      \,\textcolor{blue}{\underbrace{(\negated{t_3} \vee x_1)}_{C_8}} \wedge
      \textcolor{blue}{\underbrace{(t_3 \vee \negated{x_1})}_{C_9}}
  \end{align*}
  where the unit clause $(\neg{x_1})$, the literal $t_3$ in $C_7$ as well as the
  clauses $C_8$ and $C_9$, all
  emphasized in blue, denote the corresponding additions to \nolinebreak 
  \(P_0\) and \nolinebreak \(N_0\).
  If now we find a total model \(I_2 = \negated{x_1}\,x_2\,x_3\), we obtain a conflict
  in \nolinebreak \(N_1\) by unit propagating variables \nolinebreak \(t_1\),
  \(t_3\), and \(t_3\) 
  only.
  The conflicting clause is 
  \((t_1 \vee t_2 \vee t_3)\).
  The implication graph is depicted below on the left hand side, the corresponding
  resolution steps for conflict analysis below on the right hand side.\\

  \begin{center}
  \begin{tikzpicture}
    \node(x1) {\(\negated{x_1}\)};
    \node[below = 16pt of x1] (x2) {\(x_2\)};
    \node[below = 16pt of x2] (x3) {\(x_3\)};
    \node[right = 30pt of x1] (t2) {\(\negated{t_3}\)};
    \node[right = 30pt of x2] (t0) {\(\negated{t_1}\)};
    \node[right = 30pt of x3] (t1) {\(\negated{t_2}\)};
    \node[right = 30pt of t0] (kappa) {\(\kappa\)};
    \draw[-{to[length=5.5mm]}] (x1) to node[above, midway] (TextNode) {\(C_8\)} (t2);
    \draw[-{to[length=5.5mm]}] (x2) to node[above, midway] (TextNode) {\(C_2\)} (t0);
    \draw[-{to[length=5.5mm]}] (x3) to node[above, midway] (TextNode) {\(C_5\)} (t1);
    \draw[-{to[length=5.5mm]}] (t2) to [bend left=10] node[above, midway] (TextNode) {\(C_7\)} (kappa);
    \draw[-{to[length=5.5mm]}] (t0) to node[above, midway] (TextNode) {\(C_7\)} (kappa);
    \draw[-{to[length=5.5mm]}] (t1) to [bend right=15] node[below, midway] (TextNode) {\(C_7\)} (kappa);
    \node (11) at ($(t2.south east)+(90pt,-5pt)$) {\((t_1 \vee t_2 \vee t_3)\)};
    \node[right = 0.3cm of 11] (12) {\((\negated{t_2} \vee \negated{x_3})\)};
    \draw (11.south west) -- (12.south east);
    \node[below = 0.3cm of $(11)!0.5!(12)$] (21) {\((t_1 \vee t_3 \vee \negated{  x_3})\)};
    \node[right = 0.3cm of 21] (22) {\((\negated{t_1} \vee \negated{x_2})\)};
    \draw (21.south west) -- (22.south east);
    \node[below = 0.3cm of $(21)!0.5!(22)$] (31) {\((t_3 \vee \negated{x_3} \vee \negated{x_2})\)};
    \node[right = 0.3cm of 31] (32) {\((\negated{t_3} \vee x_1)\)};
    \draw (31.south west) -- (32.south east);
    \node[below = 0.3cm of $(31)!0.5!(32)$] {\((\negated{x_3} \vee \negated{x_2} \vee x_1)\)};
  \end{tikzpicture}
  \end{center}

  Conflict analysis returns the clause \((\negated{x_3} \vee \negated{x_2} \vee x_1)\),
  which, after being added to \nolinebreak \(P\), blocks the model \nolinebreak
  \(\negated{x_1}\,x_2\,x_3\), which does not overlap with the previously found model
  \nolinebreak \(x_1\).
\end{example}

\section{Projected Model Enumeration Without Repetition}
\label{sec:pmeirred}

We are given a propositional formula \nolinebreak \(F(X,Y)\) over the set of
\CHANGED{relevant} variables \nolinebreak \(X\) and the set of \CHANGED{irrelevant} variables
\nolinebreak \(Y\), and our task is to enumerate its models projected onto the
variables in \nolinebreak \(X\).
Let \nolinebreak \(P(X,Y,S)\) and \nolinebreak \(N(X,Y,T)\) be 
a dual representation of \nolinebreak \(F\)
according to \autoref{sec:prelim}. 
Obviously, \autoref{eq:modelspf}--\autoref{eq:pton} hold.
Now we call a CDCL-based SAT solver on \nolinebreak \(P\).
Whenever it finds a total model \nolinebreak \(I\) of \nolinebreak \(P\), it is
shrunken by dual reasoning obtaining \nolinebreak \(\istar\) which also
satisfies \nolinebreak \(P\).
The decision level of \nolinebreak \(\istar\) might be significantly smaller
than the one of \nolinebreak \(I\), and backtracking to decision level
\(b = \level(\negated{\project{\istar}{X}})-1\) mimics non-chronological
backtracking in \nolinebreak CDCL.
Notice that \nolinebreak \(\istar\) is used solely for determining the
backtracking level. 
The blocking clause added to \(P\) consists of all decisions with decision level
smaller or equal to \nolinebreak \(b+1\) and propagates after
backtracking.

In \autoref{fig:enumirred}, we consider the case 
with permanent learning of the blocking clauses.  
Let \SATone execute standard \nolinebreak CDCL on \nolinebreak \(P\) and let
\nolinebreak \(I\) denote its trail.
Obviously it finds only total models of \nolinebreak \(P\).
Due to \autoref{eq:modelspf}, these models satisfy \nolinebreak \(F\), too.
Now assume a (total) model \nolinebreak \(I\) of \nolinebreak \(P\) is found. 
A second \nolinebreak SAT solver \nolinebreak \SATtwo is incrementally invoked on
\nolinebreak \(\project{I}{X \cup Y} \wedge N\) with the aim to shrink
\nolinebreak \(I\)
obtaining \nolinebreak \(\istar\) as
described in \autoref{sec:shrinking}.

\begin{figure}
  \setcounter{Line}{1}

  \begin{tabular}{@{}r@{\hspace{.85em}}l@{}}
    \L{} \hspace{-3cm}\begin{minipage}[l]{2.9cm}
      {\hfill\textbf{Input:}\hspace{.75em}~}\\[4.2ex] 
    \end{minipage}
    \begin{minipage}[l]{0.9\textwidth}
      formulae \(P(X,Y,S)\) and \(N(X,Y,T)\) \stt \\
      \(\text{\hspace{4.1em}} \exists S \,[\, P(X,Y,S) \,]\, 
       \equiv \negated{\exists T \,[\, N(X,Y,T) \,]\,}\),\\
       set of variables \(X \cup Y \cup S \cup T\)
       \\[-1ex]
    \end{minipage}
    \N
    \L{} \hspace{-3cm}\begin{minipage}[l]{2.9cm}
      {\hfill\textbf{Output:}\hspace{.75em}~}\\
    \end{minipage}
    \begin{minipage}[l]{0.7\textwidth}
      DNF representation of \(\project{P}{X}\)\\
    \end{minipage} \N
    \L{}
    \kern-1.5em\scalebox{1.05}{\mainalgirredname}\hspace{2.0pt}(\,\(P\),~\(N\)\,) \hspace{2em}
    \C{\(P_0 = \S{CNF}\hspace{2.0pt}(\,F\,)\)}\\
    \L{} \hspace{14.24em} \C{\(N_0 = t_0 \wedge \S{CNF}\hspace{2.0pt}(\,t_0\,\leftrightarrow\,\negated{F}\,\))} \NH
    \L{\theLine} \(I \mdef \emptytrail\) \N
    \stepcounter{Line}
    \L{\theLine} \(\level\levelupd{V \mapsto \infty}\) \N
    \stepcounter{Line}
    \L{\theLine} \(M \mdef \false\) \N
    \stepcounter{Line}
    \L{\theLine} \(i \mdef 0\) \N
    \stepcounter{Line}
    \L{\theLine} \K{forever do} \N
    \stepcounter{Line}
    \L{\theLine} \I \(i \mdef i+1\) \N
    \stepcounter{Line}
    \L{\theLine} \I \(C \mdef\) \S{\unitpropalgirredname}\hspace{2.0pt}(\,\(P\),~\(I\),~\(\level\)\,)   \N
    \stepcounter{Line}
    \L{\theLine} \I \K{if} \(C \neq \false\) \K{then} \N
    \stepcounter{Line}
    \L{\theLine} \I\I \(c \mdef \level(C)\) \N
    \stepcounter{Line}
    \L{\theLine} \I\I \K{if} \(c = 0\) \K{then} \N
    \stepcounter{Line}
    \L{\theLine} \I\I\I \K{return} \(M\)\N
    \stepcounter{Line}
    \L{\theLine} \I\I \K{else} \N
    \stepcounter{Line}
    \L{\theLine} \I\I\I \S{\conflanaalgirredname}\hspace{2.0pt}(\,\(P\),~\(I\),~\(C\),~\(\level\)\,) \N
    \stepcounter{Line}
    \L{\theLine} \I \K{else}  \N
    \stepcounter{Line}
    \L{\theLine} \I\I \K{if} all variables in \(X \cup Y \cup S\) are assigned
    \K{then} \N
    \stepcounter{Line}
    \L{\theLine} \I\I\I \C{\(I\) is total model of \(P\) and \(F\)} \N
    \stepcounter{Line}
    \L{\theLine} \I\I\I \K{if} \(\var(\decs(I)) \cap X = \emptyset\) \K{then} \N
    \stepcounter{Line}
    \L{\theLine} \I\I\I\I \K{return} \(M \vee \project{I}{X}\) \NB
    \stepcounter{Line}
    \L{\theLine} \B \I\I\I \K{else} \NB
    \stepcounter{Line}
    \L{\theLine} \B \I\I\I\I \(\istar \mdef\)
    \S{CSet}\hspace{2.0pt}(\,\(N\),~\(\project{I}{X \cup Y}\)\,) \NB
    \stepcounter{Line}
    \L{\theLine} \B \I\I\I\I \C{\(\istar\) is model of \(\project{F}{X \cup Y}\) and
    conflict set of \(I\) \wrtt \(N\)} \NB 
    \stepcounter{Line}
    \L{\theLine} \B \I\I\I\I \(P \mdef P \wedge \negated{\project{\decs(\istar)}{X}}\) \NB
    \stepcounter{Line}
    \L{\theLine} \B \I\I\I\I \(N \mdef N \setminus \{(t_0 \vee \bigvee_{j=1}^{i-1}t_j)\} \wedge \,\) \NB
    & \B \hspace{8.8em} \((t_0 \vee \bigvee_{j=1}^{i} t_j) \wedge 
     \S{CNF}\hspace{2.0pt}(\,t_i\,\leftrightarrow\,\project{\decs(\istar)}{X}\,)\) \NB
      \stepcounter{Line}
    \L{\theLine} \B \I\I\I\I \(M \mdef M \vee \project{\istar}{X}\) \NB
    \stepcounter{Line}
    \L{\theLine} \B \I\I\I\I \(b \mdef \level(\negated{\project{\istar}{X}})\) \NB
    \stepcounter{Line}
    \L{\theLine} \B \I\I\I\I \S{\backtrackalgname}\hspace{2.0pt}(\,\(I\),~\(b-1\)\,) \N
    \stepcounter{Line}
    \L{\theLine} \I\I \K{else} \N 
    \stepcounter{Line}
    \L{\theLine} \I\I\I \S{\decidealgirredname}\hspace{2.0pt}(\,\CHANGED{\(P\),~\(I\),~\(\level\)}\,)
  \end{tabular}
  \caption{
    Irredundant model enumeration.
    The black lines 1--18 and 27--28 describe \nolinebreak CDCL returning a model if one is found
    and the empty clause otherwise.
    \CHANGED{Notice that for decisions, variables in \nolinebreak \(X\) are prioritized
    over variables in \nolinebreak \((Y \cup S)\) to avoid multiple enumeration
    of projected models.
    Similarly, in line 17 it suffices to check whether no relevant decision
    literals are left on the trail.}
    The blue part, i.e., lines 19--26, represents the extension to model enumeration.
    A second \nolinebreak SAT solver is called incrementally on \nolinebreak
    \(N\)
    assuming the literals on \(\project{I}{X \cup Y}\).
    A conflict occurs by unit propagation only, and \nolinebreak
    \(\project{\istar}{X \cup Y}\)
    is a (partial) model of \nolinebreak \(F\).
    It is encoded as a dual blocking clause, and \nolinebreak \(P\) and
    \nolinebreak \(N\) are updated accordingly.
    \label{fig:enumirred}
  }
\end{figure}

Let \(b\) denote the decision level of \nolinebreak
\(\project{\negated{\istar}}{X}\) and \nolinebreak \(\ell\) be the  literal in
\nolinebreak \(\project{\negated{\istar}}{X}\) with decision level \nolinebreak
\(b\).
We now add the clause \nolinebreak \(\project{\negated{\CHANGED{\decs(\istar)}}}{X}\) to \(P\)
and backtrack to decision level \nolinebreak \(b-1\).
Notice that \nolinebreak \(\project{\negated{\CHANGED{\decs(\istar)}}}{X}\) acts in \nolinebreak
\(P\) as a blocking clause and must not be deleted anytime which might blow up
\nolinebreak \(P\) and slow down \nolinebreak \SATone.
Moreover, the dual encoding of the blocking clause according to
\autoref{sec:dualblock} ensures \autoref{eq:pnotn} on which our
method relies.

In \autoref{sec:mainalg}, we present the main function \mainalgirredname.
Unit propagation (\autoref{sec:unitprop}) and the schema for conflict analysis
(\autoref{sec:confana}) are the same as in 
\nolinebreak CDCL for \nolinebreak SAT.

\subsection{Main Algorithm}
\label{sec:mainalg}

The function \mainalgirredname in \autoref{fig:enumirred} describes the main algorithm.
(Black rows \nolinebreak 1--18 and \nolinebreak 27--28 represent standard
\nolinebreak CDCL \CHANGED{with projection} returning a model if the formula under consideration is
satisfiable and the empty clause otherwise, blue 
rows \noindent 19--26 the rest of the algorithm.)

Initially, the trail \nolinebreak \(I\) is empty, the target \nolinebreak DNF
\nolinebreak \(M\) is
\nolinebreak \(\false\), and all variables are unassigned, \iet, assigned 
decision level \nolinebreak \(\infty\).
Unit propagation is executed until either a conflict occurs or all
variables are assigned a value (line \nolinebreak 7).

If a conflict occurs at decision level zero, the search space has been processed
exhaustively, and the enumeration terminates (lines \nolinebreak 8--11).
If a conflict occurs at a decision level higher than zero, conflict analysis is
executed (line \nolinebreak 13).

If no conflict occurs and all variables are assigned, a total model has been
found \nolinebreak (line \nolinebreak 15).
If no relevant decisions are left on the trail \nolinebreak \(I\), the
\CHANGED{relevant} search
space has been processed exhaustively, the found model is output and the search
terminates (lines \nolinebreak 17--18).
If \(I\) contains a relevant decision, the found model is shrunken \nolinebreak
(line \nolinebreak 20) by means of dual reasoning as described in
\autoref{sec:shrinking}.
It is blocked, and the last relevant decision literal is flipped (lines
\nolinebreak 22--26).
If no conflict occurs and not all variables are assigned, 
a decision is taken (line \nolinebreak 28)\CHANGED{, where the variables in
  \nolinebreak \(X\) are prioritized over the variables in \nolinebreak \(Y
  \cup S\) to avoid enumerating models which only differ in irrelevant and
  Tseitin variables}.  

\begin{figure}[!t]
  \centering
  \begin{tabular}{@{}r@{\hspace{.85em}}l@{}}
    \setcounter{Line}{1}
    \L{} \kern-1.5em\scalebox{1.05}{\unitpropalgirredname}\hspace{2.0pt}(\,\(F\),~\(I\),~\(\level\)\,) \NH  
    \L{\theLine} \K{while} some \(C \in F\) is unit \((\ell)\) under \(I\) \K{do} \N
    \stepcounter{Line}
    \L{\theLine} \I \(I \mdef I\,\ell\) \N
    \stepcounter{Line}
    \L{\theLine} \I \(\level(\ell) \mdef \level(I)\) \N
    \stepcounter{Line}
    \L{\theLine} \I \K{for all} clauses \(D \in F\) containing \(\negated\ell\) \K{do} \N
    \stepcounter{Line}
    \L{\theLine} \I\I \K{if} \(I(D) = \false\) \K{then} \K{return} \(D\) \N
    \stepcounter{Line}
    \L{\theLine} \K{return} \(\false\) \NS
    \setcounter{Line}{1}
    \L{} \kern-1.5em\scalebox{1.05}{\conflanaalgirredname}\hspace{2.0pt}(\,\(F\),~\(I\),~\(C\),~\(\level\)\,) \NH
    \L{\theLine} \I \(D \mdef\) \S{Learn}\hspace{2.0pt}(\,\(I\),~\(C\)\,) \N
    \stepcounter{Line}
    \L{\theLine} \I \(F \mdef F \wedge D\) \N
    \stepcounter{Line}
    \L{\theLine} \I \(\ell \mdef\) literal in \(D\) at decision level \(\level(I)\) \N
    \stepcounter{Line}
    \L{\theLine} \I \(j \mdef \level(D \setminus \{\ell\})\) \N
    \stepcounter{Line}
    \L{\theLine} \K{for all} literals \(k \in I\) with decision level \(> j\) \K{do} \N
    \stepcounter{Line}
    \L{\theLine} \I assign \(k\) decision level \(\infty\) \N
    \stepcounter{Line}
    \L{\theLine} \I remove \(k\) from \(I\) \N
    \stepcounter{Line}
    \L{\theLine} \(I \mdef I\,\ell\) \N
    \stepcounter{Line}
    \L{\theLine} \(\level(\ell) \mdef j\)
  \end{tabular}
  \caption{
    The function \unitpropalgirredname implements unit propagation
    in \nolinebreak \(F\).
    The unit literal \nolinebreak \(\ell\) is assigned the decision level of
    \nolinebreak \(I\). 
    If some clause \nolinebreak \(D \in F\) containing the complement of
    \nolinebreak \(\ell\) becomes falsified, \unitpropalgirredname
    returns \nolinebreak \(D\).
    Otherwise it returns the empty clause \nolinebreak \(\false\) indicating
    that no conflict has occurred.
    The function \conflanaalgirredname is called whenever a clause
    \nolinebreak \(C \in F\) becomes empty under the current assignment.
    It learns a clause \nolinebreak \(D\) starting with the conflicting
    clause \nolinebreak \(C\). 
    The solver then backtracks to the second highest decision level
    \nolinebreak \(j\) in \nolinebreak \(D\) upon which \nolinebreak\(D\)
    becomes unit with unit literal \nolinebreak \(\ell\), and propagates
    \nolinebreak \(\ell\).
    \label{fig:prop_ana}
  }
\end{figure}

\subsection{Unit Propagation}
\label{sec:unitprop}

Unit propagation is described by the function \unitpropalgirredname in
\autoref{fig:prop_ana}.
It takes as input the formula \nolinebreak \(F\), the trail \nolinebreak \(I\),
and the decision level function \nolinebreak \(\delta\).
If a clause \nolinebreak \(C \in F\) is unit under \(I\), its unit literal
\nolinebreak \(\ell\) is propagated, \iet, \nolinebreak \(I\) is extended by
\nolinebreak \(\ell\) (line \nolinebreak 2).
Propagated literals are assigned at the current decision level (line \nolinebreak 3) as is usual
in modern CDCL-based \nolinebreak SAT solvers.
If the resulting trail falsifies some clause \nolinebreak \(D \in F\), this clause
is returned (lines \nolinebreak 4--5).
Otherwise the function returns the empty clause \nolinebreak \(\false\) (line
\nolinebreak 6).

\subsection{Conflict Analysis}
\label{sec:confana}

Conflict analysis is described by the function \conflanaalgirredname in
\autoref{fig:prop_ana}.
It takes as input the formula \nolinebreak \(F\), the trail \nolinebreak \(I\),
the conflicting clause \nolinebreak \(C\), and the decision level function
\nolinebreak \(\level\).
A clause \nolinebreak \(D\) is learned as described in \autoref{sec:learn} and
added to \nolinebreak \(F\) (lines \nolinebreak 1--2).
The second highest decision level \nolinebreak \(j\) in \nolinebreak \(D\) is
determined (lines \nolinebreak 3--4), and the enumerator backtracks
(non-chronologically) to decision level \nolinebreak \(j\).
Backtracking involves unassigning all literals with decision level higher than
\nolinebreak \(j\) (lines \nolinebreak 5--7).
After backtracking, the clause \nolinebreak \(D\) becomes unit with unit literal
\nolinebreak \(\ell\), which is propagated and assigned decision level
\nolinebreak \(j\) (lines \nolinebreak 8--9).

\section{Formalizing Projected Irredundant Model Enumeration}
\label{sec:formalenumirred}

In this section, we provide a formalization of our algorithm presented in
\nolinebreak \autoref{sec:pmeirred}.
Let \nolinebreak \(F(X,Y)\) be a formula defined onto the set of relevant
(input) variables \nolinebreak \(X\) and the set of irrelevant (input) variables
\nolinebreak \(Y\), and assume our task is to enumerate its models projected onto
\nolinebreak \(X\).

Our formalization works on a dual representation of \nolinebreak \(F\), 
given by the formulae \nolinebreak \(P(X,Y,S)\) and \nolinebreak \(N(X,Y,T)\), as introduced
in \autoref{sec:dual}.
So, \nolinebreak \(P(X,Y,S)\) and \nolinebreak \(N(X,Y,T)\) are defined
over the same sets of relevant variables \nolinebreak \(X\) and irrelevant
variables \nolinebreak \(Y\) as well as the disjoint sets of variables
\nolinebreak \(S\) and \nolinebreak \(T\), respectively, which are defined in
terms of the variables in \nolinebreak \(X \cup Y\).
Recall that \autoref{eq:modelspf}--\autoref{eq:pton} hold.
We start by sketching the enumeration process. 
After presenting our calculus, 
we show its working by means of an example before providing a
correctness proof.

The process works as follows.
Let \nolinebreak \(I\) denote the current trail.
Unit propagation is carried out as long as \nolinebreak \(\residual{P}{I}\)
contains a unit literal (rule \UnitName).
If all variables in \(X \cup Y \cup S\) are assigned and no conflict has
occurred, a total model of \nolinebreak \(P\) has been found.
In case there is no decision left on \nolinebreak \(I\), its projection onto
\nolinebreak \(X\) is enumerated, and we are done (\EtopName).
Otherwise, the model \nolinebreak \(I\) is shrunken and blocked by means of the
dual blocking clause encoding (\BtopIrredName).
If a conflict  occurred and there is no decision left on \nolinebreak \(I\), the
process terminates (\EbotName).
Otherwise, conflict analysis is carried out and backtracking occurs
(\BbotName).
If after executing exhaustive unit propagation there are still unassigned
variables in \(X \cup Y \cup S\), a decision need be taken.
We are interested in the models projected onto \nolinebreak \(X\).
To avoid detecting models which differ only in variables \(Y \cup S\), we
first decide variables in \nolinebreak \(X\) (\DecXIrredName), before deciding
variables in \(Y \cup S\) (\DecYSIrredName).

\subsection{Calculus}
\label{sec:calculusenumirred}

We formalize the algorithm presented in \autoref{sec:pmeirred} as a state
transition system with transition relation \nolinebreak \(\transenumirred\).
Non-terminal states are described by tuples \nolinebreak
\(\state{P}{N}{M}{I}{\level}\).
The third element, \nolinebreak \(M\), is a \nolinebreak DSOP formula over
variables in \nolinebreak \(X\).
The fourth element, \nolinebreak \(I\), denotes the trail defined
over variables in \(X \cup Y \cup S \cup T\), and 
\nolinebreak \(\level\)
denotes the decision level function.
The initial state is \nolinebreak
\(\state{P_0}{N_0}{\false}{\emptytrail}{\level_0}\), where \nolinebreak \(P_0\)
and \(N_0\) denote the initial \nolinebreak CNF representations of \nolinebreak
\(F\) and \nolinebreak \(\negated{F}\), respectively, \nolinebreak
\(\emptytrail\) denotes the empty trail, and \nolinebreak \(\level_0 \equiv
\infty\).
The terminal state is given by a \nolinebreak DSOP formula \nolinebreak \(M\),
which is equivalent
to the projection of \nolinebreak \(P\) onto \nolinebreak \(X\).
The transition relation \nolinebreak \(\transenumirred\) is the union of
transition relations \nolinebreak \(\leadstoname{R}\), where
\(\functionname{R}\) is either \EtopName, \EbotName, \UnitName, \BtopIrredName,
\BbotName, \DecXIrredName, or \DecYSIrredName.
The rules are listed in \autoref{fig:calculusenumirred}.

\begin{figure}
  \setlength{\fboxsep}{0pt}
  \noindent\fbox{
    \parbox{0.979\textwidth}{
      ~\\[\vspacemidrule]
      \renewcommand{\arraystretch}{1.2}
      \setlength\tabcolsep{2pt}
      \centering
      \begin{tabular}{@{}ll@{}}
        \Etop\\[\vspacebetweenrules]
        \Ebot\\[\vspacemidrule]
        \midrule\\[\vspaceaftermidrule]
        \Unit\\[\vspacemidrule]
        \midrule\\[\vspaceaftermidrule]
        \BtopIrred\\[\vspacebetweenrules]
        \Bbot\\[\vspacemidrule]
        \midrule\\[\vspaceaftermidrule]
        \DecXIrred\\[\vspacebetweenrules]
        \DecYSIrred\\[\vspacemidrule]
      \end{tabular}
    }
  }
  \caption{
    Rules for projected model enumeration without repetition.
    States are represented as tuples \(\state{P}{N}{M}{I}{\level}\).
    The formulae \nolinebreak \(P(X,Y,S)\) and \nolinebreak \(N(X,Y,T)\) are a
    dual representation of the formula \nolinebreak \(F(X.Y)\) whose models
    projected onto \nolinebreak \(X\) are sought.
    These models are recorded in the initially empty \nolinebreak DNF
    \nolinebreak \(M\).
    The last two elements, \nolinebreak \(I\) and \nolinebreak \(\level\),
    denote the current trail and decision level function, respectively.
    If a model is found or a conflict encountered and the search space has been
    processed exhaustively, the search terminates (rules \nolinebreak \EtopName
    and \nolinebreak \EbotName).
    Otherwise, either the model is shrunken and a dual blocking clause is added (rules
    \nolinebreak \BtopIrredName) or conflict analysis is executed followed by
    non-chronological backtracking (rule \nolinebreak \BbotName).
    If the residual of \nolinebreak \(P\) under the current trail \nolinebreak
    \(I\,J\) contains a unit literal, it is propagated (rule \nolinebreak
    \UnitName).
    If none of the mentioned preconditions are met, a decision is taken.
    Relevant variables are prioritized (rule \nolinebreak \DecXIrredName) over
    irrelevant and internal ones (rule \nolinebreak \DecYSIrredName).
  }
  \label{fig:calculusenumirred}
\end{figure}

\skipbetweenrules

\noindent
\EtopName. \skipafterrulename
All variables are assigned and no conflict in \nolinebreak \(P\) occurred,
hence
the trail \nolinebreak \(I\) is a total model of \nolinebreak \(P\).
It contains no relevant decision
indicating that the relevant search space has been processed exhaustively.
The model projected onto \nolinebreak \(X\) is added to \nolinebreak \(M\), and
the search terminates.
It is sufficient to check for relevant decisions, since flipping an irrelevant
one would result in detecting redundant models projected onto \nolinebreak \(X\).
However, due to the addition of blocking clauses, a conflict would occur,
and checking for relevant decisions essentially saves work.

\skipbetweenrules

\noindent
\EbotName. \skipafterrulename
A conflict at decision level zero has occurred indicating that the search space
has been processed exhaustively.
The search terminates leaving \nolinebreak \(M\) unaltered.
We need to make sure no decision is left on the trail, which in particular includes the irrelevant
ones.
The reason is that after flipping any decision---in particular
also irrelevant and internal ones---the resulting trail might be extended to a model of
\nolinebreak \(P\).

\skipbetweenrules

\noindent
\UnitName. \skipafterrulename
No conflict in \nolinebreak \(P\) occurred, and a clause in \nolinebreak \(P\) is unit under
\nolinebreak \(I\).
Its unit literal \nolinebreak \(\ell\) is propagated and assigned the current
decision level.

\skipbetweenrules

\noindent
\BtopIrredName. \skipafterrulename
All variables are assigned, no conflict in \nolinebreak \(P\) occurred, and
the trail \nolinebreak \(I\)
is a total model of \nolinebreak \(P\). 
It is shrunken as described in \autoref{sec:shrinking} obtaining \nolinebreak
\(\istar\).
The projection \nolinebreak \(m\) of  \nolinebreak \(\istar\) onto \nolinebreak \(X\)
is added to \nolinebreak \(M\).
The clause \nolinebreak \(B\) consisting of the negated decision literals of \nolinebreak \(m\) is
added as a blocking clause to \nolinebreak \(P\).
Its negation \nolinebreak \(\negated{B}\) is added disjunctively to \nolinebreak \(N\),
which is transformed back into \nolinebreak CNF by means of the Tseitin
transformation.\footnote{Notice that although not stated explicitly in favor of
simplifying the presentation, the dual blocking clause encoding introduced in
\autoref{sec:dualblock} may be used (see lines \nolinebreak 22--23 of algorithm
\mainalgirredname in \autoref{fig:enumirred}).} 
The solver backtracks to the second highest decision level in \nolinebreak
\(B\) and propagates \nolinebreak \(\ell\) at the current decision level, \iet,
basically flips the relevant
decision literal with highest decision level.

\skipbetweenrules

\noindent
\BbotName. \skipafterrulename
The current trail falsifies a clause in \nolinebreak \(P\) at a decision level
greater than zero indicating that the search space has not yet been processed
exhaustively.
Conflict analysis returns a clause \nolinebreak \(D\) implied by \nolinebreak \(P\),
which is added to \nolinebreak \(P\) and marked as redundant.
The solver backtracks to the second highest decision level \nolinebreak \(j\) in \nolinebreak
\(D\).
The learned clause \nolinebreak \(D\) becomes unit and its unit literal
\nolinebreak \(\ell\) is
propagated at decision level \nolinebreak \(j\).
In contrast to \EtopName, every decision literal need be flipped, which particularly
applies to irrelevant and internal decision literals.

\skipbetweenrules

\noindent
\DecXIrredName. \skipafterrulename
No conflict has occurred, the residual of \nolinebreak \(P\) under \nolinebreak
\(I\) contains no units, and there is an unassigned relevant literal
\nolinebreak \(\ell\).
The current decision level is incremented to \nolinebreak \(d\), the literal
\nolinebreak \(\ell\) is decided and assigned to decision level \nolinebreak \(d\). 

\skipbetweenrules

\noindent
\DecYSIrredName. \skipafterrulename
No conflict has occurred, and the residual of \nolinebreak \(P\) under \nolinebreak
\(I\) contains no units.
All relevant literals are assigned, and there is an unassigned irrelevant or
internal literal \nolinebreak \(\ell\).
The current decision level is incremented to \nolinebreak \(d\),
\nolinebreak \(\ell\) is decided and assigned decision level \nolinebreak \(d\).

\subsection{Example}
\label{sec:calculusenumirredexample}
The working of our calculus is shown by means of an example.
Consider again \autoref{ex:multenumcdcl} and \autoref{ex:projenum}.
We have
  \begin{equation*}
    F =
    \underbrace{(a \vee c)}_{C_1} \wedge
    \underbrace{(a \vee \negated{c})}_{C_2} \wedge
    \underbrace{(b \vee d)}_{C_3} \wedge
    \underbrace{(b \vee \negated{d})}_{C_4}
  \end{equation*}
  and assume the set of relevant variables is \(X = \{a, c\}\) and the set of
  irrelevant variables is \(Y = \{b, d\}\).
  The formula \nolinebreak \(F\) is already in \nolinebreak CNF, therefore we
  define \(P_0 = F\) and accordingly \(S_0 = \emptyset\). 
  For its negation
  \begin{equation*}
    \negated{F} =
    (\negated{a} \wedge \negated{c}) \vee
    (\negated{a} \wedge c) \vee
    (\negated{b} \wedge \negated{d}) \vee
    (\negated{b} \wedge d)
  \end{equation*}
  we define
  \begin{align*}
    N_0 = \,
    &
      \underbrace{(\negated{t_1} \vee \negated{a})}_{D_1} \wedge
      \underbrace{(\negated{t_1} \vee \negated{c})}_{D_2} \wedge
      \underbrace{(a \vee c \vee t_1)}_{D_3} \wedge\\
    &
      \underbrace{(\negated{t_2} \vee \negated{a})}_{D_4} \wedge
      \underbrace{(\negated{t_2} \vee c)}_{D_5} \wedge
      \underbrace{(a \vee \negated{c} \vee t_2)}_{D_6} \wedge\\
    &
      \underbrace{(\negated{t_3} \vee \negated{b})}_{D_7} \wedge
      \underbrace{(\negated{t_3} \vee \negated{d})}_{D_8} \wedge
      \underbrace{(b \vee d \vee t_3)}_{D_9} \wedge\\
    &
      \underbrace{(\negated{t_4} \vee \negated{b})}_{D_{10}} \wedge
      \underbrace{(\negated{t_4} \vee d)}_{D_{11}} \wedge
      \underbrace{(b \vee \negated{d} \vee t_4)}_{D_{12}} \wedge\\
    &
      \underbrace{(t_1 \vee t_2 \vee t_3 \vee t_4)}_{D_{13}}
  \end{align*}
  with the set of internal variables \(T_0 = \{t_1, t_2, t_3, t_4\}\).
  Assume a lexicographic ordering of the input variables, \iet, \(a \succlex b
  \succlex c \succlex d\), and assume we choose the decision variable according
  to this ordering.  
  The execution steps are depicted in \autoref{fig:calculusenumirredexample}.

  \begin{figure}
    \begin{center}
      \begin{tabular}{cllccc}
        \toprule
        Step & ~Rule & ~\(I\) & \(\residual{P}{I}\) & \(N\) & \(M\)\\
        \midrule
        0
        &
        &
          \(\emptytrail\)
        &
          \(P_0\)
        &
          \(N_0\)
        &
          \(\false\)\\
        \rowcolor{black!10} 1
        &
          \DecXIrredName
        &
          \(\decided{a}\)
        &
          \((b \vee d) \wedge (b \vee \negated{d})\)
        &
          \(N_0\)
        &
          \(\false\)\\
        2
        &
          \DecXIrredName
        &
          \(\decided{a}\,\decided{c}\)
        &
          \((b \vee d) \wedge (b \vee \negated{d})\)
        &
          \(N_0\)
        &
          \(\false\)\\
        \rowcolor{black!10} 3 
        &
          \DecYSIrredName
        &
          \(\decided{a}\,\decided{c}\,\decided{b}\)
        &
          \(\true\)
        &
          \(N_0\)
        &
          \(\false\)\\
        4
        &
          \DecYSIrredName
        &
          \(\decided{a}\,\decided{c}\,\decided{b}\,\decided{d}\)
        &
          \(\true\)
        &
          \(N_0\)
        &
          \(\false\)\\
        \rowcolor{black!10} 5
        &
          \BtopIrredName
        &
          \(\propagated{\negated{a}}{B_1}\)
        &
          \((c) \wedge (\negated{c}) \wedge (b \vee d) \wedge (b \vee \negated{d})\)
        &
          \(N_1\)
        & \(a\)\\
        6
        &
          \UnitName
        &
          \(\propagated{\negated{a}}{B_1}\,\propagated{c}{C_1}\)
        &
          \(() \wedge (b \vee d) \wedge (b \vee \negated{b})\)
        &
          \(N_1\)
        &
          \(a\)\\
        \rowcolor{black!10} 7
        &
          \EbotName
        &
          
        &
          
        &
         
        &
          \(a\)\\
        \bottomrule
      \end{tabular}
    \end{center}
    \caption{Execution trace for \(F = (a \vee c) \wedge (a \vee \negated{c})
      \wedge (b \vee d) \wedge (b \vee \negated{d})\) defined over the set of
      relevant variables \(X = \{a, c\}\) and the set of irrelevant variables
      \(Y = \{b, d\}\) (see also \autoref{ex:multenumcdcl} and
      \autoref{ex:projenum}).}
    \label{fig:calculusenumirredexample}
  \end{figure}

\skipbetweensteps

\noindent
Step 0: \skipafterstepnumber
The initial state is given by the empty trail \nolinebreak \(\emptytrail\), the
\nolinebreak CNF formulae \nolinebreak \(P_0\) and \nolinebreak \(N_0\), and
the empty \nolinebreak DNF formula \nolinebreak \(\false\).

\skipbetweensteps

\noindent
Step 1: \skipafterstepnumber
The formula \nolinebreak \(P_0\) contains no units and there are unassigned
relevant variables.
The preconditions of rule \DecXIrredName are met, and decision \(a\) is taken.

\skipbetweensteps

\noindent
Step 2:
No conflict occurred, \nolinebreak \(\residual{P_0}{I}\) contains no units, and
there are unassigned relevant variables. 
The preconditions of rule \DecXIrredName are met, and \(c\) is decided.

\skipbetweensteps

\noindent
Step 3: \skipafterstepnumber
No conflict occurred, and \nolinebreak \(\residual{P_0}{I}\) contains no units.
All relevant variables are assigned and
there are unassigned irrelevant variables. 
The preconditions of rule \DecYSIrredName are met, and decision \(b\) is taken.
Notice that \nolinebreak \(I\) already satisfies \nolinebreak \(P_0\), but the
solver is not able to detect this fact.

\skipbetweensteps

\noindent
Step 4: \skipafterstepnumber
Again, the preconditions of rule \DecYSIrredName are met, and decision \(d\)
is taken.

\skipbetweensteps

\noindent
Step 5: \skipafterstepnumber
No conflict occurred and all variables are assigned, hence \nolinebreak \(I\) is
a model of \nolinebreak \(P_0\).
It is shrunken following the procedure described in \autoref{sec:shrinking}.
We call a \nolinebreak SAT solver on \(N_0 \wedge I\)
assuming the literals on \nolinebreak \(I\).
A conflict in \nolinebreak \(N_0\) occurs by propagation of variables in
\nolinebreak \(T_0\), and conflict analysis provides us with the shrunken
model \nolinebreak \(a\,b\) of \nolinebreak \(F\).
The resulting implication graph and trail are as follows:

\begin{center}
  \begin{tikzpicture}
    \node(a) {\(a\)};
    \node[below = 8pt of a] (c) {\(c\)};
    \node[below = 8pt of c] (b) {\(b\)};
    \node[below = 8pt of b] (d) {\(d\)};
    \node[right = 50pt of a] (t2) {\(\negated{t_2}\)};
    \node[above = 8pt of t2] (t1) {\(\negated{t_1}\)};
    \node[right = 50pt of b] (t3) {\(\negated{t_3}\)};
    \node[below = 8pt of t3] (t4) {\(\negated{t_4}\)};
    \node[right = 70pt of $(t2)!0.5!(t3)$] (kappa) {\(\kappa\)};
    \draw[-{to[length=5.5mm]}] (a) to [bend left=15] node[above, midway] (TextNode) {\(D_1\)} (t1);
    \draw[-{to[length=5.5mm]}] (a) to node[above, midway] (TextNode) {\(D_4\)} (t2);
    \draw[-{to[length=5.5mm]}] (b) to node[below, midway] (TextNode) {\(D_7\)} (t3);
    \draw[-{to[length=5.5mm]}] (b) to [bend right=15] node[below, midway] (TextNode) {\(D_{11}\)} (t4);
    \draw[-{to[length=5.5mm]}] (t1) to [bend left=15] node[above, midway] (TextNode) {\(D_{13}\)} (kappa);
    \draw[-{to[length=5.5mm]}] (t2) to [bend left=5] node[above, midway] (TextNode) {\(D_{13}\)} (kappa);
    \draw[-{to[length=5.5mm]}] (t3) to [bend right=5] node[below, midway] (TextNode) {\(D_{13}\)} (kappa);
    \draw[-{to[length=5.5mm]}] (t4) to [bend right=15] node[below, midway] (TextNode) {\(D_{13}\)} (kappa);
    \node[right = 10pt of kappa] (i) {\(
      I = \assumed{a}\,\assumed{c}\,\assumed{b}\,\assumed{d}\,\propagated{\negated{t_1}}{D_1}\,\propagated{\negated{t_2}}{D_4}\,\propagated{\negated{t_3}}{D_7}\,\propagated{\negated{t_4}}{D_{10}}\)};
  \end{tikzpicture}
\end{center}  

\noindent
The conflicting clause is \nolinebreak \(D_{13}\).
For conflict analysis, we resolve \nolinebreak \(D_{13}\) with \nolinebreak
\(D_{10}\) and the resolvent with \nolinebreak \(D_7\) followed by resolution with
\nolinebreak \(D_4\) and \nolinebreak \(D_1\).
The obtained clause \nolinebreak \((\negated{b} \vee \negated{a})\) contains
only assumed literals.
The assumptions \nolinebreak \(c\) and \nolinebreak \(d\) do not participate in
the conflict and therefore do not occur in the resulting clause.
Below, the resolution steps are visualized.

\begin{center}
  \begin{tikzpicture}
    \node (11) {\((t_1 \vee t_2 \vee t_3 \vee t_4)\)};
    \node[right = 0.5cm of 11] (12) {\((\negated{t_4} \vee \negated{b})\)};
    \draw (11.south west) -- (12.south east);
    \node[below = 0.3cm of $(11)!0.5!(12)$] (21) {\((t_1 \vee t_2 \vee t_3 \vee \negated{b})\)};
    \node[right = 0.5cm of 21] (22) {\((\negated{t_3} \vee \negated{b})\)};
    \draw (21.south west) -- (22.south east);
    \node[below = 0.3cm of $(21)!0.5!(22)$] (31) {\((t_1 \vee t_2 \vee \negated{b})\)};
    \node[right = 0.5cm of 31] (32) {\((\negated{t_2} \vee \negated{a})\)};
    \draw (31.south west) -- (32.south east);
    \node[below = 0.3cm of $(31)!0.5!(32)$] (41) {\((t_1 \vee \negated{b} \vee \negated{a})\)};
    \node[right = 0.5cm of 41] (42) {\((\negated{t_1} \vee \negated{a})\)};
    \draw (41.south west) -- (42.south east);
    \node[below = 0.3cm of $(41)!0.5!(42)$] (41) {\((\negated{b} \vee \negated{a})\)};
  \end{tikzpicture}
\end{center}

\noindent
The negation of \nolinebreak \((\negated{b} \vee \negated{a})\) is \nolinebreak
\(\istar = a\,b \leq I\) we are looking for.
The first model is \(m_1 =
\project{\istar}{X} = a\) and  
accordingly \(M_1 = M_0 \vee m_1\).
Furthermore, we have \(B_1 = \negated{\decs(m_1}) = (\negated{a})\), hence
\begin{align*}
  P_1 = \,
  &
    P_0 \wedge \underbrace{(\negated{a})}_{B_1} \quad\text{and}\\
  N_1 = \,
  &
    N_0 \vee \underbrace{(a)}_{\negated{B_1}}\\
  =\,
  &
    N_0 \setminus \{(\bigvee\limits_{j=1}^{4} t_j)\} \wedge (\bigvee\limits_{j=1}^{5} t_j) \wedge (t_5 \leftrightarrow a)\\
  =\,
  &
  (\bigwedge\limits_{i=1}^{12} D_i) \wedge
  \underbrace{(\negated{t_5} \vee a)}_{D_{14}} \wedge
  \underbrace{(\negated{a} \vee t_5)}_{D_{15}} \wedge
  \underbrace{(t_1 \vee t_2 \vee t_3 \vee t_4 \vee t_5)}_{D_{16}}
\end{align*}
where \(D_{14} \wedge D_{15} = (t_5 \leftrightarrow a)\) is the Tseitin
transformation of \(m_1\).
The clause \(\negated{B_1}\) is added disjunctively to \nolinebreak \(N_0\).
To retain \nolinebreak \(N\) in \nolinebreak CNF, \nolinebreak \(\negated{B_1}\) is encoded as \((t_5
\leftrightarrow \negated{B_1})\), \(t_5\)
is added to \nolinebreak \(D_{13}\) resulting in \nolinebreak \(D_{16}\) and
\(T_1 = T_0 \cup \{t_5\} = \{t_1, t_2, t_3, t_4, t_5\}\) as described in \autoref{sec:dualblock}.
The clause \nolinebreak \(B_1\) acts in \nolinebreak \(P\) as blocking clause.
The solver backtracks to decision level zero and propagates \nolinebreak
\(\negated{a}\) with reason \nolinebreak \(B_1\).

\skipbetweensteps

\noindent
Step 6: \skipafterstepnumber
The formula \nolinebreak \(\residual{P_1}{I}\) contains two units, \nolinebreak
\(\residual{C_1}{I} = (c)\) and \nolinebreak \(\residual{C_2}{I} =
(\negated{c})\). 
The literal \nolinebreak \(c\) is propagated with reason \nolinebreak \(C_1\).

\skipbetweensteps

\noindent
Step 7: \skipafterstepnumber
The trail falsifies \nolinebreak \(C_2\) and the current decision level is zero.
The preconditions of rule \EbotName are met and the search terminates without
altering \nolinebreak \(M = a\), which represents exactly the models of
\nolinebreak \(F\) projected onto \nolinebreak \(X\), namely \nolinebreak
\(a\,c\) and \nolinebreak \(a\,\negated{c}\).

\subsection{Proofs}
\label{sec:calculusenumirredproofs}

Our proofs are based on the ones provided for our work addressing chronological
\nolinebreak CDCL for model counting \nolinebreak
\cite{DBLP:conf/gcai/MohleB19}, which in turn rely on the proof of
correctness we provided for chronological \nolinebreak CDCL \nolinebreak
\cite{DBLP:conf/sat/MohleB19}. 
The method presented here mainly differs from the former in the following
aspects:
The total models found are shrunken by means of dual reasoning.
It adopts non-chronological \nolinebreak CDCL instead of chronological
\nolinebreak CDCL and accordingly makes use of blocking clauses, which affects
the ordering of the literals on the trail.
In fact, unlike in chronological \nolinebreak CDCL, the literals on the trail
are ordered in ascending order with respect to their decision level, which 
simplifies not only the rules but also the proofs.
Projection in turn adds complexity to some invariants.
In some aspects our proofs are similar to or essentially the same as those in
our former proofs \nolinebreak
\cite{DBLP:conf/sat/MohleB19,DBLP:conf/gcai/MohleB19}. 
However, they are fully worked out to keep them self-contained.

\begin{figure}[!t]
  \setlength{\fboxsep}{0pt}
  \noindent\fbox{
    \parbox{0.979\textwidth}{
      ~\\[\vspacemidrule]
      \renewcommand{\arraystretch}{1.2}
      \setlength\tabcolsep{2pt}
      \centering
      \begin{tabular}{@{}ll@{}}
        \InvDualPN\label{inv:InvDualPN}\\[\vspacebetweeninvs]
        \InvDecs\\[\vspacebetweeninvs]
        \InvImplIIrred\\[\vspacebetweeninvs]
        \InvDSOPIrred\\[\vspacemidrule]
      \end{tabular}
    }
  }
  \caption{
    Invariants for projected model enumeration without repetition.
  }
  \label{fig:invenumirred}
\end{figure}

In order to prove the correctness of our method, we make use of the invariants
listed in \autoref{fig:invenumirred}.
\hyperref[fig:invenumirred]{Invariant \InvDualPNName} in essence is \autoref{eq:pnotn}.
It ensures that
the shrunken model is again a model of \nolinebreak \(P\) projected onto the
input variables, stating
that \nolinebreak \(P\) and \nolinebreak \(N\) projected
onto the input variables \(X \cup Y\) are each other's negation. 
Intuitively, \hyperref[fig:invenumirred]{Invariant \InvDualPNName} holds because the found
models are blocked in \nolinebreak \(P\) and added to its negation \nolinebreak
\(N\).
\hyperref[fig:invenumirred]{Invariants \InvDecsName} and \hyperref[fig:invenumirred]{\InvImplIIrredName} equal
Invariants \nolinebreak (2) and \nolinebreak (3) in our proofs of correctness of
chronological \nolinebreak CDCL \nolinebreak \cite{DBLP:conf/sat/MohleB19} and
model counting by means of chronological \nolinebreak CDCL \nolinebreak
\cite{DBLP:conf/gcai/MohleB19}.
\hyperref[fig:invenumirred]{Invariant \InvImplIIrredName} differs from the latter in that we
need not consider the negation of the DNF \nolinebreak \(M\) explicitly.
The negation of \nolinebreak \(M\) is exactly the conjunction of the blocking 
clauses associated with the found models, and these are added to \nolinebreak
\(P\).
\hyperref[fig:invenumirred]{Invariant \InvImplIIrredName} is needed to show that the literal
propagated after backtracking is implied by the resulting trail.
Its reason is either a blocking clause (rule \BtopIrredName) or a clause learned
by means of conflict analysis (rule \BbotName). 

Our proof is split into several parts.
We start by showing that the invariants listed in \autoref{fig:invenumirred} hold in non-terminal states
(\autoref{sec:calculusenumirredproofsinv}). 
Then we prove that our
procedure always makes progress
(\autoref{sec:calculusenumirredproofsprog}) before showing its
termination (\autoref{sec:calculusenumirredproofsterm}).
We conclude the proof by showing that every total model is found exactly once and that all total
models are detected, \iet, that upon termination \(M \equiv \project{P}{X}\) holds (\autoref{sec:calculusenumirredproofseq}).

\subsubsection{Invariants in Non-Terminal States}
\label{sec:calculusenumirredproofsinv}

\begin{proposition}
  \label{prop:invenumirred}
  \hyperref[fig:invenumirred]{Invariants \InvDualPNName}, \hyperref[fig:invenumirred]{\InvDecsName}, \hyperref[fig:invenumirred]{\InvDSOPIrredName}, and
  \hyperref[fig:invenumirred]{\InvImplIIrredName} hold in non-terminal states. 
\end{proposition}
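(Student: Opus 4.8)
The plan is to prove \autoref{prop:invenumirred} by induction over the length of a derivation starting from the initial state $\state{P_0}{N_0}{\false}{\emptytrail}{\level_0}$: first check that all four invariants hold in the initial state, then show that every rule leading from a non-terminal state to a non-terminal state preserves them. Since rules $\EtopName$ and $\EbotName$ always produce a terminal state (a bare DSOP $M$, not a tuple), only $\UnitName$, $\DecXIrredName$, $\DecYSIrredName$, $\BtopIrredName$ and $\BbotName$ have to be treated in the inductive step. For the base case: $\InvDualPNName$ for $P_0, N_0$ is exactly \autoref{eq:pnotn}, which holds by \autoref{eq:pzero}, \autoref{eq:nzero}, \autoref{eq:pf} and \autoref{eq:nnotf}; $\InvDecsName$ holds because $\decs(\emptytrail) = \emptyset$ and $\level(\emptytrail) = 0$; $\InvDSOPIrredName$ holds because the empty DNF $\false$ is trivially a DSOP; and $\InvImplIIrredName$ holds because $\filter{\emptytrail}{\leq n} = \emptytrail$ for every $n$.

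For the inductive step I would argue invariant by invariant. Invariant $\InvDualPNName$ mentions only $P$ and $N$, so it is untouched by $\UnitName$, $\DecXIrredName$ and $\DecYSIrredName$; under $\BbotName$ we have $P \leadsto P \wedge \redundant{D}$ with the precondition $P \models D$, hence $P \wedge D \equiv P$ and $N$ unchanged; under $\BtopIrredName$, $P$ gains the blocking clause $B$ while $N$ gains $\negated{B}$ disjunctively and is re-$\tseitin$ised, and preservation here is exactly the single-step specialisation of the computation in \autoref{prop:dualblockirredundant} and the note following it. For $\InvDecsName$ the relevant facts are that the trail is kept in ascending decision-level order, that $\UnitName$ appends a propagated literal at the current level (leaving $\decs(I)$ and $\level(I)$ unchanged), that $\DecXIrredName$ and $\DecYSIrredName$ append a decision literal at level $\level(I)+1$ (extending $\level(\decs(I))$ by exactly $\{\level(I)+1\}$), and that $\BbotName$ and $\BtopIrredName$ first remove all trail literals above a level $j$ (resp. $b$) — by the ascending-order property and the induction hypothesis these are exactly the decisions of levels $j+1,\dots,\level(I)$ — and then append one propagated literal at level $j$ (resp. $b$), re-establishing $\level(\decs(\cdot)) = \{1,\dots,\level(\cdot)\}$. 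Invariant $\InvDSOPIrredName$ is changed only by $\BtopIrredName$, where $M \leadsto M \vee m$ with $m = \project{\istar}{X}$ a cube; disjointness with every cube already in $M$ is \autoref{prop:dualblockirredundant} again, using that by $\InvDualPNName$ the shrunken $\istar$ extends to a model of the current $P$ (via \autoref{eq:ntop}) and therefore satisfies every previously added blocking clause $\negated{\decs(m_i)}$, so $m$ and $m_i$ disagree on some relevant variable.

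The technically heaviest part is $\InvImplIIrredName$. Under $\UnitName$, the new literal $\ell$ has reason $C \in P$ with $\residual{C}{I} = \{\ell\}$, whose other literals are all at level $\le \level(\ell) = \level(I)$, so $P \wedge \filter{I}{\leq n} \models \ell$ for $n \ge \level(\ell)$; combining with the induction hypothesis $P \wedge \decsf{\leq n}(I) \models \filter{I}{\leq n}$ gives the claim, and for $n < \level(\ell)$ nothing changes. Under $\DecXIrredName$ and $\DecYSIrredName$, for $n$ below the new level the claim is the induction hypothesis, and for $n$ at or above it the newly added decision literal is simply adjoined to both sides of $\models$. Under $\BbotName$ and $\BtopIrredName$, after backtracking the trail $J$ is a prefix of the old $I$ with levels renamed downward, so it still satisfies the invariant, and the newly propagated literal $\ell$ with reason $D$ (resp. $B$) is entailed because $P \models D$ (resp. $B$ is literally a clause of the new $P$) and $\residual{D \setminus \{\ell\}}{J} = \false$, whence $P \wedge J \models \ell$ and the induction hypothesis finishes the case.

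The main obstacle is the bookkeeping around non-chronological backtracking in $\BbotName$ (and analogously in $\BtopIrredName$): I must argue that the resulting trail $J\propagated{\ell}{D}$ is well formed and that $\residual{D \setminus \{\ell\}}{J} = \false$, i.e. that the second-highest level occurring in the learned clause is exactly $\level(J)$ and that all of its non-asserting literals survive the removal of the suffix $K$. This rests on the ascending-level ordering of the trail together with the standard property of conflict analysis that the learned clause contains a unique literal at conflict level whose complement lies on the trail; for $\BtopIrredName$ one additionally uses that $B = \negated{\decs(m)}$ consists precisely of the negated relevant decision literals, so that after deleting $K$ exactly the asserting literal of $B$ is left unassigned. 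I expect to reuse the corresponding lemmas from the chronological-CDCL correctness proofs \cite{DBLP:conf/sat/MohleB19,DBLP:conf/gcai/MohleB19} to discharge most of this argument, with the only genuinely new content being the projection-related reasoning in $\InvDualPNName$ and $\InvDSOPIrredName$.
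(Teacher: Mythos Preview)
Your proposal is correct and follows essentially the same inductive strategy as the paper: induction on the number of rule applications, checking that each of $\UnitName$, $\BtopIrredName$, $\BbotName$, $\DecXIrredName$, $\DecYSIrredName$ preserves all four invariants, with $\InvDSOPIrredName$ reduced to \autoref{prop:dualblockirredundant} and the $\InvImplIIrredName$ case for the backtracking rules discharged via the asserting-clause structure of $B$ resp.\ $D$. The only differences are organisational (you proceed invariant-by-invariant while the paper proceeds rule-by-rule) and one small slip---levels in $J$ are \emph{not} ``renamed downward'' after non-chronological backtracking, they are simply preserved---but your actual argument (that $J$ is a prefix of $I$ and hence inherits the invariant, and that the new $\ell$ is entailed because its reason is in the post-state $P$ with all other literals falsified by $J$) does not depend on any renaming and goes through unchanged.
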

\begin{proof}
  The proof is carried out by induction over the number of rule applications.
  Assuming \hyperref[fig:invenumirred]{Invariants \InvDualPNName} to \nolinebreak \hyperref[fig:invenumirred]{\InvImplIIrredName}
  hold in a non-terminal state \(\state{P}{N}{M}{I}{\level}\), we show that they
  are met after the transition to another non-terminal state for all rules.

  \skipbetweenrules


  \noindent
  \underline{\UnitName} 
  \vspaceafterrulename
  
  \noindent
  \textit{\hyperref[fig:invenumirred]{Invariant \InvDualPNName}:}  \skipafterinvname
  Neither \nolinebreak \(P\) nor \nolinebreak \(N\) are altered, hence \hyperref[fig:invenumirred]{Invariant
  \InvDualPNName} holds after the application of rule \UnitName.

  \skipbetweeninvs
  
  \noindent
  \textit{\hyperref[fig:invenumirred]{Invariant \InvDecsName}:} \skipafterinvname
  The trail \nolinebreak \(I\) is extended by a literal \nolinebreak \(\ell\).
  We need to show that \nolinebreak \(\ell\) is not a decision literal.
  Only the case where \(a > 0\) need be considered, since at decision level zero
  all literals are propagated.
  There exists a clause \(C \in P\) \stt \(\residual{C}{I} = \{\ell\}\).
  Now, \(a = \level(I)\), \iet, there is already a literal \(k \neq \ell\)
  on \nolinebreak \(I\) with \(\level(k) = a\).
  From this it follows that \nolinebreak \(\ell\) is not a decision literal.
  The decisions remain unchanged, and \hyperref[fig:invenumirred]{Invariant\InvDecsName} holds
  after applying rule \UnitName.

  \skipbetweeninvs

  \noindent
  \textit{\hyperref[fig:invenumirred]{Invariant \InvImplIIrredName}:} \skipafterinvname
  Due to \(\residual{C}{I} = \{\ell\}\), 
  \(P \wedge \decsf{\leq n}(I) \models \negated{(C \setminus \{\ell\})}\).
  Since \nolinebreak \(C \in P\), also
  \(P \wedge \decsf{\leq n}(I) \models C\).
  Modus ponens gives us
  \(P \wedge \decsf{\leq n}(I) \models \filter{I}{\leq n}\).
  Hence,
  \(P \wedge \decsf{\leq n}(I\ell) \models \filter{I\ell}{\leq n}\),
  and \hyperref[fig:invenumirred]{Invariant \InvImplIIrredName} holds after executing rule
  \UnitName.

  \skipbetweeninvs

  \noindent
  \textit{\hyperref[fig:invenumirred]{Invariant \InvDSOPIrredName}:} \skipafterinvname
  Due to the premise, \nolinebreak \(M\) is a \nolinebreak DSOP.
  It is not altered by rule \UnitName and after its application is therefore
  still a \nolinebreak DSOP.
 
  \skipbetweenrules


  \noindent
  \underline{\BtopIrredName} 
  \vspaceafterrulename

  \noindent
  \textit{\hyperref[fig:invenumirred]{Invariant \InvDualPNName}:}  \skipafterinvname
  We have
  \(\exists \,S\,[\,P(X,Y,S) \,]\, \equiv
  \negated{\exists \,T\,[\,N(X,Y,T)\,]\,}\)
  and we need to show 
  \(\exists \,S\,[\,(P \wedge B)(X,Y,S) \,]\, \equiv
  \negated{\exists \,T\,[\,O(X,Y,T)\,]\mbox{,}\,}\)
  where \(B = \negated{\decs(m)}\) and \(O = \tseitin(N \vee \negated{B})\)
  and \(m = \project{\istar}{X}\) is a model of \nolinebreak \(P\) projected
  onto \nolinebreak \(X\).
  Since we have that
  \(\exists \,T\,[\,O(X,Y,T)\,] \equiv \exists \,T\,[\,(N \vee
  \negated{B})(X,Y,T)\,]\)
  and furthermore
  \(\negated{\exists \,T\,[\,(N \vee \negated{B})(X,Y,T)\,]\,} \equiv  
  \forall \,T \,[\,(\negated{N} \wedge B)(X,Y,T)\,]\),
  we reformulate the claim as
  \(\exists \,S\,[\,(P \wedge B)(X,Y,S) \,]\, \equiv
  \forall \,T \,[(\negated{N} \wedge B)(X,Y,T)\,]\).
  Together with
  \(\exists \,S\,[\,P(X,Y,S)\,] \equiv
  \forall \,T\,[\,\negated[N(X,Y,T)\,]\)
  and observing that \nolinebreak \(B\) contains no variable in \nolinebreak
  \(S \cup T\), the claim holds.

  \skipbetweeninvs
  
  \noindent
  \textit{\hyperref[fig:invenumirred]{Invariant \InvDecsName}:} \skipafterinvname
  We show that the decisions remaining on the trail are unaffected and that
  no new decision is taken, \iet, \nolinebreak \(\ell\) in the post state is not
  a decision. 
  It is sufficient to consider the case where \(\level(I) > 0\).
  Now, \(J = \filter{I}{\leq b}\) by the definition of \nolinebreak
  \(J\), and the decisions on \nolinebreak \(J\) are not affected by rule
  \nolinebreak \BtopIrredName.
  We have \(\level(B \setminus \{\ell\}) = b = \level(J)\) and \(\level(B) =
  b+1\).
  Since relevant decisions are prioritized, also
  \(B = \negated{\decsf{\leq b+1}(\project{I}{X})} = \negated{\decsf{\leq b+1}(I)}\).
  By the induction hypothesis, 
  there exists exactly one
  decision literal for each decision level and in particular in \nolinebreak \(B\). 
  Since \(\ell \in B\), we have \(\negated{\ell} \in \decs(I)\).
  Precisely, \(\negated{\ell} \in K\), and \(\negated{\ell}\) is unassigned upon
  backtracking.
  Due to the definition of \nolinebreak \(B\), there exists a literal \(k \in B\)
  where \(k \neq \ell\) such that \(\level(k) = b\), \iet, \(k \in J\), hence
  \nolinebreak \(k\) precedes \nolinebreak \(\ell\) on the resulting trail.
  By the definition of the blocks on the trail, \nolinebreak \(\ell\) is not a
  decision literal.
  Since the decisions on \nolinebreak \(J\) are unaffected, as
  argued above, \hyperref[fig:invenumirred]{Invariant \InvDecsName} is met.

  \skipbetweeninvs

  \noindent
  \textit{\hyperref[fig:invenumirred]{Invariant \InvImplIIrredName}:} \skipafterinvname
  We need to show that
  \(P \wedge \decsf{\leq n}(J\,\ell) \models \filter{(J\,\ell)}{\leq n}\)
  for all \nolinebreak \(n\).
  First notice that the decision levels of the literals in \nolinebreak \(J\) do
  not change by applying rule \nolinebreak \BtopIrredName.
  Only the decision level of the variable of \nolinebreak \(\ell\) is decremented from
  \nolinebreak \(b+1\) to \nolinebreak \(b\).
  It also stops being a decision.
  Since \(\level(J\,\ell) = b\), we can assume \(n \leq b\).
  Observe that
  \(P \wedge \decsf{\leq n}(J\,\ell) \equiv P \wedge \decsf{\leq n}(J)\),
  since \nolinebreak \(\ell\) is not a decision in \nolinebreak
  \(J\,\ell\) and \(\filter{I}{\leq b} = J\) and thus
  \(\filter{I}{\leq n} = \filter{J}{\leq n}\) by definition.
  Now the induction hypothesis is applied and we get
  \(P \wedge \decsf{\leq n}(J\,\ell) \models \filter{I}{\leq n}\).
  Again using \(\filter{I}{\leq n} = \filter{J}{\leq n}\) this almost closes the
  proof except that we are left to prove
  \(P \wedge \decsf{\leq b}(J\,\ell) \models \ell\) as \nolinebreak \(\ell\) has
  decision level \nolinebreak \(b\) in \nolinebreak \(J\,\ell\) after applying the
  rule and thus \nolinebreak \(\ell\) disappears in the proof obligation for
  \nolinebreak \(n < b\).
  To see this notice that \(P \wedge \negated{B} \models \filter{I}{\leq b+1}\)
  using again the 
  induction hypothesis for \nolinebreak \(n = b+1\), and recalling that relevant 
  decisions are prioritized, \iet, \nolinebreak \(\filter{I}{\leq b+1}\)
  contains only relevant decisions, and
  \(\negated{B} = \decs(\project{\istar}{X}) = \decsf{\leq b+1}(I)\).
  This gives 
  \(P \wedge \negated{\decsf{\leq b}(J)} \wedge \negated{\ell} \models
    \filter{I}{\leq b+1}\) and thus
  \(P \wedge \negated{\decsf{\leq b}(J)} \wedge \negated{\filter{I}{\leq b+1}}
  \models \ell\) by conditional contraposition.
  Therefore, \hyperref[fig:invenumirred]{Invariant \InvImplIIrredName} holds. 
  
  \skipbetweeninvs
  
  \noindent
  \textit{\hyperref[fig:invenumirred]{Invariant \InvDSOPIrredName}:} \skipafterinvname
  We assume that \nolinebreak \(M\) is a \nolinebreak DSOP and need to show that
  \(M \vee m\) is also a \nolinebreak DSOP.
  Due to the use of the dual blocking clause encoding,
  \autoref{prop:dualblockirredundant} holds, and \hyperref[fig:invenumirred]{Invariant \InvDSOPIrredName}
  is met after executing \BtopIrredName.
  
  \skipbetweenrules


  \noindent
  \underline{\BbotName}
  \vspaceafterrulename

  \noindent
  \textit{\hyperref[fig:invenumirred]{Invariant \InvDualPNName}:}  \skipafterinvname
  We have
  \(\exists \,S\,[\,P(X,Y,S) \,]\, \equiv
  \negated{\exists \,T\,[\,N(X,Y,T)\,]\,}\),
  and we need to show that
  \(\exists \,S\,[\,(P \wedge D)(X,Y,S) \,]\, \equiv
  \negated{\exists \,T\,[\,N(X,Y,T)\,]\,}\).
  By the premise, \(P \models D\), hence \(P \wedge D \equiv P\).
  Now 
  \(\exists \,S\,[\,(P \wedge D)(X,Y,S) \,]\, \equiv
    \exists \,S\,[\,P(X,Y,S) \,]\, \equiv
    \negated{\exists \,T\,[\,N(X,Y,T)\,]\,}\),
  and \hyperref[fig:invenumirred]{Invariant \InvDualPNName} holds.

  \skipbetweeninvs

  \noindent
  \textit{\hyperref[fig:invenumirred]{Invariant \InvDecsName}:} \skipafterinvname
  We have \(J \leq I\), hence the decisions on \nolinebreak \(J\) remain
  unaltered.
  Now we show that \nolinebreak \(\ell\) is not a decision literal.
  As in the proof for rule \nolinebreak \UnitName, it is sufficient to consider
  the case where \nolinebreak \(j > 0\).
  There exists a clause \nolinebreak \(D\) where \nolinebreak \(P \models D\)
  such that \(\level(D) > 0\) and a literal \nolinebreak \(\ell \in D\) for which
  \nolinebreak \(\residual{\ell}{K} = \false\) and \nolinebreak \(\negated{\ell}
  \in K\), hence \nolinebreak \(\ell\) is unassigned during backtracking.
  Furthermore, there exists a literal \nolinebreak \(k \in D\) where \(k \neq
  \ell\) and such that \(\level(k) = j\) which precedes \nolinebreak \(\ell\) on
  the trail \nolinebreak \(J\,\ell\).
  Therefore, following the argument in rule \nolinebreak \UnitName, the literal
  \(\ell\) is not a decision literal.
  Since the decisions remain unchanged, \hyperref[fig:invenumirred]{Invariant \InvDecsName}
  holds after applying rule \nolinebreak \BbotName.

  \skipbetweeninvs
  
  \noindent
  \textit{\hyperref[fig:invenumirred]{Invariant \InvImplIIrredName}:} \skipafterinvname
  Let \nolinebreak \(n\) be arbitrary but fixed.
  Before executing \nolinebreak \BbotName, we have
  \(P \wedge \decsf{\leq n}(I) \models \filter{I}{\leq n}\).
  We need to show that
  \(P \wedge \decsf{\leq n}(J\,\ell) \models \filter{(J\,\ell)}{\leq n}\).
  Now, \(I = J\,K\) and \(J < I\), \iet,
  \(P \wedge \decsf{\leq n}(J) \models \filter{J}{\leq n}\).
  From \(j = \level(D \setminus \{\ell\}) = \level(J)\) we get
  \(\residual{D}{J} = \{\ell\}\).
  On the one hand,
  \(P \wedge \decsf{\leq n}(J) \models \negated{(D \setminus \{\ell\})}\),
  and on the other hand
  \(P \wedge \decsf{\leq n}(J) \models D\).
  Therefore, by modus ponens,
  \(P \wedge \decsf{\leq n}(J) \models \ell\).
  Since \nolinebreak \(\ell\) is not a decision literal, as shown above,
  \(P \wedge \decsf{\leq n}(J) \equiv P \wedge \decsf{\leq n}(J\,\ell)\) and
  \(P \wedge \decsf{\leq n}(J\,\ell) \models J\,\ell\), and  \hyperref[fig:invenumirred]{Invariant 
  \InvImplIIrredName} holds after applying rule \nolinebreak \BbotName.

  \skipbetweeninvs
  
  \noindent
  \textit{\hyperref[fig:invenumirred]{Invariant \InvDSOPIrredName}:} \skipafterinvname
  The \nolinebreak DSOP \(M\) remains unaltered, and
  \hyperref[fig:invenumirred]{Invariant \InvDSOPIrredName} still
  holds after executing rule \BbotName.

  \skipbetweenrules


  \noindent
  \underline{\DecXIrredName}
  \vspaceafterrulename

  \noindent
  \textit{\hyperref[fig:invenumirred]{Invariant \InvDualPNName}:}  \skipafterinvname
  Both \nolinebreak \(P\) and \nolinebreak \(N\) remain unaltered, hence
  \hyperref[fig:invenumirred]{Invariant \nolinebreak \InvDualPNName} still holds after executing rule
  \nolinebreak \DecXIrredName.

  \skipbetweeninvs

  \noindent
  \textit{\hyperref[fig:invenumirred]{Invariant \InvDecsName}:} \skipafterinvname
  The literal \nolinebreak \(\ell\) is a decision literal by definition.
  It is assigned decision level \(d = \level(I) + 1\).
  Since \(\ell \in \decs(I\,\ell)\), we have
  \(\level(\decs(I\,\ell)) = \{1, \ldots, d\}\), and \hyperref[fig:invenumirred]{Invariant
  \InvDecsName} holds after applying rule \nolinebreak \DecXIrredName.

  \skipbetweeninvs
  
  \noindent
  \textit{\hyperref[fig:invenumirred]{Invariant \InvImplIIrredName}:} \skipafterinvname
  Le \nolinebreak \(n\) be arbitrary but fixed.
  Since \nolinebreak \(\ell\) is a decision literal, we have
  \(P \wedge \decsf{\leq n}(I\,\ell) \equiv
  P \wedge \decsf{\leq n}(I) \wedge \ell \models
  \filter{I}{\leq n} \wedge \ell \equiv \filter{(I\,\ell)}{\leq n}\).
  Hence, \hyperref[fig:invenumirred]{Invariant \InvImplIIrredName} holds after applying rule
  \nolinebreak \DecXIrredName.

  \skipbetweeninvs
  
  \noindent
  \textit{\hyperref[fig:invenumirred]{Invariant \InvDSOPIrredName}:} \skipafterinvname
  The \nolinebreak DSOP \nolinebreak \(M\) remains unaltered by rule
  \DecXIrredName, hence after applying rule \nolinebreak \DecXIrredName
  \hyperref[fig:invenumirred]{Invariant \InvDSOPIrredName} still holds . 

  \skipbetweenrules


  \noindent
  \underline{\DecYSIrredName.}
  \vspaceafterrulename

  The proofs of \hyperref[fig:invenumirred]{Invariants \InvDualPNName}, \hyperref[fig:invenumirred]{\InvDecsName}, \hyperref[fig:invenumirred]{\InvDSOPIrredName}, and
  \hyperref[fig:invenumirred]{\InvDSOPIrredName} are identical to 
  the ones for rule \nolinebreak \DecXIrredName.
\end{proof}

\subsubsection{Progress}
\label{sec:calculusenumirredproofsprog}

Our method can not get caught in an endless loop, as shown next.

\begin{proposition}
  \label{prop:irredundantprogress}
  \mainalgirredname always makes progress, \iet, in every non-terminal state a rule
  is applicable. 
\end{proposition}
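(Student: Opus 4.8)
The proof goes by exhaustive case analysis on the structure of a non-terminal state $\state{P}{N}{M}{I}{\level}$. The guiding idea is that the preconditions of the seven rules, taken together, cover every situation that can arise. First I would split on whether $\residual{P}{I} = \false$, i.e., whether a clause of $P$ is currently falsified. If yes, let $C \in P$ be a conflicting clause; split further on $\level(C)$. If $\level(C) = 0$, the precondition of \EbotName is satisfied and that rule applies. If $\level(C) > 0$, I must exhibit a clause $D$ with $P \models D$, a literal $\ell \in D$, a split $JK = I$, etc., so that \BbotName fires — this is essentially the claim that ordinary conflict analysis succeeds, and I would invoke the standard CDCL resolution argument (read off the implication graph / trail), appealing to Invariant \InvDecsName to guarantee there is exactly one decision literal per level and hence a well-defined $1$-UIP clause $D$ whose second-highest level is $j = \level(J)$. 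There is a subtlety here: at decision level $> 0$ but with no decision literal strictly inside the conflicting level one must still be able to learn; but \InvDecsName rules this out, so the usual argument goes through.

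**Continuing the case split.** If $\residual{P}{I} \neq \false$, I next check whether $\residual{P}{I}$ contains a unit clause. If so, \UnitName applies directly (with $a \defas \level(I)$). If $\residual{P}{I}$ contains no unit, split on whether all variables in $X \cup Y \cup S$ are assigned. If not, then either some relevant variable in $X$ is unassigned — in which case \DecXIrredName applies (pick any unassigned $\ell$ with $\var(\ell) \in X$; its level is $\infty$ by definition of "unassigned") — or all of $X$ is assigned but some variable in $Y \cup S$ is unassigned, in which case \DecYSIrredName applies (its extra precondition $\unassvars{I}{X} = \emptyset$ is exactly this case). These two sub-cases are clearly exhaustive once $X \cup Y \cup S$ is not fully assigned and $\residual{P}{I} \neq \false$.

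**The remaining case: a total model.** If $\residual{P}{I} \neq \false$, no unit is present, and all of $X \cup Y \cup S$ is assigned, then $I$ is a total model of $P$. Here I split on whether $\var(\decs(I)) \cap X = \emptyset$. If yes, \EtopName applies with $m \defas \project{I}{X}$. If no, then some relevant decision literal is on the trail; I must show \BtopIrredName fires, i.e., that the dual shrinking step produces the required objects. By Invariant \InvDualPNName (which is \autoref{eq:pnotn}) and \autoref{eq:pton}, calling a SAT solver on $N$ under the assumptions $\project{I}{X \cup Y}$ yields a conflict obtained by propagating only $T$-variables; conflict analysis then returns a clause over negated assumption literals whose negation is some $\istar \leq \project{I}{X \cup Y}$ with $N \wedge \istar \implup \false$. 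Set $m \defas \project{\istar}{X}$ and $B \defas \negated{\decs(m)}$. I then need $B$ to be nonempty — this is exactly where $\var(\decs(I)) \cap X \neq \emptyset$ is used, together with the fact that relevant decisions are prioritized (rules \DecXIrredName/\DecYSIrredName), so every relevant variable assigned on the trail before any irrelevant one; hence $\decs(m)$ contains at least the last relevant decision, and $b+1 \defas \level(B)$ is well-defined with a unique literal $\ell \in B$ at that level (again by \InvDecsName), and $b = \level(B \setminus \{\ell\}) = \level(J)$ where $JK = I$ is the split at level $b$. Finally $O = \tseitin(N \vee \negated{B})$ is always well-defined.

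**Main obstacle.** The one genuinely non-routine point is verifying, in the total-model sub-case, that the shrunken model $\istar$ actually contains a relevant decision literal, so that $B \neq \emptyset$ and the backtracking in \BtopIrredName is legal (backtracking to level $b-1 = \level(B)-2$ would be vacuous or ill-defined if $B$ were empty). This requires combining (i) $\var(\decs(I)) \cap X \neq \emptyset$, (ii) the priority of relevant decisions, which places all relevant decisions at the lowest levels of the trail, and (iii) the property that $\istar$, being a countermodel of $N$ extracted by conflict analysis over assumptions $\project{I}{X\cup Y}$, may drop some assumptions but the argument that it retains at least one relevant decision must be made carefully — or, alternatively, one observes that even if $\istar$ drops all relevant decisions, $m = \project{\istar}{X}$ is then a partial model over non-decision relevant variables and $\decs(m) = \emptyset$, which would actually fall under \EtopName-like reasoning; handling this correctly (possibly by noting such a shrunken model would already have triggered termination, or by strengthening the precondition) is the delicate step. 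Everything else is bookkeeping over the rule preconditions.
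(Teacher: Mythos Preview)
Your case analysis mirrors the paper's proof exactly: split on conflict/no-conflict, then on level-$0$ vs.\ higher conflict, then on units, then on unassigned variables, then on the presence of a relevant decision. Two points are worth flagging.

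For \BbotName you propose to run the standard CDCL resolution procedure to obtain the asserting clause $D$. The paper takes a shorter route: it simply sets $D \defas \negated{\decs(I)}$ and uses Invariant \InvImplIIrredName (not just \InvDecsName) to conclude $P \wedge \decs(I) \equiv \false$, hence $P \models D$. This avoids any appeal to the implication-graph machinery; your approach works too, but the paper's is the cleaner witness for progress.

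For \BtopIrredName your ``main obstacle'' --- whether the shrunken model $\istar$ retains a relevant decision so that $B \neq \emptyset$ --- is a worry you have created for yourself by thinking operationally about what the second SAT solver returns. The rule only asks that \emph{some} $\istar \leq \project{I}{X \cup Y}$ with $N \wedge \istar \implup \false$ exist. For progress you may simply take $\istar = \project{I}{X \cup Y}$ itself: then $m = \project{I}{X}$, $B = \negated{\decs(\project{I}{X})}$ consists of \emph{all} relevant decisions, and since $\var(\decs(I)) \cap X \neq \emptyset$ in this branch, $B$ is nonempty. The priority of relevant decisions then gives $\level(m) = \level(B)$, and \InvDecsName yields the unique $\ell$ at the top level. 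The paper's proof proceeds exactly this way (it names $\istar$ as the output of conflict analysis but uses nothing about minimality), so your delicate step dissolves once you pick the trivial witness.
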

\begin{proof}
  The proof is executed by induction over the number of rule applications.
  We show that in any non-terminal state \nolinebreak
  \(\state{P}{N}{M}{I}{\level}\) a rule is applicable.

  Assume all variables are assigned and no conflict has occurred.
  If no relevant decision is left on the trail \nolinebreak \(I\), rule
  \nolinebreak \EtopName can be applied.
  Otherwise, we execute an incremental \nolinebreak SAT call \SAT{\((N,
  \project{I}{X \cup Y})\)}.
  Since all input variables are assigned, we obtain a conflict by propagating
  internal variables only.
  Conflict analysis gives us the subsequence \nolinebreak \(\istar\) of \nolinebreak \(\project{I}{X
    \cup Y}\) consisting of
  the literals involved in the conflict, which is a
  model of \nolinebreak \(F\).
  Since we are interested in the models of \nolinebreak \(F\) projected onto \nolinebreak
  \(X\), 
  we choose \(B = \negated{\decs(\project{\istar}{X})}\).
  Now, \(\level(B) = b+1\), and due to \hyperref[fig:invenumirred]{Invariant \InvDecsName},
  \nolinebreak \(B\) contains exactly one decision literal \nolinebreak \(\ell\)
  such that \(\level(\ell) = b+1\) and therefore
  \(\level(B \setminus \{\ell\}) = b\).
  We choose \nolinebreak \(J\) and \nolinebreak \(K\) such that \(I = J\,K\) and
  \(b = \level(J)\) and in particular \(\residual{\ell}{K} = \false\).
  After backtracking to decision level \nolinebreak \(b\), we have
  \(\filter{I}{\leq b} = J\) where \(\residual{B}{J} = \{\ell\}\).
  All preconditions of rule \nolinebreak \BtopIrredName are met.

  If instead a conflict has occurred, there exists a clause \(C \in P\) such
  that \(\residual{C}{I} = \false\).
  If \(\level(C) = 0\), rule \EbotName is applicable.
  Otherwise, by  \hyperref[fig:invenumirred]{Invariant \InvImplIIrredName} we have
  \(P \wedge \decsf{\leq \level(I)}(I) \equiv
  P \wedge \decsf{\leq \level(I)}(I) \wedge \filter{I}{\leq \level(I)}
  \models \filter{I}{\leq \level(I)}\).
  Since \(I(P) \equiv \false\), also 
  \(P \wedge \decsf{\leq \level(I)}(I) \wedge \filter{I}{\leq \level(I)} \equiv
  P \wedge \decsf{\leq \level(I)}(I) \equiv \false\).
  If we choose \(D = \negated{\decs(I)}\) we obtain
  \(P \wedge \negated{D} \wedge \filter{I}{\leq \level(I)} \equiv \false\), thus
  \(P \models D\).
  Clause \nolinebreak \(D\) contains only decision literals and \(\level(D) =
  \level(I)\).
  From \hyperref[fig:invenumirred]{Invariant \InvDecsName} we know that \nolinebreak \(D\)
  contains exactly one decision literal for each decision level in
  \(\{1, \ldots, \level(I)\}\).
  We choose \(\ell \in D\) such that \(\level(\ell) = \level(I)\).
  Then the asserting level is given by \(j = \level(D \setminus \{\ell\})\).
  Without loss of generalization we assume the trail to be of the form
  \(I = J\,K\) where \(\level(J) = j\).
  After backtracking to decision level \nolinebreak \(j\), the trail is equal to
  \nolinebreak \(J\).
  Since \(\residual{D}{J} = \{\ell\}\), all conditions of rule \nolinebreak
  \BbotName hold.
  
  If \(\residual{P}{I} \not\in \{\false, \true\}\), there are unassigned
  variables in \(X \cup Y \cup S\).
  If there exists a clause \(C \in P\) where \(\residual{C}{I} = \{\ell\}\), the
  preconditions of rule \nolinebreak \UnitName are met.
  If instead \(\units(\residual{F}{I}) = \emptyset\), there exists a literal
  \nolinebreak \(\ell\) with \(\var(\ell) \in X \cup Y \cup S\) and
  \(\level(\ell) = \infty\).
  If not all relevant variables are assigned, the preconditions of rule
  \nolinebreak \DecXIrredName are satisfied.
  Otherwise, rule \nolinebreak \DecYSIrredName is applicable.

  All possible cases are covered by this argument.
  Hence, in every non-terminal
  state a rule is applicable, \iet, \mainalgirredname always makes progress.  
\end{proof}

\subsubsection{Termination}
\label{sec:calculusenumirredproofsterm}

\begin{proposition}
  \label{prop:irredundanttermination}
  \mainalgirredname terminates.
\end{proposition}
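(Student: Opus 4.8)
The plan is to exhibit a well-founded measure on states that strictly decreases with every rule application; combined with \autoref{prop:irredundantprogress} (progress) this yields termination. The key observation is that the algorithm proceeds in two nested layers: an \emph{outer} layer that discovers and blocks projected models (rules \BtopIrredName, \EtopName), and an \emph{inner} layer that, between two such events, behaves like ordinary non-chronological CDCL on a fixed formula \nolinebreak \(P\) (rules \UnitName, \BbotName, \DecXIrredName, \DecYSIrredName, and termination rule \EbotName). I would first bound the number of outer steps, then bound the inner steps between consecutive outer steps.

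For the outer layer, note that each application of \BtopIrredName adds a blocking clause \nolinebreak \(B = \negated{\decs(m)}\) over variables in \nolinebreak \(X\) to \nolinebreak \(P\), and by \autoref{prop:dualblockirredundant} (Invariant \InvDSOPIrredName) the projected models \nolinebreak \(m\) accumulated in \nolinebreak \(M\) are pairwise contradicting. Hence there are at most \nolinebreak \(2^{\abs{X}}\) applications of \BtopIrredName over the whole run; after that \nolinebreak \(P\) restricted to \nolinebreak \(X\) is unsatisfiable and the only way to leave a state with all variables assigned is \EtopName, while the only way to leave a conflict state at level \nolinebreak \(0\) is \EbotName. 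So the outer layer contributes only finitely many steps. For the inner layer, fix the formula \nolinebreak \(P\) between two outer events (it is not modified by \UnitName or \DecXIrredName/\DecYSIrredName, and \BbotName only adds clauses implied by \nolinebreak \(P\), which as in standard CDCL termination arguments are never re-derived because backtracking makes the learned clause asserting). Here I would reuse the classical CDCL termination measure adapted to the trail convention of this paper: since literals on the trail are ordered by ascending decision level (stated in \autoref{sec:calculusenumirredproofs}), associate to each state the sequence of decision levels of the trail literals read in order, and compare states by a lexicographic ordering that treats a prefix-extension (a decision or a propagation) as a decrease in the appropriate component while a \BbotName step, which replaces the tail above level \nolinebreak \(j\) by a single propagated literal at level \nolinebreak \(j\), strictly shortens the ``frontier'' in the standard way. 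Because \nolinebreak \(P\) is finite, only finitely many such inner steps occur before either a conflict at level \nolinebreak \(0\) (triggering \EbotName), or all variables become assigned (triggering \EtopName or \BtopIrredName, i.e., an outer step).

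Combining the two bounds: the run consists of finitely many ``inner segments,'' each finite, separated by at most \nolinebreak \(2^{\abs{X}}\) outer steps, so the total number of rule applications is finite. I expect the main obstacle to be the inner-layer argument, specifically making precise \emph{why} conflict analysis via \BbotName cannot loop — i.e., that each learned clause \nolinebreak \(D\) is genuinely new and becomes asserting after backtracking, so the usual ``no clause is learned twice / decision-level sequence strictly decreases'' invariant applies here despite the extra blocking clauses in \nolinebreak \(P\). This requires checking that the interaction with the blocking clauses does not spoil the asserting property; Invariant \InvImplIIrredName (already established in \autoref{prop:invenumirred}) together with the definitions of \nolinebreak \(j\) and \nolinebreak \(\ell\) in \BbotName should supply exactly what is needed, so I would cite those rather than redo the CDCL termination proof from scratch, noting the similarity to the termination arguments in \cite{DBLP:conf/sat/MohleB19,DBLP:conf/gcai/MohleB19}.
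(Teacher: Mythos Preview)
Your two-layer decomposition is a genuinely different route from the paper's proof, and it would work, but it is more circuitous than necessary. The paper exhibits a \emph{single} well-founded measure that decreases on \emph{every} rule, including \BtopIrredName: encode the trail as a fixed-length list over $\{0,1,2\}$ (propagated $\mapsto 0$, decision $\mapsto 1$, unassigned $\mapsto 2$, in trail order with the unassigned slots pushed to the end) and compare lexicographically. Under this encoding, \UnitName turns a leading $2$ into a $0$; \DecXIrredName and \DecYSIrredName turn a leading $2$ into a $1$; and both \BtopIrredName and \BbotName turn the $1$ at the backtrack position into a $0$ (the flipped decision becomes a propagation) with everything beyond it becoming $2$. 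So \BtopIrredName already strictly decreases the same measure as the ``inner'' rules, and there is no need to count projected models separately via \autoref{prop:dualblockirredundant}; your outer-layer bound of $2^{\abs{X}}$ is correct but superfluous once the inner-layer measure is in place.

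Your inner-layer sketch is also where the proposal is least precise. The phrase ``a prefix-extension \ldots\ as a decrease in the appropriate component'' never commits to an actual ordering, and your stated worry---that \BbotName might loop unless ``each learned clause $D$ is genuinely new''---is misplaced. Standard CDCL termination proofs (Nieuwenhuis et al., Blanchette et al., and the paper itself) do not argue via non-repetition of learned clauses; they argue directly via the trail encoding above, which decreases regardless of whether $D$ coincides with a clause already in $P$. Once you make that encoding explicit for the inner layer you will see it subsumes the outer layer as well, collapsing your argument into the paper's unified one.
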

\begin{proof}
  In our proof we follow the argument by Nieuwenhuis et al. \nolinebreak
  \cite{DBLP:journals/jacm/NieuwenhuisOT06} and Mari{\'c} and Jani\v{c}i{\'c}
  \nolinebreak \cite{DBLP:journals/corr/abs-1108-4368}, or more precisely the
  one by Blanchette et al. \nolinebreak
  \cite{DBLP:journals/jar/BlanchetteFLW18}.  

  We need to show that from the initial state \nolinebreak
  \(\state{P}{N}{\false}{\emptytrail}{\level_0}\) a final state \nolinebreak
  \(M\) is reached in a finite number of steps, \iet, no infinite sequence of
  rule applications is generated.
  Otherwise stated, we need to prove that the relation \nolinebreak
  \(\transenumirred\) is well-founded.
  To this end, we define a well-founded relation \nolinebreak
  \(\relationenumip\) such that any transition \(s \transenumirred s'\) from a
  state \nolinebreak \(s\) to a state \nolinebreak \(s'\) implies \(s
  \relationenumip s'\).
  
  In accordance with Blanchette et al. \nolinebreak
  \cite{DBLP:journals/jar/BlanchetteFLW18} but adopting the notation introduced
  by Fleury \nolinebreak \cite{FleuryMaster2015}, we map states to lists.
  Using the abstract representation of the assignment trail \nolinebreak \(I\)
  by Nieuwenhuis et al. \nolinebreak \cite{DBLP:journals/jacm/NieuwenhuisOT06}, we
  write
  \begin{equation}
    I = I_0\,\ell_1\,I_1\,\ell_2\,I_2\,\ldots\,\ell_m\,I_m \quad\text{where}\quad
    \{\ell_1, \ldots, \ell_m\} = \decs(I). \label{eq:abstractI}
  \end{equation}
  The state \nolinebreak \(\state{P}{N}{M}{I}{\level}\) is then mapped to
  \begin{equation}
    [\,
    \underbrace{\encprop, \ldots, \encprop}_{\length{I_0}},
    \encdec, \underbrace{\encprop, \ldots, \encprop}_{\length{I_1}},
    \encdec, \underbrace{\encprop, \ldots, \encprop}_{\length{I_2}},
    \encdec,
    \ldots,
    \encdec, \underbrace{\encprop, \ldots, \encprop}_{\length{I_m}},
    \underbrace{\encunass, \ldots, \encunass}_{\length{V} - \length{I}}
    \,] \label{eq:abstractImapping}
  \end{equation}
  where \(V = X \cup Y \cup S\).
  In this representation, the order of the literals on \nolinebreak \(I\) is
  reflected.
  Propagated literals are denoted by \nolinebreak \(\encprop\), decisions are denoted
  by \nolinebreak \(\encdec\).
  Unassigned variables are represented by \nolinebreak \(\encunass\) and are moved to the
  end.
  The final state \nolinebreak \(M\) is represented by \nolinebreak
  \(\emptylist\).
  The state containing the trail \nolinebreak \(I\) in \autoref{eq:abstractI} is
  mapped to the list in \autoref{eq:abstractImapping}.
  The first \nolinebreak \(\length{I_0}\) entries represent the literals
  propagated at decision level \nolinebreak zero, the \nolinebreak \(\encdec\) at
  position \(\length{I_0}+1\) represents the decision literal \nolinebreak
  \(\ell_1\), and so on for all decision levels on \nolinebreak \(I\).
  The last \(\length{V} - \length{I}\) entries denote the unassigned variables.
  Notice that we are not interested in the variable assignment itself but in
  its structure, \iet, the number of propagated literals per decision
  level and the number of unassigned variables.
  Furthermore, the states are encoded into lists of the same length.
  This representation induces a lexicographic order \nolinebreak \(\lexlower\)
  on the states.
  We therefore define \nolinebreak \(\relationenumip\) as the restriction of \nolinebreak
  \(\lexlower\) to \(\{[\,v_1, \ldots, v_{\length{V}}] \mid v_i \in \{\encprop,
  \encdec, \encunass\}
  ~\text{for}~ 1 \leq i \leq \length{V}\}\).
  Accordingly, we have that \(s \relationenumip s'\), if
  \(s \lexlower s'\).  

  \begin{figure}[!t]
    \noindent\fbox{
      \parbox{0.96\textwidth}{
        \centering
        \begin{tabular}{ccc}
          \AxiomC{\([\, \ldots\,]\)}
          \RightLabel{\EtopName}
          \UnaryInfC{\(\varepsilon\)}          
          \DisplayProof{}
          &
            \AxiomC{\([\,\encprop, \ldots, \encprop, \encunass, \ldots, \encunass\,]\)}
            \RightLabel{\EbotName}
            \UnaryInfC{\(\varepsilon\)}          
            \DisplayProof{}
          &
            \AxiomC{\([\,\ldots, \encunass, \encunass, \ldots, \encunass\,]\)}
            \RightLabel{\UnitName}
            \UnaryInfC{\([\,\ldots, \encprop, \encunass, \ldots, \encunass\,]\)}
            \DisplayProof{}
        \end{tabular}

        \skipbetweenproofs
        
        \begin{tabular}{cc}
          \AxiomC{\([\,\ldots, \encdec, \ldots, \encdec, \ldots, \encunass, \ldots, \encunass\,]\)}
          \RightLabel{\BtopIrredName}
          \UnaryInfC{\([\,\ldots, \encprop, \encunass, \,\ldots\ldots\ldots\ldots\,, \encunass\,]\)}
          \DisplayProof{}
          &
            \AxiomC{\([\,\ldots, \encdec, \ldots, \encdec, \ldots, \encunass, \ldots, \encunass\,]\)}
            \RightLabel{\BbotName}
            \UnaryInfC{\([\,\ldots, \encprop, \encunass, \,\ldots\ldots\ldots\ldots\,, \encunass\,]\)}
            \DisplayProof{}
        \end{tabular}
        
        \skipbetweenproofs

        \begin{tabular}{ccc}
          \AxiomC{\([\,\ldots, \encunass, \encunass, \ldots, \encunass\,]\)}
          \RightLabel{\DecXIrredName}
          \UnaryInfC{\([\,\ldots, \encdec, \encunass, \ldots, \encunass\,]\)}
          \DisplayProof{}
          &
            \AxiomC{\([\,\ldots, \encunass, \encunass, \ldots, \encunass\,]\)}
            \RightLabel{\DecYSIrredName}
            \UnaryInfC{\([\,\ldots, \encdec, \encunass, \ldots, \encunass\,]\)}
            \DisplayProof{}
        \end{tabular}
      }
    }
    \caption{Transitions of states mapped to lists according to
      \autoref{eq:abstractImapping}.
      The initial state is depicted above the horizontal rule, the resulting
      state below.
      The two end rules lead to the minimal element \nolinebreak
      \(\varepsilon\) representing the final state.
      Rule \nolinebreak \UnitName replaces an unassigned variable (denoted by
      \nolinebreak \(\encunass\)) by a propagated one (denoted by \nolinebreak \(\encprop\))
      and leaves the rest unchanged. 
      Rules \nolinebreak \BtopIrredName and \nolinebreak \BbotName replace a
      decision literal (denoted by \nolinebreak \(\encdec\)) by a propagated one.
      Finally, the two decision rules replace an unassigned literal by a
      decision.
      Clearly, \wrtt the lexicographic order, the states decrease by a
      rule application.
  }
    \label{fig:transenumirred}
  \end{figure}

In \autoref{fig:transenumirred}, the state transitions for the rules are
visualized.
In this representation, the unspecified elements occurring prior to the first
digit are not altered by the application of the rule.
We show that \(s \relationenumip s'\) for each rule, where \nolinebreak \(s\)
and \nolinebreak \(s'\) encode the state before and after applying the
corresponding rule, respectively, and end states are encoded by \(\varepsilon\).

\skipbetweenrules

\noindent
\EtopName. \skipafterrulename
The state \nolinebreak \(s'\) is mapped to \nolinebreak \(\emptylist\) which is the minimal element
with respect to \nolinebreak \(\lexlower\), hence \(s \relationenumip s'\)
trivially holds.
The representation of the state may contain both \nolinebreak \(\encprop\)'s and
\nolinebreak \(\encdec\)'s but no \nolinebreak \(\encunass\)'s, since our
algorithm detects only total
models.\footnote{This restriction may
  be weakened in favor of finding partial models.
  In this section, we refer to the rules introduced in
  \autoref{sec:calculusenumirred} and discuss a generalization of our algorithm
  enabling the detection of partial models further down.}
Recall that the associated trail must not contain any relevant decision, which
is not reflected in the structure of the trail.

\skipbetweenrules

\noindent
\EbotName. \skipafterrulename
The state \nolinebreak \(s'\) is mapped to \nolinebreak \(\emptylist\), which is the minimal element
with respect to \nolinebreak \(\lexlower\), hence \(s \relationenumip s'\)
trivially holds.
The representation of the state may contain both \nolinebreak \(\encprop\)'s and
\nolinebreak \(\encunass\)'s but no \nolinebreak \(\encdec\)'s, since any
decision need be flipped.

\skipbetweenrules

\noindent
\UnitName. \skipafterrulename
An unassigned variable is propagated.
Its representation changes from \nolinebreak \(\encunass\) to \nolinebreak \(\encprop\), and
all elements preceding it remain unaffected.
Due to \(\encunass \lexlower \encprop\), we also have that \(s \relationenumip s'\).

\skipbetweenrules

\noindent
\BtopIrredName~/~\BbotName. \skipafterrulename
A decision literal, \egt, \nolinebreak \(\negated{\ell}\), is flipped and
propagated at a lower decision level, let's say \nolinebreak \(d\).
The decision level \nolinebreak \(d\) is extended by \nolinebreak \(\ell\),
which is represented by \nolinebreak \(\encprop\) and replaces the decision literal at
decision level \nolinebreak \(d+1\).
All variables at decision level \nolinebreak \(d+1\) and higher are
unassigned and thus represented by \nolinebreak \(\encunass\).
Therefore, \(s \relationenumip s'\).
Notice that, although different preconditions of the rules \nolinebreak
\BtopIrredName and \nolinebreak \BbotName apply and the two rules differ, the
structure of their states is the same.

\skipbetweenrules

\noindent
\DecXIrredName~/~\DecYSIrredName. \skipafterrulename
An unassigned variable is decided, \iet, the first
occurrence of \nolinebreak
\(\encunass\) is replaced by \nolinebreak \(\encdec\) in the representation.
The other elements remain unaltered, hence \(s \relationenumip s'\).
As for the backtracking rules, whether a relevant or irrelevant or internal
variable is decided, is irrelevant and not reflected in the mapping of the
state, as for rules \BtopIrredName and \BbotName.

\skipbetweenrules

We have shown that after any rule application the resulting state is smaller
than the preceding one with respect to the lexicographic order on which
\nolinebreak \(\relationenumip\) is based.
This argument shows that \nolinebreak \(\relationenumip\) is well-founded and
that therefore \mainalgirredname terminates.
\end{proof}

\subsubsection{Equivalence}
\label{sec:calculusenumirredproofseq}

The final state is given by a \nolinebreak DSOP \(M\) such that \(M \equiv
\project{F}{X}\).
The proof is split into several steps.
We start by proving that, given a total model \nolinebreak \(I\) of \nolinebreak
\(P\), its subsequence \nolinebreak \(\istar\) returned by \nolinebreak \SATtwo
(line 20 of \mainalgirredname in
\autoref{fig:enumirred}) is a (partial) model of \nolinebreak \(\project{P}{X}\) and that any
total model of \nolinebreak \(P\) found during execution 
either was already found or is found for the first time.
Then we show that all models of \nolinebreak \(P\) are found and that each model is
found exactly once, before
concluding by proving that \(M \equiv \project{F}{X}\).

\begin{proposition}
  \label{prop:istarmodel}
  Let \nolinebreak \(I\) be a total model of \nolinebreak \(P\) and \nolinebreak
  \(\istar = \SAT(N, \project{I}{X \cup Y})\).
  Then \nolinebreak \(\istar\) is a model of \nolinebreak \(\project{P}{X}\). 
\end{proposition}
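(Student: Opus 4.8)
The plan is to derive everything from the invariant \InvDualPNName (that is, \autoref{eq:pnotn}, which holds in all non‑terminal states by \autoref{prop:invenumirred}), together with the behaviour of the incremental call with assumptions already spelled out in \autoref{sec:shrinking}.

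First I would unpack the hypothesis that \nolinebreak \(I\) is a total model of \nolinebreak \(P\): every input variable is assigned on \nolinebreak \(I\), and the \nolinebreak \(S\)-part of \nolinebreak \(I\) witnesses that the total input assignment \nolinebreak \(\project{I}{X \cup Y}\) satisfies \nolinebreak \(\exists S\, [P]\). Evaluating \InvDualPNName at \nolinebreak \(\project{I}{X \cup Y}\) then shows that \nolinebreak \(\project{I}{X \cup Y}\) falsifies \nolinebreak \(\exists T\, [N]\), so \nolinebreak \(\project{I}{X \cup Y} \wedge N\) is unsatisfiable; and since \nolinebreak \(\project{I}{X \cup Y}\) already fixes every variable of \nolinebreak \(X \cup Y\), the conflict is reached by propagating variables of \nolinebreak \(T\) only (this is \autoref{eq:pton}). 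Consequently the incremental call returns UNSAT.

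Next I would analyse the clause produced by conflict analysis. Since assumed literals have no reason, the resolution chain stops at a clause \nolinebreak \(\negated{\istar}\) built solely from negated assumption literals; hence \nolinebreak \(\istar \leq \project{I}{X \cup Y}\), and, being a resolvent of clauses of \nolinebreak \(N\), \nolinebreak \(\negated{\istar}\) is entailed by \nolinebreak \(N\), i.e. \nolinebreak \(N \wedge \istar \equiv \false\). As \nolinebreak \(\istar\) mentions no variable of \nolinebreak \(T\), quantifying out \nolinebreak \(T\) gives \nolinebreak \(\istar \wedge \exists T\, [N] \equiv \false\), that is \nolinebreak \(\istar \models \negated{\exists T\, [N]}\); applying \InvDualPNName once more yields \nolinebreak \(\istar \models \exists S\, [P]\). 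Finally, weakening by the additional existential quantifier over \nolinebreak \(Y\) (which is harmless even though \nolinebreak \(\istar\) may itself mention variables of \nolinebreak \(Y\): one simply reuses the \nolinebreak \(Y\)-part of any extending assignment as the witness) gives \nolinebreak \(\istar \models \exists Y, S\, [P] \equiv \project{P}{X}\), which is the claim; by \autoref{eq:modelspf} this also pins down \nolinebreak \(\istar\) as a model of \nolinebreak \(\project{F}{X}\).

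I expect the only real difficulty to be bookkeeping rather than mathematics: one must track carefully which formula lives over which variable set so that pulling \nolinebreak \(\istar\) out of the scope of \nolinebreak \(\exists T\) and the final weakening step over \nolinebreak \(Y\) are legitimate, and one must be precise about the fact — established in \autoref{sec:shrinking} via \autoref{eq:ntop}--\autoref{eq:pton} — that the second solver's failed call with assumptions returns a sub‑assignment \nolinebreak \(\istar\) of \nolinebreak \(\project{I}{X \cup Y}\) whose negation is entailed by \nolinebreak \(N\). No idea beyond \InvDualPNName is required.
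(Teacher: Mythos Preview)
Your proposal is correct and follows essentially the same line as the paper's proof: both use \InvDualPNName (i.e., \autoref{eq:pnotn}) to argue first that the incremental call on \(N\) under the assumptions \(\project{I}{X\cup Y}\) conflicts by propagating \(T\) only, and then that the assumption subset \(\istar\) extracted by conflict analysis is a countermodel of \(\exists T\,[N]\) and hence, via duality (the paper invokes \autoref{eq:ntop}, you re-apply \InvDualPNName), a model of \(\project{P}{X}\). Your write-up is slightly more explicit about the variable-scope bookkeeping (pulling \(\istar\) out of \(\exists T\) and the final weakening over \(Y\)), but the argument is the same.
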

\begin{proof}
  All variables in \(X \cup Y \cup S\) are assigned and \nolinebreak
  \(P(X,Y,S)\) and \nolinebreak 
  \(N(X,Y,T)\) are a dual representation of \nolinebreak \(F(X,Y)\).
  \hyperref[fig:invenumirred]{Invariant \InvDualPNName} holds.
  In particular it holds for the values of the variables in
  \nolinebreak \(X \cup Y \cup S\) set to their values in \nolinebreak \(I\),
  \iet, we have that 
  \(\exists \,S\,[\,\residual{P(X,Y,S)}{I} \,]\, \equiv
  \negated{\exists \,T\,[\,\residual{N(X,Y,T)}{I}\,]\,}\)
  where only the unassigned variables in \(\subvars{(X \cup Y)}{I}\) are universally quantified.
  Since \nolinebreak \(I\) is a total model of \nolinebreak \(P\), \hyperref[fig:invenumirred]{Invariant
  \InvDualPNName} can be rewritten as 
  \(\residual{P(X,Y,S)}{I} \equiv \negated{\exists
    \,T\,[\,\residual{N(X,Y,T)}{I}}\),
  and \nolinebreak \(\project{I}{X \cup Y}\) can not be extended to a model of
  \nolinebreak \(N\).
  Since the variables in \nolinebreak \(T\) are defined in terms of variables
  in \nolinebreak \(X \cup Y\),
  an incremental \nolinebreak SAT call on \nolinebreak \(N \wedge I\)
  yields a conflict in \nolinebreak \(N\) exclusively by propagating variables
  in \nolinebreak \(T\).

  Exhaustive conflict analysis yields a clause \nolinebreak \(D\) consisting of
  the negations of the (assumed) literals in \nolinebreak \(I\) involved in the
  conflict.  
  Its negation is a counter-model of \nolinebreak \(\project{N}{X \cup Y}\)
  which, due to \autoref{eq:ntop}, is a model of \nolinebreak
  \(\project{P}{X \cup Y}\).
  Obviously, the same holds for the projection onto \nolinebreak \(X\), and
  \(\project{\negated{D}}{X} \models \project{P}{X}\).
  Since \(\istar = \project{\negated{D}}{X}\), we have 
  \(\istar \models \project{P}{X}\), and the claim holds.
\end{proof}

\begin{proposition}
  \label{prop:irredtotalmodel}
  A total model \nolinebreak \(I\) of \nolinebreak \(P\) is either 
  \begin{enumerate}[label=(\roman*)]
  \item contained in \nolinebreak \(M\) or \label{it:Icontained}
  \item subsumed by a model in \nolinebreak \(M\) or \label{it:Isubsumed}
  \item a model of \nolinebreak \(P_0 \wedge \bigwedge\limits_i B_i\) where
    \nolinebreak \(B_i\) are the blocking clauses added to \nolinebreak
    \(P_0\) \label{it:Imodel} 
  \end{enumerate}
\end{proposition}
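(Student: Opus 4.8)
The plan is to prove the trichotomy by a case analysis on how $\project{I}{X}$ relates to the disjuncts of $M$, with a background induction over rule applications that keeps the needed bookkeeping available. Fix a state $\state{P}{N}{M}{I}{\level}$ in which $I$ is a total model of $P$ — i.e.\ one to which \EtopName{} or \BtopIrredName{} applies. Write $M = \bigvee_i m_i$ with $m_i = \project{\istar_i}{X}$ the previously recorded models and $B_i = \negated{\decs(m_i)}$ the associated blocking clauses, so that by construction $P = P_0 \wedge \bigwedge_i B_i \wedge R$, where $R$ collects the clauses learned by \BbotName{} (each of which satisfies $P \models D$, hence is entailed by $P_0 \wedge \bigwedge_i B_i$). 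First I would dispose of the easy alternatives: if $\project{I}{X}$ coincides with some $m_i$ we are in case~\ref{it:Icontained}; if $\project{I}{X}$ properly extends some $m_i$ (every literal of $m_i$ is true in $I$, but $m_i \neq \project{I}{X}$) we are in case~\ref{it:Isubsumed}.

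The work lies in the remaining case, where $\project{I}{X}$ is neither equal to nor a proper extension of any $m_i$; there I would establish case~\ref{it:Imodel}, i.e.\ $I \models P_0 \wedge \bigwedge_i B_i$. That $I \models P_0$ is immediate from $P_0 \subseteq P$ and $I \models P$. For $I \models B_i$, suppose toward a contradiction that $I$ falsifies $B_i = \negated{\decs(m_i)}$; then $\decs(m_i) \subseteq I$, hence $\decs(m_i) \subseteq \project{I}{X}$. I would then upgrade this to $m_i \subseteq \project{I}{X}$: the literals of $m_i$ that are not decisions were propagated in the state where $\istar_i$ was produced, and Invariant~\InvImplIIrredName, together with the prioritization of relevant literals (rule \DecXIrredName{} fires before \DecYSIrredName, enforced by the side condition $\unassvars{I}{X} = \emptyset$), shows that the relevant decisions of $m_i$ entail, over $P_0$ and the earlier blocking clauses, the remaining literals of $m_i$; since $I$ extends $\decs(m_i)$ and satisfies $P_0$ and (by the induction hypothesis) the earlier $B_j$, it must satisfy those literals too, so $m_i \subseteq \project{I}{X}$. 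But then $\project{I}{X}$ equals $m_i$ or properly extends it, contradicting the case we are in. Hence $I \models B_i$ for every $i$, which gives~\ref{it:Imodel}.

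The induction over rule applications is invoked only to keep available the hypothesis ``$I$ satisfies all earlier blocking clauses'' and the decomposition $P = P_0 \wedge \bigwedge_i B_i \wedge R$ with $R$ entailed by $P_0 \wedge \bigwedge_i B_i$: all rules except \BtopIrredName{} leave $M$ and the family $\{B_i\}$ untouched (\BbotName{} only extends $R$), while \BtopIrredName{} appends exactly one new pair $(m_n, B_n)$, and \autoref{prop:dualblockirredundant} already guarantees the $m_i$ are pairwise contradicting, so the case analysis stays coherent.

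I expect the main obstacle to be the implication $\decs(m_i) \subseteq I \Rightarrow m_i \subseteq \project{I}{X}$: one must argue carefully that a relevant literal occurring in $\istar_i$ that was \emph{propagated} rather than decided is genuinely forced by the relevant decisions of $m_i$ alone, and not by irrelevant or internal decisions — this is exactly where the ordering discipline on the trail and Invariant~\InvImplIIrredName have to be combined precisely. (There is a shorter route for case~\ref{it:Imodel} alone: since $B_i \in P$ and $I \models P$, we get $I \models B_i$ outright — but justifying this form of $P$ still requires the same $P \equiv P_0 \wedge \bigwedge_i B_i$ bookkeeping, so the obstacle is merely relocated, not removed.)
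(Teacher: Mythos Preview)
Your proposal is correct, but it is substantially more elaborate than the paper's proof, and the paper takes precisely the ``shorter route'' you mention in your final parenthetical. For item~\ref{it:Imodel} the paper simply observes that every blocking clause $B_i$ was literally conjoined to $P$ by rule \BtopIrredName, so any total model $I$ of the current $P$ satisfies each $B_i$ automatically; no contradiction argument, no appeal to \InvImplIIrredName, and no lifting from $\decs(m_i)\subseteq I$ to $m_i\subseteq\project{I}{X}$ is needed. The decomposition $P = P_0 \wedge \bigwedge_i B_i \wedge R$ that you worry about is not an obstacle requiring a separate induction: it is read off the calculus directly, since only \BtopIrredName{} and \BbotName{} modify $P$, the former appending a $B_i$ and the latter a clause $D$ with $P\models D$.

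The paper's case split also differs from yours. Rather than comparing $\project{I}{X}$ against the cubes $m_i$, it asks whether the total model $I$ was already encountered earlier in the run. If so, the $\istar$ recorded at that time either equals $I$ (which requires $Y=S=\emptyset$ and that all assumed literals participated in the conflict, giving~\ref{it:Icontained}) or is strictly shorter and hence subsumes $I$ (giving~\ref{it:Isubsumed}); if not,~\ref{it:Imodel} follows as above. Your split via $\project{I}{X}$ works too, but it manufactures exactly the obstacle you flag at the end, which the paper's direct argument sidesteps entirely.
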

\begin{proof}
  All variables in \(X \cup Y \cup S\) are assigned and \(I \models P\).
  If \nolinebreak \(I\) was already found earlier, it was shrunken and the
  resulting model projected onto \nolinebreak \(X\) to obtain \nolinebreak
  \(\istar\) which was then added to \nolinebreak \(M\) (rule \BtopIrredName
  and line 24 in \mainalgirredname in \autoref{fig:enumirred}). 
  If all assumed variables participated in the conflict and furthermore \(Y = S
  = \emptyset\), then \(\project{\istar}{X \cup Y} = \project{\istar}{X} = I\), and \autoref{it:Icontained} holds.
  Otherwise, \(\istar < I\) and \(\istar\) subsumes \nolinebreak \(I\).
  Since \(\istar \in M\), in this case \autoref{it:Isubsumed} holds.

  Suppose the model \nolinebreak \(I\) is found for the first time.
  Since \(I \models P\), also \(I \models C\) for all clauses \(C \in P\).
  This in particular holds for all blocking clauses which were added to the
  original formula \(P = P_0\) (rule \BtopIrredName and line 22 in
  \mainalgirredname), and \autoref{it:Imodel} holds.
\end{proof}

\begin{proposition}
  \label{prop:irredallfound}
  Every model is found.
\end{proposition}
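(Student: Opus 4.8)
The plan is to show that at termination the DSOP $M$ subsumes every total model of $P_0$ — equivalently, by \autoref{eq:modelspf}, that every model of $\project{F}{X}$ satisfies some disjunct of $M$. Together with soundness (\autoref{prop:istarmodel}) this yields $M \equiv \project{F}{X}$ in the following subsection. By \autoref{prop:irredundanttermination} the run is finite, and by \autoref{prop:irredundantprogress} it can only stop in a state where no non-terminal rule applies; the only terminal rules are \EtopName and \EbotName, so I would split on which of the two produced the final state.

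First I would isolate a soundness-of-blocking claim. Let $B_1, \dots, B_K$ be the blocking clauses added during the run, in order, and let $m_i = \project{\istar_i}{X}$ be the disjunct of $M$ adjoined together with $B_i = \negated{\decs(m_i)}$; write $P^{(k)} = P_0 \wedge B_1 \wedge \cdots \wedge B_k$ for the formula in force when $B_{k+1}$ is created. I claim $P^{(k)} \wedge \decs(m_{k+1}) \models m_{k+1}$. This is the computation already carried out in the preservation proof of Invariant \InvImplIIrredName for rule \BtopIrredName: since relevant decisions are prioritized (rule \DecXIrredName fires before rule \DecYSIrredName), we have $\decsf{\leq b+1}(I) = \decs(\project{I}{X}) = \decs(m_{k+1})$ for $b+1 = \level(m_{k+1}) = \level(B_{k+1})$, and $m_{k+1} \subseteq \filter{I}{\leq b+1}$, so Invariant \InvImplIIrredName with $n = b+1$ gives the claim. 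An induction on $k$ then shows that every total model $\hat I$ of $P_0$ either satisfies some $m_i$ — and is hence subsumed by a disjunct of $M$ — or is a total model of $P^{(k)}$: in the inductive step, $\hat I \models P^{(k)}$ together with $\hat I \not\models m_{k+1}$ forces $\hat I \models \negated{\decs(m_{k+1})} = B_{k+1}$ by contraposition of the claim. This is, in essence, the third case of \autoref{prop:irredtotalmodel}, which I now eliminate at the terminal state.

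If the run ends by \EbotName, there is a clause $C$ in the final formula $P^{(K)}$ with $\residual{C}{I} = \false$ and $\level(C) = 0$; then $\negated{C}$ consists of literals assigned at decision level $0$, so $\negated{C} \subseteq \filter{I}{\leq 0}$, and Invariant \InvImplIIrredName with $n = 0$ gives $P^{(K)} \models \filter{I}{\leq 0} \models \negated{C}$, contradicting $C \in P^{(K)}$; hence $P^{(K)}$ is unsatisfiable and the ``total model of $P^{(K)}$'' alternative is vacuous. If the run ends by \EtopName, the final trail $J$ is a total model of $P^{(K)}$ carrying no relevant decision; since relevant decisions come first, every relevant variable of $J$ was propagated before the first decision, hence at decision level $0$, and Invariant \InvImplIIrredName with $n = 0$ gives $P^{(K)} \models \project{J}{X}$. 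The application of \EtopName itself adjoins $\project{J}{X}$ to $M$, so any $\hat I \models P^{(K)}$ satisfies $\project{\hat I}{X} = \project{J}{X}$ and is subsumed by that disjunct of $M$. In either terminal case, therefore, every total model of $P_0$ is subsumed by a disjunct of $M$, which is the claim.

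I expect the only genuine work to be the bookkeeping behind the soundness-of-blocking claim — that all literals of the shrunken, projected model $m_i$ lie at decision levels $\leq \level(B_i)$ and that $\decs(m_i)$ is exactly the set of decisions up to that level — which combines the ``relevant-first'' decision discipline with Invariants \InvDecsName and \InvImplIIrredName; but this was already established for \BtopIrredName in \autoref{prop:invenumirred} and can be reused almost verbatim. A minor point is the ordering of events in the \EtopName case: the last disjunct is contributed by the terminal rule itself, so the conclusion holds in the post-state and not before.
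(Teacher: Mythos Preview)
Your argument is correct, but it is much more than what the paper actually does. The paper's proof of \autoref{prop:irredallfound} is three sentences: it invokes termination (\autoref{prop:irredundanttermination}), asserts that the final state is reached only once ``the search space has been processed exhaustively,'' and concludes that all models have been found. No case split on the terminal rule, no explicit use of \InvImplIIrredName, no soundness-of-blocking argument --- those ingredients are left implicit in the phrase ``processed exhaustively'' and are partly deferred to the equivalence argument in \autoref{th:correctirred}.

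Your route is genuinely different and substantially more rigorous. You make precise what ``exhaustively'' means by showing, via induction on the sequence of blocking clauses and \InvImplIIrredName at level $b{+}1$, that every total model of $P_0$ either already matches a recorded $m_i$ or survives into $P^{(K)}$; you then eliminate the surviving case by analysing each terminal rule separately, using \InvImplIIrredName at level~$0$ together with the relevant-decisions-first discipline. This buys you an actual proof where the paper offers an appeal to intuition; the cost is that you re-derive the key implication $P^{(k)} \wedge \decs(m_{k+1}) \models m_{k+1}$, which the paper established only inside the preservation proof of \InvImplIIrredName for \BtopIrredName. One small point: your $P^{(k)}$ omits the clauses learned by \BbotName, but since those are entailed by $P$ this is harmless and your argument goes through with $P^{(k)}$ read as the current formula up to logical equivalence.
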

\begin{proof}
  According to \autoref{prop:irredundanttermination}, \mainalgirredname
  terminates.
  The final state \nolinebreak \(M\) is reached
  by either rule \EbotName or rule \EtopName.
  Assume \(P = P_0 \bigwedge_i B_i\) with \nolinebreak \(B_i\) denoting the
  blocking clauses added to the original formula \nolinebreak \(P_0\), and let 
  \nolinebreak \(I\) denote the current trail.
  If rule \EbotName is applied, then \nolinebreak \(P\) is unsatisfiable, since
  \nolinebreak \(\residual{P}{I} = \false\) and \nolinebreak \(I\) contains 
  no decision.
  If rule \EtopName is applied, then by \autoref{prop:irredtotalmodel},
  \autoref{it:Imodel}, \(m = \project{I}{X}\) is a model of \nolinebreak \(P\).
  Now, \(\residual{(P \wedge \negated{m})}{I} = \false\) and \nolinebreak \(I\)
  contains no decision literal.
  Therefore, \(\residual{(P \wedge \negated{m})}{I}\) is unsatisfiable, and all
  models have been found. 
  \end{proof}

\begin{proposition}
  \label{prop:irredmodels}
  Every model is found exactly once.
\end{proposition}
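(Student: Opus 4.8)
The statement to prove is \autoref{prop:irredmodels}: every model is found exactly once. The plan is to combine \autoref{prop:irredallfound} (every model is found at least once) with the observation that the blocking clauses, together with the dual encoding, make it impossible to revisit a model once it has been enumerated. Concretely, I would argue by tracking what happens along any execution of \mainalgirredname: each time a total model \(I\) of \nolinebreak \(P\) is reached (rule \BtopIrredName), its shrunken projection \(m = \project{\istar}{X}\) is added to \nolinebreak \(M\), and simultaneously the blocking clause \(B = \negated{\decs(m)}\) is conjoined to \nolinebreak \(P\) while its negation is disjoined into \nolinebreak \(N\). By \autoref{prop:dualblockirredundant}, the cubes \(\project{\istar_i}{X}\) recorded in \nolinebreak \(M\) are pairwise contradicting, so no recorded model is ever written twice; it remains only to show that no later total model of \nolinebreak \(P\) encountered by \SATone can have the same projection onto \nolinebreak \(X\) as an already-recorded \(m\).

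The key step is the following. After \(B\) is added to \nolinebreak \(P\), any trail extending to a total model \(I'\) of the updated \nolinebreak \(P\) must satisfy \(B = \negated{\decs(m)}\), i.e.\ \(I'\) must falsify at least one literal of \(\decs(m)\). Since \(\decs(m)\) is a sub-cube of \(\project{I'}{X}\) only if \(\project{I'}{X}\) agrees with \(m\) on all of \(m\)'s decided variables — and because \autoref{prop:istarmodel} guarantees \(\istar\) itself logically entails \nolinebreak \(\project{P}{X}\), so that every total extension of \(m\) over \nolinebreak \(X\) is already a model — I would show that \(\project{I'}{X}\) cannot be subsumed by \(m\). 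Hence the cube extracted from \(I'\) differs from \(m\), and is in fact contradictory with it (again by \autoref{prop:dualblockirredundant} applied to the run so far). So each \emph{model of \nolinebreak \(F\) projected onto \nolinebreak \(X\)} corresponds to exactly one disjunct of the final \nolinebreak \(M\): it is covered (by completeness, \autoref{prop:irredallfound}, together with \autoref{prop:irredtotalmodel}) and it is covered at most once (by the pairwise-contradiction property and the blocking argument). I would also invoke Invariant \nolinebreak \InvDSOPIrredName to recall that \nolinebreak \(M\) stays a DSOP throughout, so ``found exactly once'' is literally the statement that each projected model appears in a unique disjunct.

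I expect the main obstacle to be the bookkeeping around shrinking and projection: a single syntactic total model \(I\) of \nolinebreak \(P\) may be represented in \nolinebreak \(M\) only implicitly, as a sub-cube of some \(m = \project{\istar}{X}\) with \(\istar < I\), and two syntactically distinct total models \(I_1, I_2\) of \nolinebreak \(P\) may collapse to the same projected cube or to overlapping projected regions before shrinking. The careful claim is therefore not about total models of \nolinebreak \(P\) but about \emph{total assignments over \nolinebreak \(X\)} that extend to models of \nolinebreak \(F\): I must show each such assignment lies under exactly one disjunct of \nolinebreak \(M\). This is where \autoref{prop:irredtotalmodel} does the work for the ``at least one'' direction (cases \ref{it:Icontained}–\ref{it:Imodel}), and \autoref{prop:dualblockirredundant} together with the blocking-clause invariant does the work for the ``at most one'' direction. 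Once those two are assembled, the conclusion \(M \equiv \project{F}{X}\) with \nolinebreak \(M\) a DSOP follows, and that is precisely the sense in which every model is found exactly once.
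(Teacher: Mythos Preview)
Your proposal is correct and rests on the same key ingredient as the paper, namely \autoref{prop:dualblockirredundant}. The paper's own proof, however, is essentially a single sentence: it simply recalls that \autoref{prop:dualblockirredundant} guarantees the detected models are pairwise contradicting and observes that this is what ``found exactly once'' means. All of your additional machinery---the explicit tracking of blocking clauses in \(P\), the analysis of later total models \(I'\), the invocation of \autoref{prop:irredallfound} and \autoref{prop:irredtotalmodel} for the ``at least once'' direction, and the appeal to \InvDSOPIrredName---is either already packaged inside \autoref{prop:dualblockirredundant} or belongs to the surrounding propositions rather than to this one. In the paper's decomposition, \autoref{prop:irredallfound} handles existence and \autoref{prop:irredmodels} handles uniqueness, so the latter needs only the pairwise-contradiction fact; your argument effectively re-proves parts of \autoref{prop:dualblockirredundant} and anticipates \autoref{th:correctirred}. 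Nothing you wrote is wrong, but you are doing substantially more work than the statement requires in context.
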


\begin{proof}
  We recall \autoref{prop:dualblockirredundant} stating that only pairwise
  contradicting models are detected. 
  In essence, this says that every model is found exactly once.
\end{proof}

\begin{theorem}[Correctness]
  \label{th:correctirred}
  If \(\state{P}{N}{\false}{\emptytrail}{\level_0} \transenumirred^* M\), then
  \begin{enumerate}[label=(\roman*)]
  \item \(M \equiv \project{F}{X}\) \label{it:meqf}
  \item \(C_i \wedge C_j \equiv \false\) for \(C_i, C_j \in M\) and \(C_i \neq
    C_j\) \label{it:mdsop}
  \end{enumerate}
\end{theorem}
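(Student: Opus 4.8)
The plan is to assemble the final equivalence statement from the pieces already proved. Part~\ref{it:mdsop} is immediate: \autoref{prop:irredmodels} (via \autoref{prop:dualblockirredundant}) states that the disjuncts of \nolinebreak \(M\) are pairwise contradicting, which is exactly \(C_i \wedge C_j \equiv \false\) for distinct \(C_i, C_j \in M\). So the work is all in part~\ref{it:meqf}, namely \(M \equiv \project{F}{X}\). Here I would argue set-theoretically on models over \nolinebreak \(X\): show each total assignment \nolinebreak \(\tau \colon X \to \Bool\) satisfies \nolinebreak \(M\) iff \nolinebreak \(\tau \in \mods(\project{F}{X})\). Recall \autoref{eq:pf} and \autoref{eq:modelspf} give \(\project{P}{X} \equiv \project{F}{X}\), so it suffices to prove \(M \equiv \project{P}{X}\).

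First I would establish the inclusion \(\mods(M) \subseteq \mods(\project{P}{X})\). Each disjunct of \nolinebreak \(M\) is some \(m = \project{\istar}{X}\) added by an application of rule \nolinebreak \BtopIrredName, and \autoref{prop:istarmodel} tells us \(\istar \models \project{P}{X}\); hence every model of \nolinebreak \(M\) is a model of \nolinebreak \(\project{P}{X}\). For the reverse inclusion \(\mods(\project{P}{X}) \subseteq \mods(M)\), I would take an arbitrary \nolinebreak \(\tau \models \project{P}{X}\), extend it (by \autoref{eq:modelspf} and the definition of projection) to a total model \nolinebreak \(I\) of \nolinebreak \(P_0\) over \(X \cup Y \cup S\), and invoke \autoref{prop:irredallfound}: termination means the search space was processed exhaustively. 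Then \autoref{prop:irredtotalmodel} applies to \nolinebreak \(I\): case~\ref{it:Icontained} and case~\ref{it:Isubsumed} both put (an assignment over \nolinebreak \(X\) that \nolinebreak \(\tau\) extends) into \nolinebreak \(M\), so \(\tau \models M\); case~\ref{it:Imodel} — \nolinebreak \(I\) is a model of \nolinebreak \(P_0 \wedge \bigwedge_i B_i\) — is the case to rule out, because it would mean \nolinebreak \(I\) is a total model of the final formula \nolinebreak \(P\) that was never enumerated, contradicting exhaustiveness; I would argue that if such \nolinebreak \(I\) existed at termination, rule \nolinebreak \EbotName could not have fired (its precondition requires a conflicting clause at level zero, i.e. the final \nolinebreak \(P\) is unsatisfiable after the last backtrack to level zero), so by \autoref{prop:irredundantprogress} some rule would still be applicable, contradicting that \nolinebreak \(M\) is terminal.

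I expect the main obstacle to be the last case above: making precise why, upon reaching the terminal state, the formula \nolinebreak \(P\) together with all accumulated blocking clauses has no remaining models over \nolinebreak \(X\) — i.e. why every such \nolinebreak \(\tau\) must already have been blocked, hence already recorded in \nolinebreak \(M\). The clean way is to note that termination via \nolinebreak \EbotName occurs only when a conflict at decision level zero arises in the current \nolinebreak \(P = P_0 \wedge \bigwedge_i B_i\), which by Invariant \nolinebreak \InvImplIIrredName forces \nolinebreak \(P\) itself to be unsatisfiable; since \(\negated{M} \equiv \bigwedge_i B_i\) (each \(B_i = \negated{\decs(m_i)}\) blocks exactly \nolinebreak \(m_i\), and relevant decisions are prioritized so \(\decs(m_i)\) captures \nolinebreak \(m_i\) over \nolinebreak \(X\)), unsatisfiability of \(P_0 \wedge \negated{M}\) gives \(\project{P_0}{X} \models M\), i.e. \(\project{P}{X} \models M\). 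Combined with the forward inclusion and \autoref{eq:pf}, this yields \(M \equiv \project{P}{X} \equiv \project{F}{X}\), completing part~\ref{it:meqf}; part~\ref{it:mdsop} then follows from \autoref{prop:dualblockirredundant} as noted, and the theorem is proved.
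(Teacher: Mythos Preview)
Your approach is correct and matches the paper's: part~(i) via \(M \equiv \project{P}{X}\) using \autoref{prop:istarmodel} for the forward inclusion and \autoref{prop:irredallfound} for the reverse, then \autoref{eq:modelspf} to pass to \(\project{F}{X}\); part~(ii) via \autoref{prop:dualblockirredundant}. The paper's own proof is much terser---it simply cites \autoref{prop:irredallfound} for the reverse inclusion and stops---so your case analysis through \autoref{prop:irredtotalmodel} and the argument that case~\ref{it:Imodel} cannot survive to termination is a genuine elaboration; just note that you handle only termination via rule~\EbotName, whereas rule~\EtopName also terminates (there all \(X\)-literals are propagated at level~\(0\), so the last \(m\) added is the unique remaining projected model and case~\ref{it:Imodel} is again excluded), and that \(\negated{M} \equiv \bigwedge_i B_i\) holds only modulo \(P_0\), which is what your argument actually needs.
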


\begin{proof}
  The cubes in \nolinebreak \(M\) are exactly the \nolinebreak \(\istar\)
  computed from the total models of \nolinebreak \(P\).
  These are models of \nolinebreak \(\project{P}{X}\)
  (\autoref{prop:istarmodel}).
  Since by \autoref{prop:irredallfound} all models are found, \nolinebreak \(M
  \equiv \project{P}{X}\).
  But \(\mods(\exists \,Y,S \stf P(X,Y,S)) = \mods(\exists \,Y \stf F(X,Y))\)  by \autoref{eq:modelspf},
  \iet \(\mods{(\project{P}{X})} = \mods{(\project{F}{X})}\).
  Therefore, \(M \equiv \project{F}{X}\), and \autoref{it:meqf} holds.

  Due to \autoref{prop:dualblockirredundant}, the found models are pairwise
  contradicting, and \autoref{it:mdsop} holds as well.
  Notice that one could also use \autoref{prop:irredmodels}, since, as its
  consequence, only pairwise contradicting models are found.
\end{proof}

\subsection{Generalization to Partial Model Detection}
\label{sec:calculusenumirredpm}

\mainalgirredname only finds total models of \nolinebreak \(P\).
In \nolinebreak SAT solving, this makes sense from an computational point of
view, because checking whether a partial assignment satisfies a formula is more
expensive than extending it to a total one.
However, model enumeration is computationally more expensive than \nolinebreak
SAT solving, hence satisfiability checks, \egt, in the form of entailment checks
\nolinebreak \cite{DBLP:conf/sat/MohleSB20}, might pay off. 
Notice that it still might make sense to shrink the models found.
In this section, we discuss the changes to be made to our approach in order
to support the detection of partial models. 

First, the satisfiability condition need be changed such that it
complies with any other strategy determining whether the (partial) assignment
\nolinebreak \(I\) satisfies \nolinebreak \(P\).
The check now reads ``\(I(P) \equiv \true\)'' and replaces the one on line
\nolinebreak 15 of \mainalgirredname (\autoref{fig:enumirred}).
The rest of the algorithm remains unaltered.
In our calculus (\autoref{fig:calculusenumirred}), the precondition
``\(I(P)\equiv\true\)'' replaces the check whether all variables in \(X \cup Y
\cup S\) are assigned in rules \nolinebreak \EtopName
and \nolinebreak \BtopIrredName.
The other preconditions as well as rules \nolinebreak \EbotName, \UnitName,
\BbotName, \DecXIrredName, and \DecYSIrredName remain unaltered.

Second, the computation of \nolinebreak \(\istar\) need be adapted.
It is based on the assumption that \nolinebreak \(I\) is total such that a
conflict in \nolinebreak \(\residual{N}{I}\) is obtained by propagating only
variables in \nolinebreak \(T\).  
Now \hyperref[fig:invenumirred]{Invariant \InvDualPNName} ensures that a
conflict in \nolinebreak \(\residual{N}{I}\) is obtained also if \nolinebreak
\(I\) is a partial assignment, although in order to obtain this conflict
variables in \(X \cup Y\) might need be propagated or decided.
Projecting the so-obtained assignment \nolinebreak \(I'\) onto \nolinebreak
\(I\) solves the issue.
Hence, we replace line 20 in \mainalgirredname (\autoref{fig:enumirred}) by
``\(I' \mdef \functionname{SAT}(N,\project{I}{X\cup Y});
\istar = \project{I'}{\var(I)}\)''.
These changes need also be reflected in rule \nolinebreak \BtopIrredName and in
the proof of \autoref{prop:irredundantprogress}.

\section{Conflict-Driven Clause Learning for Redundant All-SAT}
\label{sec:learn}

Conflict analysis is based on the assumption that the reason of every propagated
literal is contained in the formula.  
In irredundant model enumeration (see \autoref{sec:pmeirred} and
\autoref{sec:formalenumirred}), this reason is either a clause learned by 
means of conflict-driven clause learning (CDCL) or a blocking clause.
The motivation for adding blocking clauses is to ensure that the (partial)
models detected by our calculus represent pairwise disjoint sets of total
models.
In some tasks, however, enumerating models multiple times causes no harm,
and we can refrain from adding blocking clauses to the formula and
avoid its blowing-up in size.

Consider a formula \nolinebreak \(P(X,Y,S)\) over relevant variables \nolinebreak
\(X\), irrelevant variables \nolinebreak \(Y\) and Tseitin variables
\nolinebreak \(S\), and let \nolinebreak \(I\) with variables in
\(X \cup Y \cup S\) be a total assignment satisfying \nolinebreak \(P\).
Remember that by applying rule \BtopIrredName, the trail \nolinebreak \(I\) is
shrunken to \(\istar\) and projected onto \nolinebreak 
\(X\) obtaining \nolinebreak \(m = \project{\istar}{X}\).
Backtracking to
decision level \(\level(m)-1\) occurs, and the decision literal
\(\decided{\ell}\) at decision level \(\level(m)\) is flipped, \iet, propagated
with reason \(B = \negated{m}\). 
If now \(B\) is not added to \nolinebreak \(P\) and a conflict
involving \nolinebreak \(\comp{\ell}\) occurs, the reason of \nolinebreak \(\comp{\ell}\) is
not available for conflict analysis.
To ensure the functioning of conflict analysis, we propose to annotate
\nolinebreak \(\comp{\ell}\) on the trail with \nolinebreak \(\negated{m}\) but without
adding \nolinebreak \(B\) to \nolinebreak \(P\).

\begin{example}[Conflict analysis for model enumeration]
  \label{ex:conflana}

  Consider the following formula over the set of variables
  \(X = \{a, b, c, d\}\), \(Y = \{e\}\), \(S = \emptyset\): 
  \begin{equation*}
    P(X,Y,S) = \underbrace{(a \vee b \vee \negated{c})}_{C_1} \wedge
    \underbrace{(\negated{b} \vee c)}_{C_2} \wedge
    \underbrace{(d \vee \negated{c} \vee e)}_{C_3} \wedge
    \underbrace{(d \vee \negated{c} \vee \negated{e})}_{C_4}
  \end{equation*}
  Suppose we decide \nolinebreak \(a\) and \nolinebreak \(b\), propagate
  \nolinebreak \(c\) with reason \nolinebreak \(C_2\) and decide \nolinebreak
  \(d\) followed by deciding \nolinebreak \(e\).
  The resulting trail \(I_1 =
  \decided{a}\,\decided{b}\,\propagated{c}{C_2}\,\decided{d}\,\decided{e}\) is a model of
  \nolinebreak \(F\).
  This model is blocked by \(B_1 = (\negated{a} \vee \negated{b} \vee
  \negated{d})\) with decision level \(\level(B_1)=3\), which consists of the
  negated relevant decision literals on \nolinebreak \(I_1\).
  Considering only the decisions ensures that \nolinebreak \(B_1\) contains
  exactly one literal per decision level and that after backtracking
  to decision level \(\level(B_1)-1=2\), the clause \nolinebreak
  \(B_1\) becomes unit.
  Recall that \nolinebreak \(B_1\) is not added to \nolinebreak \(P\).
  The decision literal \nolinebreak \(\decided{d}\) is flipped with reason
  \(B_1\), and \nolinebreak \(e\) is propagated
  with reason \nolinebreak \(C_3\).
  But now \nolinebreak \(C_4\) is falsified.
  The current trail is \(I_2 =
  \decided{a}\,\decided{b}\,\propagated{c}{C_2}\,\propagated{\negated{d}}{B_1}\,\propagated{e}{C_3}\),
  which is visualized by the following implication graph:
  \begin{center}
    \begin{tikzpicture}
      \node (a) {\(a@1\)};
      \node[below = 16pt of a] (b) {\(b@2\)};
      \node[right=60pt of a] (nd) {\(\negated{d}\)};
      \node[below = 16pt of nd] (c) {\(c\)};
      \node (e) at ($(a.south east)+(120pt,-8pt)$) {\(e\)};
      \node[right = 50pt of e] (kappa) {\(\kappa\)};
      \draw[-{to[length=5.5mm]}] (a) to node [above, midway] (TextNode) {\(B_1\)} (nd);
      \draw[-{to[length=5.5mm]}] (b) to [bend right=5] node [above, midway] (TextNode) {\(B_1\)} (nd);
      \draw[-{to[length=5.5mm]}] (b) to node [below, midway] (TextNode) {\(C_2\)} (c);
      \draw[-{to[length=5.5mm]}] (nd) to [bend left=5] node [above, midway] (TextNode) {\(C_3\)} (e);
      \draw[-{to[length=5.5mm]}] (c) to [bend right=5] node [below, midway] (TextNode) {\(C_3\)} (e);
      \draw[-{to[length=5.5mm]}] (nd) to [bend left=30] node [above, midway] (TextNode) {\(C_4\)} (kappa);
      \draw[-{to[length=5.5mm]}] (e) to node [above, midway] (TextNode) {\(C_4\)} (kappa);
      \draw[-{to[length=5.5mm]}] (c) to [bend right=30] node [below, midway] (TextNode) {\(C_4\)} (kappa);
    \end{tikzpicture}
  \end{center}
  For conflict analysis, we first resolve the conflicting clause \nolinebreak \(C_4\) with the reason
  of \nolinebreak \(e\), \nolinebreak \(C_3\), obtaining the resolvent
  \((d \vee \negated{c})\).
  Both \nolinebreak \(d\) and \nolinebreak \(\negated{c}\) have the highest
  decision level \nolinebreak \(2\), and we continue by resolving \((d \vee \negated{c})\)
  with \(B_1\) obtaining \((\negated{c} \vee \negated{a} \vee b)\), followed by resolution with
  \nolinebreak \(C_2\) resulting in \(C_5 = (\negated{a} \vee \negated{b})\),
  which has only one literal at decision level \nolinebreak \(2\).
  The resolution process stops, and \nolinebreak \(C_5\) is
  added to \nolinebreak \(P\).
\end{example}

\section{Projected Redundant Model Enumeration}
\label{sec:pmered}

Now we turn our attention to the case where enumerating models multiple times
is permitted.
This allows for refraining from adding blocking clauses to the formula under
consideration, since they might significantly slow down the enumerator.
This affects both our algorithm and our calculus for irredundant projected model
enumeration.
Omitting the use of blocking clauses has a minor impact on our algorithm and its
formalization.
For this reason, in this section we point out the differences between the two methods.

\subsection{Algorithm and Calculus}

\begin{figure}[!t]
 \setlength{\fboxsep}{0pt}
 \noindent\fbox{
   \parbox{0.979\textwidth}{
     ~\\[\vspacemidrule]
     \renewcommand{\arraystretch}{1.2}
     \setlength\tabcolsep{2pt}
     \centering
     \begin{tabular}{@{}ll@{}}
       \BtopRed\\[\vspacemidrule]
     \end{tabular}
   }
 }
 \caption{Rule for backtracking after detection of a model in redundant model
   enumeration. 
   The calculus for redundant projected model enumeration differs from its
   irredundant counterpart only in the fact that no blocking clauses are used.
   Hence, all rules in \autoref{fig:calculusenumirred} are maintained except
   for rule \nolinebreak \BtopIrredName, which is replaced by rule \nolinebreak
   \BtopRedName.} 
 \label{fig:btopenumred}
\end{figure}

The only difference compared to \mainalgirredname consists in the fact that no
blocking clauses are added to \nolinebreak \(P\).
However, they are remembered as annotations on the trail in order to enable
conflict analysis after finding a model.
Our algorithm \mainalgredname therefore is exactly the same as \mainalgirredname
listed in \autoref{fig:enumirred} without lines \nolinebreak 22--23.
The annotation of flipped literals happens in function
\functionname{Backtrack()} in line 26.

Accordingly, our formalization consists of all rules in
\autoref{fig:calculusenumirred} except for rule \nolinebreak \BtopIrredName
which is replaced by rule \BtopRedName shown in \autoref{fig:btopenumred}.
Rule \BtopRedName differs from rule \BtopIrredName only in the fact that both
\nolinebreak \(P\) and \nolinebreak \(N\) remain unaltered.

\subsection{Example}
\label{sec:calculusenumredexample}

\begin{example}[Projected redundant model enumeration]
  \label{ex:multenumcdclproject}
  Consider again \autoref{ex:multenumcdcl} elaborated in detail in
  \autoref{sec:calculusenumirredexample} for \mainalgirredname.
  We have
  \begin{equation*}
    P = \underbrace{(a \vee c)}_{C_1} \wedge
    \underbrace{(a \vee \negated{c})}_{C_2} \wedge
    \underbrace{(b \vee d)}_{C_3} \wedge
    \underbrace{(b \vee \negated{d})}_{C_4}
  \end{equation*}
  and
  \begin{align*}
    N = \,
      &
        \underbrace{(\negated{t_1} \vee \negated{a})}_{D_1} \wedge
        \underbrace{(\negated{t_1} \vee \negated{c})}_{D_2} \wedge
        \underbrace{(a \vee c \vee t_1)}_{D_3} \wedge\\
      &
        \underbrace{(\negated{t_2} \vee \negated{a})}_{D_4} \wedge
        \underbrace{(\negated{t_2} \vee c)}_{D_5} \wedge
        \underbrace{(a \vee \negated{c} \vee t_2)}_{D_6} \wedge\\
      &
        \underbrace{(\negated{t_3} \vee \negated{b})}_{D_7} \wedge
        \underbrace{(\negated{t_3} \vee \negated{d})}_{D_8} \wedge
        \underbrace{(b \vee d \vee t_3)}_{D_9} \wedge\\
      &
        \underbrace{(\negated{t_4} \vee \negated{b})}_{D_{10}} \wedge
        \underbrace{(\negated{t_4} \vee d)}_{D_{11}} \wedge
        \underbrace{(b \vee \negated{d} \vee t_4)}_{D_{12}} \wedge\\
      &
        \underbrace{(t_1 \vee t_2 \vee t_3 \vee t_4)}_{D_{13}}
  \end{align*}
  Suppose \nolinebreak \(X = \{a, c\}\) and \nolinebreak \(Y = \{b, d\}\). 
  The execution trail is depicted in \autoref{fig:calculusenumredexample}.

  \begin{figure}[!t]
    \begin{center}
      \begin{tabular}{cllccc}
        \toprule
        Step & ~Rule & ~\(I\) & \(\residual{P}{I}\) & \(M\)\\
        \midrule
        0
          &
          &
            \(\emptytrail\)
          &
            \(P\) 
          &
            \(\false\)\\
        \rowcolor{black!10} 1
          &
            \DecXIrredName
          &
            \(\decided{a}\)
          &
            \((b \vee d) \wedge (b \vee \negated{d})\)
          &
            \(\false\)\\
        2
          &
            \DecXIrredName
          &
            \(\decided{a}\,\decided{b}\)
          &
            \(\true\)
          &
            \(\false\)\\
        \rowcolor{black!10} 3
          &
            \DecYSIrredName
          &
            \(\decided{a}\,\decided{b}\,\decided{c}\)
          &
            \(\true\)
          &
            \(\false\)\\
        4
          &
            \DecYSIrredName
          &
            \(\decided{a}\,\decided{b}\,\decided{c}\,\decided{d}\)
          &
            \(\true\)
          &
            \(\false\)\\
        \rowcolor{black!10} 5 
          &
            \BtopRedName
          &
            \(\decided{a}\,\propagated{\negated{b}}{B_1}\)
          &
            \((d) \wedge (\negated{d})\)
          &
            \(a \wedge b\)\\
        6
          &
            \UnitName
          &
            \(\decided{a}\,\propagated{\negated{b}}{B_1}\propagated{d}{C_3}\)
          &
            \(0\)
          &
            \(a \wedge b\)\\
        \rowcolor{black!10} 7
          &
            \BbotName
          &
            \(\propagated{b}{C_5}\)
          &
            \((a \vee c) \wedge (a \vee \negated{c})\)
          &
            \(a \wedge b\)\\
        8
          &
            \DecXIrredName
          &
            \(\propagated{b}{C_5}\,\decided{a}\)
          &
            \(\true\)
          &
            \(a \wedge b\)\\
        \rowcolor{black!10} 9
          &
            \DecYSIrredName
          &
            \(\propagated{b}{C_5}\,\decided{a}\,\decided{c}\)
          &
            \(\true\)
          &
            \(a \wedge b\)\\
        10
          &
            \DecYSIrredName
          &
            \(\propagated{b}{C_5}\,\decided{a}\,\decided{c}\,\decided{d}\)
          &
            \(\true\)
          &
            \(a \wedge b\)\\
        \rowcolor{black!10} 11
          &
            \BtopRedName
          &
            \(\propagated{b}{C_5}\,\propagated{\negated{a}}{B_2}\)
          &
            \((c) \wedge (\negated{c})\)
          &
            \((a \wedge b) \vee (b \wedge a)\)\\
        12
           &
             \UnitName
           &
             \(\propagated{b}{C_5}\,\propagated{\negated{a}}{B_2}\,\propagated{c}{C_1}\)
           &
             \(\false\)
           &
             \((a \wedge b) \vee (b \wedge a)\)\\
        \rowcolor{black!10} 13
             &
               \EbotName{}
             &
             &
             &
               \((a \wedge b) \vee (b \wedge a)\)\\
        \bottomrule
        \end{tabular}
      \end{center}
      \caption{Execution trace for \(F = (a \vee c) \wedge (a \vee \negated{c})
        \wedge (b \vee d) \wedge (b \vee \negated{d})\) defined over the set of
        relevant variables \(X = \{a, b\}\) and the set of irrelevant variables
        \(Y = \{c, d\}\) (see \autoref{ex:multenumcdcl}).}
      \label{fig:calculusenumredexample}
    \end{figure}

  Assume we decide \nolinebreak \(a\), \nolinebreak \(b\), \nolinebreak \(c\), and
  \nolinebreak \(d\) (steps \nolinebreak 1--4) obtaining the trail 
  \(I_1 = \decided{a}\,\decided{b}\,\decided{c}\,\decided{d}\) which is a model
  of \nolinebreak \(P\).
  Dual model shrinking occurs as in step 5 in the example elaborated in
  \autoref{sec:calculusenumirredexample}, except that the assumed literals
  \nolinebreak \(b\) and \nolinebreak \(c\) occur in a different order,
  and the same model \nolinebreak \(a\,b\) is obtained.
  Notice that the clause \(B_1 = (\negated{a} \vee \negated{b})\) is
  not added to \nolinebreak \(P\).

  After backtracking, \(\residual{P}{I_1} = (d) \wedge (\negated{d})\),
  and after propagating \nolinebreak \(d\) \nolinebreak (step \nolinebreak  6),
  we obtain a conflict.
  The current trail is
  \(I_3 = \decided{a}\,\propagated{\negated{b}}{B_1}\,\propagated{d}{C_3}\) and
  \(\residual{C_4}{I_3} = ()\).
  Resolution of the reasons on \nolinebreak \(I_3\) in reverse
  assignment order is executed, starting with the conflicting clause \nolinebreak \(C_4\).
  We obtain \(C_4 \otimes C_3 = (b) = C_5\), which contains exactly one literal
  at the maximum decision level, hence no further resolution steps are required. 
  Since \nolinebreak \((b)\) is unit, the enumerator backtracks to decision
  level \nolinebreak \(0\) and propagates \nolinebreak \(b\) with reason
  \nolinebreak \(C_5\) (step \nolinebreak 7). 
  After deciding \nolinebreak \(a\), \nolinebreak \(c\), and \nolinebreak \(d\),
  we find the same model \nolinebreak \(b\,a\,c\,d\) as in step \nolinebreak 4 (steps 8--10).
  Obviously, model shrinking provides us with the same model \nolinebreak
  \(b\,a\), which is added to \nolinebreak \(M\), and the last relevant decision
  is flipped (step \nolinebreak 11).
  Now unit propagation leads to a conflict (step \nolinebreak 12), and since
  there are no decisions on the trail, the procedure stops (step \nolinebreak
  13).
  Now the cubes in \nolinebreak \(M\), which represent the models of
  \nolinebreak \(P\), are not pairwise disjoint anymore.
  However, we still have \(M \equiv \project{P}{X} \equiv \project{F}{X}\).
\end{example}

\subsection{Proofs}
\label{sec:calculusenumredproofs}

\hyperref[fig:invenumirred]{Invariants \InvDualPNName} and \hyperref[fig:invenumirred]{\InvDecsName} listed in
\autoref{fig:invenumirred} are applicable also for redundant model enumeration,
since they involve none of \nolinebreak \(P\) and \nolinebreak \(N\).
\hyperref[fig:invenumirred]{Invariant \InvImplIIrredName} instead need be adapted since no
blocking clauses are added to \nolinebreak \(P\) and therefore it ceases to
hold.
Assume a model \nolinebreak \(I\) has been found and shrunken to \nolinebreak
\(m\) and that the last relevant decision
literal \nolinebreak \(\ell\) has been flipped.
Since its reason \nolinebreak \(B = \decs(\negated{m})\) is not added to \nolinebreak
\(P\), from \(P \wedge \decs(I)\) we can not infer \nolinebreak \(I\).
Recall that instead \nolinebreak \(m\) is added to \nolinebreak \(M\), hence
\nolinebreak \(\negated{M}\) contains the reasons of all decision literals which
were 
flipped after having found a model.
A closer look reveals that this case is analog to the one in our previous work
\nolinebreak \cite{DBLP:conf/gcai/MohleB19}.
In this work, we avoided the use of blocking clauses by means of chronological
backtracking.
However, the basic idea is the same, and we replace \hyperref[fig:invenumirred]{Invariant 
\InvImplIIrredName} by \hyperref[fig:invenumred]{Invariant \InvImplIRedName} listed in
\autoref{fig:invenumred}.
This is exactly Invariant \nolinebreak (3) in our previous work on model
counting \nolinebreak \cite{DBLP:conf/gcai/MohleB19}, hence in our proof we
use a similar argument.
The invariants for redundant model enumeration under projection are given in
\autoref{fig:invenumred}.

\begin{figure}[!t]
    \setlength{\fboxsep}{0pt}
    \noindent\fbox{
      \parbox{0.979\textwidth}{
        ~\\[\vspacemidrule]
        \renewcommand{\arraystretch}{1.2}
        \setlength\tabcolsep{2pt}
        \centering
        \begin{tabular}{@{}ll@{}}
          \InvDualPN\\[\vspacebetweeninvs]
          \InvDecs\\[\vspacebetweeninvs]
          \InvImplIRed\\[\vspacemidrule]
        \end{tabular}
      }
    }
    \caption{
      Invariants for projected model enumeration with repetition.
      Notice that Invariants \InvDualPNName and \InvDecsName are the same as for
      irredundant model enumeration while, due to the lack of blocking clauses,
      in Invariant \InvImplIIrredName the models recorded in \nolinebreak \(M\)
      need be considered.
    }
    \label{fig:invenumred}
  \end{figure}

\subsubsection{Invariants in Non-Terminal States}
\label{sec:calculusenumredproofsinv}

\begin{proposition}[Invariants in \mainalgredname]
  \label{prop:invenumred}
  The \hyperref[fig:invenumred]{Invariants \InvDualPNName},
  \hyperref[fig:invenumred]{\InvDecsName}, and
  \hyperref[fig:invenumred]{\InvImplIRedName} hold in 
  non-terminal states. 
\end{proposition}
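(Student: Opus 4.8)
The plan is to run the same induction on the number of rule applications used for \autoref{prop:invenumirred}, taking advantage of the fact that \mainalgredname differs from \mainalgirredname only in that \BtopRedName replaces \BtopIrredName and leaves \emph{both} \(P\) and \(N\) unchanged. First I would check the base case: the initial state \(\state{P_0}{N_0}{\false}{\emptytrail}{\level_0}\) satisfies \InvDualPNName (this is \autoref{eq:pnotn} for the dual encoding \autoref{eq:pzero}--\autoref{eq:nzero}), \InvDecsName (vacuously, on the empty trail), and \InvImplIRedName (because \(\negated{M} \equiv \true\) and \(\filter{\emptytrail}{\leq n} \equiv \true\)). Since \EtopName and \EbotName lead to terminal states, the induction step only needs to treat transitions into non-terminal states, \iet the rules \UnitName, \BtopRedName, \BbotName, \DecXIrredName, and \DecYSIrredName.

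For \InvDualPNName and \InvDecsName I expect the arguments to be essentially verbatim those in the proof of \autoref{prop:invenumirred}. No redundant rule modifies \(N\); the only rule that modifies \(P\) is \BbotName, which adds a clause \(D\) with \(P \models D\), so \(P \wedge D \equiv P\) and \InvDualPNName is undisturbed, while \BtopRedName changes neither \(P\) nor \(N\), making its \InvDualPNName case even shorter than in the irredundant setting. The trail-structure reasoning for \InvDecsName — that the literals appended by \UnitName and \BbotName are not decisions (each lives in a clause falsified down to a single literal and shares its level with an earlier trail literal), that \BtopRedName turns the flipped decision into a propagation sitting behind a same-level literal of \(B\), and that \DecXIrredName and \DecYSIrredName add genuine decisions at level \(\level(I)+1\) — is unchanged, since the \emph{structure} of the trail and of the reason clauses is the same whether or not \(B\) is learned.

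The substantive part is \InvImplIRedName, \iet \(\forall n\,.\ P \wedge \negated{M} \wedge \decsf{\leq n}(I) \models \filter{I}{\leq n}\). For \UnitName, \BbotName, \DecXIrredName, and \DecYSIrredName the extra conjunct \(\negated{M}\) is merely carried along, since \(M\) is untouched: the modus-ponens derivations (resolve the reason of the propagated literal against the induction hypothesis for \UnitName and \BbotName; conjoin the fresh decision to both sides for the decision rules) go through after replacing \(P \wedge \decsf{\leq n}(\cdot)\) by \(P \wedge \negated{M} \wedge \decsf{\leq n}(\cdot)\). The delicate case is \BtopRedName, where the reason \(B = \negated{\decs(m)}\) of the newly propagated literal \(\ell\) is deliberately \emph{not} learned into \(P\); instead \(m = \project{\istar}{X}\) is disjoined into \(M\), so after the step \(\negated{M}\) gains the clause \(\negated{m}\). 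For \(n < b\) the literal \(\ell\) does not occur in \(\filter{(J\ell)}{\leq n}\) and \(\decsf{\leq n}(J\ell) = \decsf{\leq n}(I)\), so the obligation follows from the induction hypothesis with a strengthened antecedent; for \(n \geq b\) one must re-derive \(\ell\) from \(P \wedge \negated{M} \wedge \negated{m} \wedge \decsf{\leq b}(J)\), using that the decisions of \(m\) of level at most \(b\) lie on \(J\) (hence are conjuncts of \(\decsf{\leq b}(J)\)) and that each non-decision literal of \(m\) below level \(b+1\) is a \(P\)-consequence of the lower-level decisions by the induction hypothesis applied at its own level, so that under \(P \wedge \negated{M} \wedge \decsf{\leq b}(J)\) the clause \(\negated{m}\) collapses to the unit \(\ell\); the precondition \(\level(B) = \level(m) = b+1\) with \(B = \negated{\decs(m)}\) is what pins \(\negated{\ell}\) down as the unique top-level literal of \(m\). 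This is exactly the flipped-decision mechanism behind Invariant \nolinebreak (3) in \cite{DBLP:conf/gcai/MohleB19}, so I would present the argument by adapting that accounting. I expect this \BtopRedName case — specifically the bookkeeping that every literal of \(m\) other than the flipped decision survives the backtrack as a \(P\)-consequence of \(\decsf{\leq b}(J)\), together with the split into the levels \(n\) at which \(\ell\) drops out of the proof obligation — to be the main obstacle; everything else is a routine re-run of \autoref{prop:invenumirred} with \(\negated{M}\) carried through.
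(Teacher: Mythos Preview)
Your overall plan --- induction on rule applications, reusing the \mainalgirredname arguments verbatim for \InvDualPNName and \InvDecsName, and carrying the extra conjunct \(\negated{M}\) through the modus-ponens derivations for \UnitName, \BbotName, \DecXIrredName, \DecYSIrredName --- matches the paper's proof and is correct. In fact your ``\(M\) is untouched, so \(\negated{M}\) just rides along'' argument for the four non-\BtopRedName rules is cleaner than the paper's reduction via \autoref{prop:irredtotalmodel}.

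There is, however, a real gap in your \BtopRedName case for \InvImplIRedName. You claim that ``the precondition \(\level(B) = \level(m) = b+1\) with \(B = \negated{\decs(m)}\) is what pins \(\negated{\ell}\) down as the unique top-level literal of \(m\)'', and then argue that \(\negated{m}\) collapses to the unit \(\ell\) under \(P \wedge \negated{M} \wedge \decsf{\leq b}(J)\). The precondition only tells you that \(\negated{\ell}\) is the unique \emph{decision} of \(m\) at level \(b+1\); nothing prevents \(m = \project{\istar}{X}\) from containing \emph{propagated} \(X\)-literals at level \(b+1\) (any \(X\)-literal unit-propagated after the decision \(\negated{\ell}\) could survive the dual shrinking and land in \(\istar\)). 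For such a literal \(k\) your ``induction hypothesis applied at its own level'' would be the hypothesis at level \(b+1\), whose antecedent contains \(\decsf{\leq b+1}(I)\) and hence \(\negated{\ell}\) --- precisely the literal you are trying to derive. So the collapse argument as written is circular for those literals.

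The fix is the paper's route: argue towards \(\ell\) by contradiction. From \(\decsf{\leq b}(J) \wedge \negated{\ell} = \decsf{\leq b+1}(I)\) and the induction hypothesis at \(n = b+1\) you get \(P \wedge \negated{M} \wedge \decsf{\leq b}(J) \wedge \negated{\ell} \models \filter{I}{\leq b+1}\); since \(\level(m) = b+1\) we have \(m \subseteq \filter{I}{\leq b+1}\), so the same antecedent entails \(m\), contradicting \(\negated{m}\). Hence \(P \wedge \negated{M} \wedge \negated{m} \wedge \decsf{\leq b}(J) \models \ell\). This handles all literals of \(m\) at level \(b+1\) in one stroke and closes the case without assuming anything about which of them are decisions.
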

\begin{proof}
  The proof is carried out by induction over the number of rule applications.
  Assuming \hyperref[fig:invenumred]{Invariants \InvDualPNName}, \hyperref[fig:invenumred]{\InvDecsName},
  and \hyperref[fig:invenumred]{\InvImplIRedName}
  hold in a non-terminal state \(\state{P}{N}{M}{I}{\level}\), we show that they
  are met after the transition to another non-terminal state for all rules.

  Now rules \nolinebreak \EtopName, \EbotName, \BbotName, \DecXIrredName, and
  \nolinebreak \DecYSIrredName are the same as for \mainalgirredname
  \nolinebreak (\autoref{fig:calculusenumirred}).
  In \autoref{sec:calculusenumirredproofsinv} we already proved that after the
  execution of these rules \hyperref[fig:invenumred]{Invariants 
  \InvDualPNName} and \hyperref[fig:invenumred]{\InvDecsName} still hold.

  As for \hyperref[fig:invenumred]{Invariant \InvImplIRedName}, from
  \autoref{prop:irredtotalmodel}, \nolinebreak \autoref{it:Icontained} and
  \nolinebreak \autoref{it:Isubsumed}, and observing that \(m \leq I\), where \nolinebreak
  \(I\) is a total model of \nolinebreak \(P\) and \nolinebreak \(m \in M\) its
  projection onto the relevant variables, we can conclude that \hyperref[fig:invenumred]{Invariant
  \InvImplIRedName} holds as well.
  To see this, remember that in \hyperref[fig:invenumred]{Invariant \InvImplIIrredName} we consider
  \nolinebreak \(P = P_0 \wedge \bigwedge_i B_i\) where the \nolinebreak \(B_i\)
  are the clauses added to \nolinebreak \(P_0\) blocking the models \nolinebreak
  \(m_i\).
  But \(B_i \leq m_i\), hence \hyperref[fig:invenumred]{Invariant \InvImplIRedName} holds after applying rules \UnitName, \BbotName, \DecXIrredName, and \DecYSIrredName, and we
  are left to carry out the proof for rule \BtopRedName. 

  \skipbetweenrules


  \noindent
  \underline{\BtopRedName} 
  \vspaceafterrulename

  \skipbetweeninvs
  
  \noindent
  \textit{\hyperref[fig:invenumred]{Invariant \InvDualPNName}:}  \skipafterinvname
  Both \nolinebreak \(P\) and \nolinebreak \(N\) remain unaltered, therefore
  \hyperref[fig:invenumred]{Invariant \InvDualPNName} holds after the application of
  \nolinebreak \BtopRedName.
  
  \skipbetweeninvs
    
  \noindent
  \textit{\hyperref[fig:invenumred]{Invariant \InvDecsName}:} \skipafterinvname
  The proof is analogous to the one for rule \nolinebreak \BtopIrredName.
  
  \skipbetweeninvs
  
  \noindent
  \textit{\hyperref[fig:invenumred]{Invariant \InvImplIRedName}:} \skipafterinvname
  We need to show that
  \(P \wedge \negated{(M \vee m)} \wedge \decsf{\leqslant n}(J\,\ell)
  \models \filter{(J\,\ell)}{\leqslant n}\) for all \nolinebreak \(n\).
  First, notice that the decision levels of all the literals in \nolinebreak
  \(J\) do not change while applying the rule.
  Only the decision level of \nolinebreak \(\ell\) is
  decremented from \nolinebreak \(b+1\) to \nolinebreak \(b\).
  It also stops being a decision.
  Since \(\level(J\,\ell) = b\), we can assume \(n \leqslant b\).
  Observe that
  \(P \wedge \negated{(M \vee m)} \wedge \decsf{\leqslant n}(J\,\ell) \equiv
  \negated{m} \wedge (P \wedge \negated{M} \wedge \decsf{\leqslant n}(I))\), 
  since \nolinebreak \(\ell\) is not a decision in \nolinebreak \(J\,\ell\) and
  \(\filter{I}{\leqslant b} = J\) and
  \(\filter{I}{\leqslant n} = \filter{J}{\leqslant n}\) by definition.
  Now the induction hypothesis is applied and we get
  \(P \wedge \negated{(M \vee m)} \wedge \decsf{\leqslant n}(J\,\ell) \models
  \filter{I}{\leqslant n}\). 
  Again, using
  \(\filter{I}{\leqslant n} = \filter{J}{\leqslant n}\),
  this almost closes the proof except that we are left to prove
  \(P \wedge \negated{(M \vee m)} \wedge \decsf{\leqslant e}(J\,\ell) \models
  \ell\)
  as \nolinebreak \(\ell\) has decision level \nolinebreak \(b\) in \nolinebreak
  \(J\,\ell\) after applying the rule and thus \nolinebreak \(\ell\) disappears in
  the proof obligation for \(n < b\). 
  To see this notice that \(P \wedge \negated{B} \models \filter{I}{\leq b+1}\)
  using again the induction hypothesis for \(n = b + 1\) and recalling that
  \(\negated{B} = \decsf{\leq b+1}(I)\).
  This gives
  \(P \wedge \negated{\decsf{\leqslant b}(J)} \wedge \negated{\ell} \models
  \filter{I}{\leq b+1}\) and thus
  \(P \wedge \negated{\decsf{\leq b}(J)} \wedge \negated{\filter{I}{\leq b+1}}
  \models \ell\) by conditional contraposition.
\end{proof}

\subsubsection{Progress and Termination}
\label{sec:progtermredenum}

The proofs that our method for redundant projected model enumeration always
makes progress and eventually terminates are the same as in
\autoref{sec:calculusenumirredproofsprog} and
\autoref{sec:calculusenumirredproofsterm}.

\subsubsection{Equivalence}
\label{sec:eqredenum}

Some properties proved for the case of irredundant model enumeration cease to
hold if we allow enumerating redundant models.
Specifically, \autoref{prop:istarmodel}, and \autoref{prop:irredallfound} hold,
while \autoref{prop:irredmodels} does not.
\autoref{it:Icontained} and \autoref{it:Isubsumed} of
\autoref{prop:irredtotalmodel} hold, while
\autoref{it:Imodel} does not.
In \autoref{th:correctirred},
\autoref{it:meqf} holds but
\autoref{it:mdsop} does not.
Their proofs remain the same as for irredundant model
enumeration in \autoref{sec:calculusenumirredproofsinv}.

\subsection{Generalization}
\label{sec:genredenum}

The same observations made for irredundant model enumeration in
\autoref{sec:calculusenumirredpm} apply.

\begin{changed}

\section{Discussion}
\label{sec:discussion}


The complexity bounds for All-SAT are exponential in the number of variables
occurring in the formula for all algorithms, due to the size of the search
space. 
The goal is therefore to reduce the number of assignments to be checked, which
is achieved by CDCL and adding short blocking clauses. 
In the brute-force approach for irredundant model enumeration, a blocking clause
is exactly the negation of the satisfying assignment or consists of its negated
decision literals \nolinebreak
\cite{DBLP:conf/ictai/MorgadoS05,DBLP:journals/jea/TodaS16}. 
In both cases, the resulting blocking clause has size at least the decision
level of the trail, and their number corresponds to the number of models.
The addition of blocking clauses to \nolinebreak \(P\) (see line \nolinebreak 22
of algorithm \mainalgirredname listed in \autoref{fig:enumirred})
slows down unit propagation, in particular if they are long.
Blocking many assignments with few clauses is therefore crucial, since
CDCL-based SAT solvers spend most of their computing time with unit propagation.

For computing short blocking clauses, our dual model shrinking approach relies
on the ability of conflict analysis to determine short clauses.
Given a formula \nolinebreak \(P\) and its negation \nolinebreak \(N\), it
identifies the reason for the conflict in \nolinebreak \(N\) induced by an
assignment satisfying \nolinebreak \(P\) by adopting \nolinebreak CDCL in
\nolinebreak \(N\). 
Due to the effectiveness of conflict analysis, our shrinking method has the
potential to rule out a large number of assignments satisfying \nolinebreak
\(P\) as is shown by an example.  

\begin{example}[Efficiency of dual model shrinking]
  \label{ex:efficiency_dual}
  Consider the set of variables \(X=\{a,b,c,d,e,f\}\) and
  the formula
  \(F(X,Y)=
  (a \wedge e \wedge f) \vee
  (b \wedge e \wedge \negated{f}) \vee
  (c \wedge \negated{e} \wedge f) \vee
  (d \wedge \negated{e} \wedge \negated{f})\).
  Without loss of generality, we assume \(Y=\emptyset\).
  A \nolinebreak CNF representation of \nolinebreak \(F\) is given by
  \begin{align*}
    P(X,Y,S) =\;
    &
      (\negated{v_1} \vee a) \wedge
      (\negated{v_1} \vee e) \wedge
      (\negated{v_1} \vee f) \wedge
      (v_1 \vee \negated{a} \vee \negated{e} \vee \negated{f}) \wedge~\\
    &
      (\negated{v_2} \vee b) \wedge
      (\negated{v_2} \vee e) \wedge
      (\negated{v_2} \vee \negated{f}) \wedge
      (v_2 \vee \negated{b} \vee \negated{e} \vee f) \wedge~\\
    &
      (\negated{v_3} \vee c) \wedge
      (\negated{v_3} \vee \negated{e}) \wedge
      (\negated{v_3} \vee f) \wedge
      (v_3 \vee \negated{c} \vee e \vee \negated{f}) \wedge~\\
    &
      (\negated{v_4} \vee d) \wedge
      (\negated{v_4} \vee \negated{e}) \wedge
      (\negated{v_4} \vee \negated{f}) \wedge
      (v_4 \vee \negated{d} \vee e \vee f) \wedge~\\
    &
      (v_1 \vee v_2 \vee v_3 \vee v_4),
  \end{align*}
  where \(S=\{v_1, v_2, v_3, v_4\}\).
  Let further a \nolinebreak CNF representation \nolinebreak \(\negated{F}\) be
  given by
  \begin{align*}
    N(X,Y,T)=\;
    &
      (t_0) \wedge~\\
    &
      (\negated{u_1} \vee \negated{a} \vee \negated{e} \vee \negated{f}) \wedge
      (u_1 \vee a) \wedge
      (u_1 \vee e) \wedge
      (u_1 \vee f) \wedge~\\
    &     
      (\negated{u_2} \vee \negated{b} \vee \negated{e} \vee f) \wedge
      (u_2 \vee b) \wedge
      (u_2 \vee e) \wedge
      (u_2 \vee \negated{f}) \wedge~\\
    &
      (\negated{u_3} \vee \negated{c} \vee e \vee \negated{f}) \wedge
      (u_3 \vee c) \wedge
      (u_3 \vee \negated{e}) \wedge
      (u_3 \vee f) \wedge~\\
    &
      (\negated{u_4} \vee \negated{d} \vee e \vee f) \wedge
      (u_4 \vee d) \wedge
      (u_4 \vee \negated{e}) \wedge
      (u_4 \vee \negated{f}) \wedge~\\
    &
      (\negated{u_5} \vee u_1) \wedge
      (\negated{u_5} \vee u_2) \wedge
      (\negated{u_5} \vee u_3) \wedge
      (\negated{u_5} \vee u_4) \wedge~\\
    &
      (u_5 \vee \negated{u_1} \vee \negated{u_2} \vee \negated{u_3} \vee
      \negated{u_4}) \wedge~\\
    &
      (u_6 \vee t_0 \vee u_5) \wedge
      (u_6 \vee \negated{t_0} \vee \negated{u_5}) \wedge~\\
    &
      (\negated{u_6} \vee \negated{t_0} \vee u_5) \wedge 
      (\negated{u_6} \vee t_0 \vee \negated{u_5}) \wedge~\\
    &
      (u_6)
  \end{align*}
  Suppose the model 
  \(I=
  \decided{a}\,\decided{b}\,\decided{c}\,\decided{d}\,\decided{e}\,
  \negated{v_3}\,\negated{v_4}\,\decided{f}\,\negated{v_1}\,\negated{v_2}\) of
  \nolinebreak \(P\) has been found, where for better readability we omit the
  reasons of the propagation literals. 
  Obviously, \(\istar=a\,e\,f\) already satisfies \nolinebreak \(F\), and its
  negation, the clause
  \(\negated{\istar}=(\negated{a} \vee \negated{e} \vee \negated{f})\), is
  obtained after calling a \nolinebreak SAT solver on \nolinebreak \(N\) and
  \(\project{I}{X}\) and analyzing the resulting conflict.
  Our algorithm for irredundant model enumeration enumerates exactly the cubes
  in \nolinebreak \(F\), which is optimal.
\end{example}

This conflict is obtained exclusively by unit propagation, which
is linear in the length of \nolinebreak \(I\) and the computation of
\nolinebreak \(\istar\) requires a linear number of resolution steps.
Finally, the dual blocking clause encoding is linear in the size of the
original formula.
The enumerated models in \autoref{ex:efficiency_dual} coincide with the ones
obtained by our dual approach for projected model counting \nolinebreak
\cite{DBLP:conf/ictai/MohleB18}, which in our 
experiments finds minimal partial models, but without the overhead of processing
two formulae throughout the computation. 
As already mentioned, for algorithm \mainalgirredname to be correct
when adopting dual model shrinking in line \nolinebreak 20, \(N\) need represent
the negation of \nolinebreak \(P\) anytime.
Concretely, line \nolinebreak 23 is mandatory in combination with dual model
shrinking in line \nolinebreak 20 for determining the backtracking level in line
\nolinebreak 25.
Experiments further show that if a different, non-trivial blocking
clause computation strategy is adopted in line \nolinebreak 20, restarts are
required, which might lead to repeating the same (partial) assignments multiple
times \nolinebreak  \cite{DBLP:conf/sat/NadelR18,DBLP:journals/jsat/TakRH11}.

The strength of executing entailment checks, as proposed as a generalization, is
shown by \autoref{ex:efficiency_dual}.
In fact, the trail
\(I=\decided{a}\,\decided{b}\,\decided{c}\,\decided{d}\) already logically
entails \nolinebreak \(P\).
The combination of logical entailment checks and dual model shrinking bears the
potential to enumerate even shorter models.
Recent experiments support our claim for the efficiency of executing logical
entailment checks and of dual reasoning for model shrinking \nolinebreak
\cite{DBLP:conf/sat/FriedNSS24,DBLP:conf/aaai/SpallittaSB24}.




\end{changed}

\section{Conclusion}
\label{sec:conclusion}

Model enumeration and projection, with and without repetition, is a key element
to several tasks.
We have presented two methods for propositional model enumeration under
projection. 
\mainalgirredname uses blocking clauses to avoid enumerating models multiple
times, while \mainalgredname is exempt from blocking clauses and admits repetitions.
Our CDCL-based model enumerators detect total models and uses dual reasoning to
shrink them.

To ensure correctness of the shrinking mechanism,
we developed a dual encoding of the blocking clauses.
We provided a formalization and proof of correctness of our blocking-based
model enumeration approach and discussed a generalization to the case where partial models are
found.
These partial models might not be minimal, hence shrinking them still might make
sense.
Also, there is no guarantee that the shrunken models are minimal as they depend
on the order of the variable assignments.

We presented a conflict-driven clause learning mechanism for redundant model enumeration,
since standard \nolinebreak CDCL might fail in the absence of blocking clauses.
Basically, those clauses are remembered on the trail without being added to the
input formula.
This prevents a blowup of the formula but also does not further make use of
these potentially short clauses, which in general propagate more eagerly than
long clauses. 
We discussed the modifications of our blocking-based algorithm and calculus to support 
redundant model enumeration and provided a correctness proof.
Intuitively, shorter partial models representing non-disjoint
sets of total models might be found.

Our method does not guarantee that the shrunken model \nolinebreak \(\istar\) is
minimal \wrtt the decision level \nolinebreak \(b\) in line \nolinebreak
\mainalgirredname.
However, finding short \nolinebreak DSOPs is important in circuit design \nolinebreak
\cite{DBLP:conf/iccad/MinatoM98}, and appropriate algorithms have been
introduced by, \egt, Minato \nolinebreak \cite{Minato1993Generation}.
While \nolinebreak DSOP minimization has been proven to be NP-complete
\nolinebreak \cite{DBLP:journals/mst/BernasconiCLP13}, finding a smaller
decision level \nolinebreak \(b\) would already be advantageous, since besides
restricting the search space to be explored it generates shorter models.
To this end, we plan to adapt our dual shrinking algorithm to exploit the Tseitin
encoding as proposed by Iser et al. \nolinebreak \cite{DBLP:conf/sat/IserST13}.

In the presence of multiple conflicting clauses, a related interesting question
might also be which one to choose as a starting point for conflict analysis
with the aim to backtrack as far as possible.
This is not obvious unless all conflicts are analyzed.

Determining short models makes our approach suitable for circuit design.
We are convinced that this work provides incentives not only for the
hardware-near community but also for the enumeration community.

\section*{Acknowledgments} 

The work was supported by the LIT Secure and Correct Systems Lab funded by the
State of Upper Austria, by the QuaSI project funded by D-Wave Systems Inc., and
by the Italian Association for Artificial Intelligence (AI*IA).
We acknowledge the support of the MUR PNRR project
FAIR – Future AI Research (PE00000013), under the NRRP
MUR program funded by the NextGenerationEU. The work
was partially supported by the project ``AI@TN'' funded by
the Autonomous Province of Trento. This research was partially supported by TAILOR, a project funded by the EU
Horizon 2020 research and innovation program under GA
No 952215.
We also thank Mathias Fleury for a number of helpful discussions which in
particular led to a more elegant equivalence proof compared to our original
version.
We are also indebted to the anonymous reviewers for their valuable feedback.

\bibliographystyle{plain}
\bibliography{ms}

\begin{thebibliography}{10}

\bibitem{DBLP:conf/sat/AzizCMS15}
Rehan~Abdul Aziz, Geoffrey Chu, Christian~J. Muise, and Peter~J. Stuckey.
\newblock {\#}{$\exists$}{SAT}: Projected model counting.
\newblock In {\em {SAT}}, volume 9340 of {\em Lecture Notes in Computer
  Science}, pages 121--137. Springer, 2015.

\bibitem{DBLP:journals/eccc/ECCC-TR03-003}
Fahiem Bacchus, Shannon Dalmao, and Toniann Pitassi.
\newblock {DPLL} with caching: {A} new algorithm for {\#}{SAT} and {B}ayesian
  inference.
\newblock {\em Electron. Colloquium Comput. Complex.}, 10(003), 2003.

\bibitem{DBLP:conf/aaai/Pehoushek00}
{Roberto J.} {Bayardo Jr.} and Joseph~Daniel Pehoushek.
\newblock Counting models using connected components.
\newblock In {\em {AAAI/IAAI}}, pages 157--162. {AAAI} Press / The {MIT} Press,
  2000.

\bibitem{DBLP:journals/mst/BernasconiCLP13}
Anna Bernasconi, Valentina Ciriani, Fabrizio Luccio, and Linda Pagli.
\newblock Compact {DSOP} and partial {DSOP} forms.
\newblock {\em Theory Comput. Syst.}, 53(4):583--608, 2013.

\bibitem{DBLP:conf/tacas/BiereCCZ99}
Armin Biere, Alessandro Cimatti, Edmund~M. Clarke, and Yunshan Zhu.
\newblock Symbolic model checking without {BDDs}.
\newblock In {\em {TACAS}}, volume 1579 of {\em Lecture Notes in Computer
  Science}, pages 193--207. Springer, 1999.

\bibitem{DBLP:conf/ysip/BiereHM17}
Armin Biere, Steffen H{\"{o}}lldobler, and Sibylle M{\"{o}}hle.
\newblock An abstract dual propositional model counter.
\newblock In {\em {YSIP}}, volume 1837 of {\em {CEUR} Workshop Proceedings},
  pages 17--26. CEUR-WS.org, 2017.

\bibitem{DBLP:journals/jair/BirnbaumL99}
Elazar Birnbaum and Eliezer~L. Lozinskii.
\newblock The good old {D}avis-{P}utnam procedure helps counting models.
\newblock {\em J. Artif. Intell. Res.}, 10:457--477, 1999.

\bibitem{DBLP:journals/jar/BlanchetteFLW18}
Jasmin~Christian Blanchette, Mathias Fleury, Peter Lammich, and Christoph
  Weidenbach.
\newblock A verified {SAT} solver framework with learn, forget, restart, and
  incrementality.
\newblock {\em J. Autom. Reason.}, 61(1-4):333--365, 2018.

\bibitem{DBLP:conf/cav/BrauerKK11}
J{\"{o}}rg Brauer, Andy King, and Jael Kriener.
\newblock Existential quantification as incremental {SAT}.
\newblock In {\em {CAV}}, volume 6806 of {\em LNCS}, pages 191--207. Springer,
  2011.

\bibitem{DBLP:journals/aicom/CadoliD97}
Marco Cadoli and Francesco~M. Donini.
\newblock A survey on knowledge compilation.
\newblock {\em {AI} Commun.}, 10(3-4):137--150, 1997.

\bibitem{DBLP:journals/ai/ChaviraD08}
Mark Chavira and Adnan Darwiche.
\newblock On probabilistic inference by weighted model counting.
\newblock {\em Artif. Intell.}, 172(6-7):772--799, 2008.

\bibitem{DBLP:journals/jair/DarwicheM02}
Adnan Darwiche and Pierre Marquis.
\newblock A knowledge compilation map.
\newblock {\em J. Artif. Intell. Res.}, 17:229--264, 2002.

\bibitem{DBLP:journals/cacm/DavisLL62}
Martin Davis, George Logemann, and Donald~W. Loveland.
\newblock A machine program for theorem-proving.
\newblock {\em Commun. {ACM}}, 5(7):394--397, 1962.

\bibitem{DBLP:journals/jacm/DavisP60}
Martin Davis and Hilary Putnam.
\newblock A computing procedure for quantification theory.
\newblock {\em J. {ACM}}, 7(3):201--215, 1960.

\bibitem{DBLP:conf/cp/DudekPV20}
Jeffrey~M. Dudek, Vu~H.~N. Phan, and Moshe~Y. Vardi.
\newblock {DPMC:} weighted model counting by dynamic programming on
  project-join trees.
\newblock In {\em {CP}}, volume 12333 of {\em Lecture Notes in Computer
  Science}, pages 211--230. Springer, 2020.

\bibitem{DBLP:journals/entcs/EenS03}
Niklas E{\'{e}}n and Niklas S{\"{o}}rensson.
\newblock Temporal induction by incremental {SAT} solving.
\newblock {\em Electron. Notes Theor. Comput. Sci.}, 89(4):543--560, 2003.

\bibitem{DBLP:conf/esa/FichteHWZ18}
Johannes~Klaus Fichte, Markus Hecher, Stefan Woltran, and Markus Zisser.
\newblock Weighted model counting on the {GPU} by exploiting small treewidth.
\newblock In {\em {ESA}}, volume 112 of {\em LIPIcs}, pages 28:1--28:16.
  Schloss Dagstuhl - Leibniz-Zentrum f{\"{u}}r Informatik, 2018.

\bibitem{FleuryMaster2015}
Mathias Fleury.
\newblock Formalisation of ground inference systems in a proof assistant.
\newblock Master's thesis, \'{E}cole normale sup\'{e}rieure de Rennes, 2015.

\bibitem{DBLP:conf/sat/FriedNSS24}
Dror Fried, Alexander Nadel, Roberto Sebastiani, and Yogev Shalmon.
\newblock Entailing generalization boosts enumeration.
\newblock In {\em {SAT}}, volume 305 of {\em LIPIcs}, pages 13:1--13:14.
  Schloss Dagstuhl - Leibniz-Zentrum f{\"{u}}r Informatik, 2024.

\bibitem{DBLP:conf/cpaior/GebserKS09}
Martin Gebser, Benjamin Kaufmann, and Torsten Schaub.
\newblock Solution enumeration for projected {B}oolean search problems.
\newblock In {\em {CPAIOR}}, volume 5547 of {\em Lecture Notes in Computer
  Science}, pages 71--86. Springer, 2009.

\bibitem{DBLP:conf/fmcad/GrumbergSY04}
Orna Grumberg, Assaf Schuster, and Avi Yadgar.
\newblock Memory efficient all-solutions {SAT} solver and its application for
  reachability analysis.
\newblock In {\em {FMCAD}}, volume 3312 of {\em LNCS}, pages 275--289.
  Springer, 2004.

\bibitem{DBLP:conf/fmcad/GuptaYAG00}
Aarti Gupta, Zijiang Yang, Pranav Ashar, and Anubhav Gupta.
\newblock {SAT}-based image computation with application in reachability
  analysis.
\newblock In {\em {FMCAD}}, volume 1954 of {\em LNCS}, pages 354--371.
  Springer, 2000.

\bibitem{DBLP:journals/jlp/Hooker93}
John~N. Hooker.
\newblock Solving the incremental satisfiability problem.
\newblock {\em J. Log. Program.}, 15(1{\&}2):177--186, 1993.

\bibitem{DBLP:journals/jair/HuangD07}
Jinbo Huang and Adnan Darwiche.
\newblock The language of search.
\newblock {\em J. Artif. Intell. Res.}, 29:191--219, 2007.

\bibitem{DBLP:conf/sat/IserST13}
Markus Iser, Carsten Sinz, and Mana Taghdiri.
\newblock Minimizing models for {Tseitin}-encoded {SAT} instances.
\newblock In {\em {SAT}}, volume 7962 of {\em Lecture Notes in Computer
  Science}, pages 224--232. Springer, 2013.

\bibitem{DBLP:conf/tacas/JinHS05}
HoonSang Jin, HyoJung Han, and Fabio Somenzi.
\newblock Efficient conflict analysis for finding all satisfying assignments of
  a {B}oolean circuit.
\newblock In {\em {TACAS}}, volume 3440 of {\em Lecture Notes in Computer
  Science}, pages 287--300. Springer, 2005.

\bibitem{DBLP:conf/ijcai/LagniezM17}
Jean{-}Marie Lagniez and Pierre Marquis.
\newblock An improved {Decision-DNNF} compiler.
\newblock In {\em {IJCAI}}, pages 667--673. ijcai.org, 2017.

\bibitem{DBLP:conf/cav/LahiriNO06}
Shuvendu~K. Lahiri, Robert Nieuwenhuis, and Albert Oliveras.
\newblock {SMT} techniques for fast predicate abstraction.
\newblock In {\em {CAV}}, volume 4144 of {\em LNCS}, pages 424--437. Springer,
  2006.

\bibitem{DBLP:conf/date/LiHS04}
Bin Li, Michael~S. Hsiao, and Shuo Sheng.
\newblock A novel {SAT} all-solutions solver for efficient preimage
  computation.
\newblock In {\em {DATE}}, pages 272--279. {IEEE} Computer Society, 2004.

\bibitem{DBLP:journals/corr/abs-1108-4368}
Filip Mari{\'c} and Predrag Jani{\v c}i{\'c}.
\newblock Formalization of abstract state transition systems for {SAT}.
\newblock {\em Log. Methods Comput. Sci.}, 7(3), 2011.

\bibitem{DBLP:journals/tc/Marques-SilvaS99}
Jo{\~{a}}o~P. {Marques-Silva} and Karem~A. Sakallah.
\newblock {GRASP:} {A} search algorithm for propositional satisfiability.
\newblock {\em {IEEE} Trans. Computers}, 48(5):506--521, 1999.

\bibitem{DBLP:conf/cav/McMillan02}
Kenneth~L. McMillan.
\newblock Applying {SAT} methods in unbounded symbolic model checking.
\newblock In {\em {CAV}}, volume 2404 of {\em LNCS}, pages 250--264. Springer,
  2002.

\bibitem{DBLP:conf/cav/McMillan03}
Kenneth~L. McMillan.
\newblock Interpolation and {SAT}-based model checking.
\newblock In {\em {CAV}}, volume 2725 of {\em Lecture Notes in Computer
  Science}, pages 1--13. Springer, 2003.

\bibitem{DBLP:conf/mfcs/MiltersenRW03}
Peter~Bro Miltersen, Jaikumar Radhakrishnan, and Ingo Wegener.
\newblock On converting {CNF} to {DNF}.
\newblock In {\em {MFCS}}, volume 2747 of {\em Lecture Notes in Computer
  Science}, pages 612--621. Springer, 2003.

\bibitem{DBLP:journals/tcs/MiltersenRW05}
Peter~Bro Miltersen, Jaikumar Radhakrishnan, and Ingo Wegener.
\newblock On converting {CNF} to {DNF}.
\newblock {\em Theor. Comput. Sci.}, 347(1-2):325--335, 2005.

\bibitem{Minato1993Generation}
{Shin-ichi} Minato.
\newblock Fast generation of prime-irredundant covers from binary decision
  diagrams.
\newblock {\em IEICE Trans. Fundamentals}, E76-A(6):967--973, 1993.

\bibitem{DBLP:conf/iccad/MinatoM98}
{Shin-ichi} Minato and Giovanni {De Micheli}.
\newblock Finding all simple disjunctive decompositions using irredundant
  sum-of-products forms.
\newblock In {\em {ICCAD}}, pages 111--117. {ACM} / {IEEE} Computer Society,
  1998.

\bibitem{DBLP:conf/ictai/MohleB18}
Sibylle M{\"{o}}hle and Armin Biere.
\newblock Dualizing projected model counting.
\newblock In {\em {ICTAI}}, pages 702--709. {IEEE}, 2018.

\bibitem{DBLP:conf/sat/MohleB19}
Sibylle M{\"{o}}hle and Armin Biere.
\newblock Backing backtracking.
\newblock In {\em {SAT}}, volume 11628 of {\em Lecture Notes in Computer
  Science}, pages 250--266. Springer, 2019.

\bibitem{DBLP:conf/gcai/MohleB19}
Sibylle M{\"{o}}hle and Armin Biere.
\newblock Combining conflict-driven clause learning and chronological
  backtracking for propositional model counting.
\newblock In {\em {GCAI}}, volume~65 of {\em EPiC Series in Computing}, pages
  113--126. EasyChair, 2019.

\bibitem{DBLP:conf/sat/MohleSB20}
Sibylle M{\"{o}}hle, Roberto Sebastiani, and Armin Biere.
\newblock Four flavors of entailment.
\newblock In {\em {SAT}}, volume 12178 of {\em Lecture Notes in Computer
  Science}, pages 62--71. Springer, 2020.

\bibitem{DBLP:conf/ijcai/MorettinPS17}
Paolo Morettin, Andrea Passerini, and Roberto Sebastiani.
\newblock Efficient weighted model integration via {SMT}-based predicate
  abstraction.
\newblock In {\em {IJCAI}}, pages 720--728. ijcai.org, 2017.

\bibitem{DBLP:journals/ai/MorettinPS19}
Paolo Morettin, Andrea Passerini, and Roberto Sebastiani.
\newblock Advanced {SMT} techniques for weighted model integration.
\newblock {\em Artif. Intell.}, 275:1--27, 2019.

\bibitem{DBLP:conf/ictai/MorgadoS05}
Ant{\'{o}}nio Morgado and Jo{\~{a}}o P.~Marques Silva.
\newblock Good learning and implicit model enumeration.
\newblock In {\em {ICTAI}}, pages 131--136. {IEEE} Computer Society, 2005.

\bibitem{DBLP:conf/dac/MoskewiczMZZM01}
Matthew~W. Moskewicz, Conor~F. Madigan, Ying Zhao, Lintao Zhang, and Sharad
  Malik.
\newblock Chaff: Engineering an efficient {SAT} solver.
\newblock In {\em {DAC}}, pages 530--535. {ACM}, 2001.

\bibitem{DBLP:conf/ai/MuiseMBH12}
Christian~J. Muise, Sheila~A. McIlraith, J.~Christopher Beck, and Eric~I. Hsu.
\newblock Dsharp: Fast {d-DNNF} compilation with {sharpSAT}.
\newblock In {\em Canadian Conference on {AI}}, volume 7310 of {\em Lecture
  Notes in Computer Science}, pages 356--361. Springer, 2012.

\bibitem{DBLP:conf/sat/NadelR18}
Alexander Nadel and Vadim Ryvchin.
\newblock Chronological backtracking.
\newblock In {\em {SAT}}, volume 10929 of {\em LNCS}, pages 111--121. Springer,
  2018.

\bibitem{DBLP:conf/fmcad/NiemetzPB14}
Aina Niemetz, Mathias Preiner, and Armin Biere.
\newblock Turbo-charging lemmas on demand with don't care reasoning.
\newblock In {\em {FMCAD}}, pages 179--186. {IEEE}, 2014.

\bibitem{DBLP:journals/jacm/NieuwenhuisOT06}
Robert Nieuwenhuis, Albert Oliveras, and Cesare Tinelli.
\newblock Solving {SAT} and {SAT} modulo theories: From an abstract
  {D}avis--{P}utnam--{L}ogemann--{L}oveland procedure to dpll(\emph{T}).
\newblock {\em J. {ACM}}, 53(6):937--977, 2006.

\bibitem{DBLP:conf/aips/PalaciosBDG05}
H{\'{e}}ctor Palacios, Blai Bonet, Adnan Darwiche, and Hector Geffner.
\newblock Pruning conformant plans by counting models on compiled {d-DNNF}
  representations.
\newblock In {\em {ICAPS}}, pages 141--150. {AAAI}, 2005.

\bibitem{DBLP:journals/jsc/PlaistedG86}
David~A. Plaisted and Steven Greenbaum.
\newblock A structure-preserving clause form translation.
\newblock {\em J. Symb. Comput.}, 2(3):293--304, 1986.

\bibitem{DBLP:conf/tacas/RaviS04}
Kavita Ravi and Fabio Somenzi.
\newblock Minimal assignments for bounded model checking.
\newblock In {\em {TACAS}}, volume 2988 of {\em LNCS}, pages 31--45. Springer,
  2004.

\bibitem{DBLP:journals/jacm/Robinson65}
John~Alan Robinson.
\newblock A machine-oriented logic based on the resolution principle.
\newblock {\em J. {ACM}}, 12(1):23--41, 1965.

\bibitem{DBLP:conf/aaai/SangBK05}
Tian Sang, Paul Beame, and Henry~A. Kautz.
\newblock Performing {B}ayesian inference by weighted model counting.
\newblock In {\em {AAAI}}, pages 475--482. {AAAI} Press / The {MIT} Press,
  2005.

\bibitem{DBLP:journals/jsat/Sebastiani07}
Roberto Sebastiani.
\newblock Lazy satisfiability modulo theories.
\newblock {\em J. Satisf. Boolean Model. Comput.}, 3(3-4):141--224, 2007.

\bibitem{SebastianiPartial}
Roberto Sebastiani.
\newblock Are you satisfied by this partial assignment?
\newblock \url{https://arxiv.org/abs/2003.04225}, February 2020.

\bibitem{DBLP:conf/date/ShengH03}
Shuo Sheng and Michael~S. Hsiao.
\newblock Efficient preimage computation using a novel success-driven {ATPG}.
\newblock In {\em {DATE}}, pages 10822--10827. {IEEE} Computer Society, 2003.

\bibitem{DBLP:conf/cav/Shtrichman00}
Ofer Shtrichman.
\newblock Tuning {SAT} checkers for bounded model checking.
\newblock In {\em {CAV}}, volume 1855 of {\em LNCS}, pages 480--494. Springer,
  2000.

\bibitem{DBLP:conf/charme/Shtrichman01}
Ofer Shtrichman.
\newblock Pruning techniques for the {SAT}-based bounded model checking
  problem.
\newblock In {\em {CHARME}}, volume 2144 of {\em LNCS}, pages 58--70. Springer,
  2001.

\bibitem{DBLP:conf/iccad/SilvaS96}
Jo{\~{a}}o P.~Marques Silva and Karem~A. Sakallah.
\newblock {GRASP} - a new search algorithm for satisfiability.
\newblock In {\em {ICCAD}}, pages 220--227. {IEEE} Computer Society / {ACM},
  1996.

\bibitem{DBLP:conf/aaai/SpallittaSB24}
Giuseppe Spallitta, Roberto Sebastiani, and Armin Biere.
\newblock Disjoint partial enumeration without blocking clauses.
\newblock In {\em {AAAI}}, pages 8126--8135. {AAAI} Press, 2024.

\bibitem{DBLP:conf/icfem/SullivanMK19}
Allison Sullivan, Darko Marinov, and Sarfraz Khurshid.
\newblock Solution enumeration abstraction: {A} modeling idiom to enhance a
  lightweight formal method.
\newblock In {\em {ICFEM}}, volume 11852 of {\em Lecture Notes in Computer
  Science}, pages 336--352. Springer, 2019.

\bibitem{DBLP:journals/jea/TodaS16}
Takahisa Toda and Takehide Soh.
\newblock Implementing efficient all solutions {SAT} solvers.
\newblock {\em {ACM} Journal of Experimental Algorithmics},
  21(1):1.12:1--1.12:44, 2016.

\bibitem{Tseitin1968complexity}
Grigori Tseitin.
\newblock On the complexity of derivation in propositional calculus.
\newblock {\em Studies in Constructive Mathematics and Mathematical Logic},
  pages 115--125, 1968.

\bibitem{DBLP:journals/jsat/TakRH11}
Peter van~der Tak, Antonio Ramos, and Marijn Heule.
\newblock Reusing the assignment trail in {CDCL} solvers.
\newblock {\em J. Satisf. Boolean Model. Comput.}, 7(4):133--138, 2011.

\bibitem{DBLP:books/teu/Wegener87}
Ingo Wegener.
\newblock {\em The complexity of {Boolean} functions}.
\newblock Wiley-Teubner, 1987.

\bibitem{DBLP:conf/fmics/ZenglerK13}
Christoph Zengler and Wolfgang K{\"{u}}chlin.
\newblock Boolean quantifier elimination for automotive configuration - {A}
  case study.
\newblock In {\em {FMICS}}, volume 8187 of {\em Lecture Notes in Computer
  Science}, pages 48--62. Springer, 2013.

\end{thebibliography}

\end{document}